\title{Local Stochastic Algorithms for Alignment in \\ Self-Organizing Particle Systems} %TODO Please add
\author{Hridesh Kedia}{Georgia Institute of Technology, Atlanta, USA}{hrideshkedia@gmail.com}{}{Supported by  ARO MURI award W911NF-19-1-0233.}
\author{Shunhao Oh}{Georgia Institute of Technology, Atlanta, USA}{ohoh@gatech.edu}{}{Supported by NSF award CCF-1733812 and ARO MURI award W911NF-19-1-0233.}
\author{Dana Randall}{Georgia Institute of Technology, Atlanta, USA}{randall@cc.gatech.edu}{https://orcid.org/0000-0002-1152-2627}{Supported by NSF awards CCF-1733812 and CCF-2106687 and ARO MURI award W911NF-19-1-0233.}
\authorrunning{H. Kedia, S. Oh and D. Randall}
\keywords{Self-organizing particle systems, alignment, Markov chains, active matter.} %TODO mandatory; please add comma-separated list of keywords
\DeclarePairedDelimiter{\abs}{\lvert}{\rvert}
\DeclarePairedDelimiter{\parens}{(}{)}
\DeclarePairedDelimiter{\floor}{\lfloor}{\rfloor}
\DeclarePairedDelimiter{\ceil}{\lceil}{\rceil}
\newtheorem*{theorem*}{Theorem}
\newtheorem*{lemma*}{Lemma}
\newcommand{\Mod}[1]{\ (\mathrm{mod}\ #1)}
\newcommand{\markred}[1]{{#1}}
\begin{document}

\maketitle

\begin{abstract}
We present local distributed, stochastic algorithms for \emph{alignment} in self-organizing particle systems (SOPS) on two-dimensional lattices, where particles occupy unique sites on the  lattice, and particles can make spatial moves to neighboring sites if they are unoccupied. Such models are abstractions of programmable matter, composed of individual computational particles with limited memory, strictly local communication abilities, and modest computational capabilities. We consider oriented particle systems, where particles are assigned a vector pointing in one of $q$ directions, and each particle can compute the angle between its direction and the direction of any neighboring particle, although without knowledge of global orientation with respect to a fixed underlying coordinate system. Particles move stochastically, with each particle able to either modify its direction or make a local spatial move along a lattice edge during a move. We consider two settings: (a) where particle configurations must remain simply connected at all times and (b) where spatial moves are unconstrained and configurations can disconnect.

Our algorithms are inspired by the \emph{Potts model} and its planar oriented variant known as the \emph{planar Potts model} or \emph{clock model} from statistical physics.
%, using an approach first used for the problems of compression in homogeneous collectives where particles are made to gather tightly together.
%by Cannon et al. (PODC '16) 
%and separation in heterogenous collectives with particles that are assigned colors.
%by Cannon et al. (RANDOM '19). 
We prove that for any $q \geq 2,$ by adjusting a single parameter, these self-organizing particle systems can be made to collectively align along a single dominant direction (analogous to a solid or ordered state) or remain non-aligned, in which case the fraction of particles oriented along any direction is nearly equal (analogous to a gaseous or disordered state). In the connected SOPS setting, we allow for two distinct parameters, one controlling the ferromagnetic attraction between neighboring particles (regardless of orientation) and the other controlling the preference of neighboring particles to align. We show that with appropriate settings of the input parameters, we can achieve \emph{compression} and \emph{expansion}, controlling how tightly gathered the particles are, as well as \emph{alignment} or \emph{nonalignment}, producing a single dominant orientation or not. %In the general SOPS setting where particles can disconnect, we allow for a single parameter that controls both the attraction between neighboring particles and the preference for neighbors to align orientations, showing that with various settings of this parameter, the algorithm produces configurations that are compressed and aligned or that are dispersed and are unaligned.
While alignment is known for the Potts and clock models at sufficiently low temperatures,  our proof in the SOPS setting are significantly more challenging because the particles make spatial moves, not all sites are occupied, and the total number of particles is fixed. 
\end{abstract}

\newcommand{\OmegaAggr}[1]{\widetilde{\Omega}^{#1}}
\newcommand{\OmegaAggrRho}{\widetilde{\Omega}^{\rho N}}
\newcommand{\TorusLattice}{{G}_{\Delta}}
\newcommand{\TorusDual}{{G}_{\varhexagon}}
\newcommand{\wFinite}{\tilde{w}_{\mathrm{Potts}}}
\newcommand{\piFinite}{\tilde{\pi}_{\mathrm{Potts}}}
\newcommand{\ZFinite}{\tilde{Z}_{\mathrm{Potts}}}
\newcommand{\Ewrap}{E_{\mathrm{wrap}}}
\usetikzlibrary{patterns}

\section{Introduction}
Autonomous, locally interacting agents can collectively organize to accomplish a variety of complex tasks such as foraging for food, building large-scale structures, and transporting objects many times heavier than their weight, as is routinely observed in the living world, in swarms of ants, flocks of birds, and schools of fish \cite{reynolds_flocks_1987, villa_swarming_2004, vicsek_collective_2012,qin_food_2019}. A key component of these diverse self-organized behaviors is achieving consensus in large collectives of autonomous agents with only local interactions. The problem of achieving alignment in collectives of directed agents is an important example of such a consensus problem, and is a fundamental aspect of \emph{flocking}: large scale collective motion in swarms of motile agents \cite{reynolds_flocks_1987, vicsek_novel_1995, olfati-saber_flocking_2006, zhang_general_2011, vicsek_collective_2012, beaver_overview_2021}. While flocking has been studied extensively \cite{ jadbabaie_coordination_2003, olfati-saber_flocking_2006, tanner_flocking_2007,beaver_overview_2021} with few rigorous results, the more basic problem of alignment has received considerably less attention.  %being one of the three rules for   connections to a wide variety of problems such as distributed consensus \cite{} is important for \emph{flocking}: large scale collective motion in swarms of motile agents \cite{reynolds_flocks_1987, vicsek_novel_1995, olfati-saber_flocking_2006, zhang_general_2011, vicsek_collective_2012, beaver_overview_2021}.  %Algorithms that can reliably achieve alignment could enable coordinated directed motion of the collective \cite{reynolds_flocks_1987, balagam_mechanism_2015}, where the direction of motion could be controlled by the environmental conditions. 

Here, we study {alignment} in  self-organizing particle systems (SOPS)---a collection of simple, active computational particles that individually execute local distributed algorithms. We consider {\it oriented particle systems} on a two-dimensional lattice, where particles are oriented in one of $q$ directions (with no global compass), for $q \geq 2$, and at most one particle occupies each lattice site. Particles perform moves independently and concurrently by making spatial moves to neighboring empty sites or reorient themselves in new directions with the goal of reaching nearly global alignment.

We consider a stochastic approach, used previously  in \cite{cannon_markov_2016-1, cannon_markov_2016} to achieve {\it compression}, where connected sets of homogeneous  particles self-organize to gather together tightly,  {\it separation} in heterogeneous particle systems, where all of the particles compress, but also gather most tightly with other particles of the same type \cite{cannon_local_2019-1, cannon_local_2019}, and {\it aggregation} of homogeneous particles that are not required to be connected, where most particles accumulate in a small, compact neighborhood \cite{li_programming_2021}. In all of these, phase changes were used to characterize desirable behaviors at stationarity, with high probability.  Following a similar approach, we begin by defining an energy function that assigns the highest weight (or lowest energy) to preferable configurations, and design a Markov chain whose long term behavior favors these low energy configurations using transition probabilities given by the Metropolis-Hastings algorithm~\cite{metropolis_equation_1953,hastings_monte_1970}. We ensure that the transition probabilities of the Markov chain can be computed locally and asynchronously, allowing them to be easily translated to a fully local, distributed algorithm that each particle can run independently. The collective behavior of this distributed algorithm is thus described by the long term behavior of the Markov chain.
%We note that the compression algorithm was generalized subsequently to particle systems that are not required to stay connected, providing the theoretical basis for experimental studies in swarm robotics \cite{li_programming_2021}. 

\subsection{Related work}
The alignment problems we study can be viewed as finite, unsaturated variants of the ferromagnetic {\it Potts model} from statistical physics \cite{wu_potts_1982}, and a related model known as the {\it clock} or {\it planar Potts model}~\cite{wu_potts_1982,ortiz_dualities_2012}.  In the Potts model, vertices of a graph $G$ are assigned one of $q$ possible ``spins,'' represented here as orientations, and neighboring sites prefer to agree. Let $J>0$ be a parameter related to inverse temperature and let $\delta(X,Y) = 1$ if $X=Y$ and 0 otherwise.  Then the probability of a standard Potts configuration $\sigma$ is given as 
\begin{align*}
\pi(\sigma)=exp \Big(-J \sum_{x\sim y} \delta(\sigma(x),\sigma(y)) \Big)/Z,
\end{align*}
 where the sum is taken over all nearest neighbors in $G$ and $Z$ is the normalizing constant or partition function.  
 %In the {\emph unsaturated} variant introduced here, there are empty, unoccupied sites, and the generalized Potts model introduces an second parameter $\kappa$ to additionally bias the distribution according to how clustered the particles are (see section~\ref{Sec:prelim}).
In the unsaturated setting studied here, spins are identified with particles, not sites, and particles can make spatial moves to unoccupied sites in addition to updating their spins. 
%In the standard Potts model, each vertex of a graph is assigned a spin that can be in any one of $q$ possible directions, and the probability of a configuration is determined by the number of pairs of adjacent spins which have different directions, with lower numbers of unaligned adjacent spins being favored. In the clock model \cite{ortiz_dualities_2012}, the $q$ directions are equally spaced around a circle and the probability of a configuration is determined by the angles between the directions of adjacent spins, with smaller angle differences between adjacent spins being favored. 
We  present alignment algorithms for two natural variants: (a) the connected setting, where particles are constrained to be simply connected in the lattice, and (b) the general setting,  where particles occupy any distinct lattice sites regardless of connectivity.

%While the Potts and the clock models have been extensively studied in statistical physics and computer science, our setting is made considerably more challenging by: (i) particles (spins) being allowed to make spatial moves and (ii) the total number of particles being fixed.
Recent work on a closely related {\it site-diluted} Potts model \cite{wu_potts_1982,chayes_aggregation_1995}   also allows a non-zero fraction of lattice sites to be unoccupied, but the number of particles is not fixed, so particles can appear and disappear, in addition to making spatial moves. Chayes et al.~\cite{chayes_aggregation_1995} beautifully demonstrate the presence of ordered (aligned and occupied) and disordered (non-aligned and vacant) phases, along with novel ``staggered'' phases in this model. However, our constraint fixing the number of particles, which is necessary in SOPS models in programmable matter,  makes our system fundamentally different from the site-diluted Potts model akin to the difference between the fixed magnetization Ising model, which has a fixed number of + spins, and the Ising model in the presence of a magnetic field, where the number of + spins can vary. Notably, the coexistence of phases that characterize the aligned and compressed behaviors we are seeking will not occur unless we fix the magnetization (or numbers of particles) as these configurations are exponentially unlikely in the site-diluted model and thus do not inherit any of its properties.
  
%Our work differs from the existing literature on the standard Potts and the clock models because the occupied part of the graph can change as the particles (spins) move on the lattice and the configuration of particles, though restricted by the algorithms to be connected and hole-free, could assume arbitrary simply connected shapes, making our proofs significantly more challenging. 

Since  particles  can make spatial moves, the  boundary between the particle occupied sites and the unoccupied sites can assume arbitrary shapes, which makes achieving alignment more challenging than achieving compression. Consider the configurations shown in Figure~1(a),(b), where the particles can be oriented along one of two possible directions ($q=2$) shown by black  and grey circles,  with a total of $n$ particles.
%The difficulty posed by the arbitrary shape of the particle configuration is elucidated by considering the particle configurations shown in 
While the number of unaligned pairs of adjacent particles is $O(\sqrt{n})$ for the configuration in Figure 1(a), it can be as low as $O(1)$ for the configuration shown in Figure 1(b), owing to the bottleneck shaped part of the configuration boundary, making it likely that the regions on either side of it will be aligned along different directions. Hence, achieving alignment requires suppressing the likelihood of such bottlenecks in the boundary of the particle configuration.

While the concept of an interfacial free energy can be used to constrain the shape of the boundary of a dilute system of homogeneous particles i.e., when $q=1$, as in \cite{minlos_phenomenon_1967, ioffe_dobrushinkoteckyshlosman_1998,bodineau_rigorous_2000,pfister_interface_2009}, because particle occupied sites and vacant sites are akin to distinct coexisting phases of the system. However, the same ideas do not readily generalize to the case when $q\geq 2$. Instead, we show build on the notion of compression introduced in \cite{cannon_markov_2016-1, cannon_markov_2016}, and use isoperimetric inequalities to show that for sufficiently compressed configurations, bottlenecks such as the one shown in Figure 1(b) are precluded with high probability. % where a configuration is $\alpha$-compressed if its perimeter is at most $\alpha$ times the minimum perimeter where $\alpha >1$, along with isoperimetric inequalities to show that for $\alpha$-compressed configurations with $\alpha$ below a critical value, bottlenecks separating the particle configuration into large volume regions are precluded with high probability.

%\vskip.1in
%\noindent \underbar{\bf Results:} \ 
%\subparagraph*{Results:}

\subsection{Results}
We present the first rigorous local distributed algorithms for achieving both low perimeter boundaries and alignment, for any number of orientations $q\geq 2$, in both connected and general settings. Informally, we say a particle system is {\it aligned} if a significant percentage of the particles have the same orientation.

In the connected SOPS setting, we define an energy function that encourages compression of the entire configuration and also defines a ferromagnetic interaction between particles' orientations, inspired by the clock and Potts models. These two contributions are controlled by two independent parameters $\lambda$ and $\gamma$.  In this setting, we show that given any $\alpha > 1$, for any $\lambda > 1$ and $\gamma > 29.3(q-1)$ such that $\lambda\gamma > 7^{\alpha/(\alpha-1)}$, the algorithms achieve $\alpha$-compression with high probability. Furthermore, when $\gamma$ satisfies additional constraints given in Theorem~\ref{thm:con_align_thm}, we show that the compressed configurations are very likely to be aligned. Next, we show that setting $\lambda$ large and  $\gamma$ small will generate compressed configurations with an equitable balance of orientations (Theorem~\ref{thm:conn_non_alignment}), while setting $\lambda$ small will generate configurations that are {\it expanded}, nearly maximizing their perimeters, allowing the SOPS to explore space, potentially to forage for resources, for example (see Theorem~\ref{thm:conn_expansion}).

%Our algorithms for connected SOPS models are inspired by the Potts and clock models and have two distinct parameters: one controlling the attraction between neighboring particles (regardless of orientation) and the other the ferromagnetic interaction between particles of the same orientation.  , and can also achieve the opposite behaviors of expansion, where the perimeter is nearly maximal, and non-alignment, where the fraction of particles oriented along each direction is nearly equal, by modifying these two parameters. 

%\vskip.1in
%\noindent \underbar{\bf Techniques:} \ 
%\subparagraph*{Techniques:}
 For both the Potts and clock models in the connected setting, the proofs rely on the {\it cluster expansion} \cite{mayer_statistical_1937,friedli_cluster_2017,kotecky_cluster_1986} from statistical physics, 
 introducing a new so-called {\it polymer model} inspired by the relationship between flows and the Potts model \cite{essam_potts_1986}. Informally, the cluster expansion allows us to obtain upper and lower bounds on the so-called ``polymer partition function'' in terms of the volume and surface contributions, as in \cite{cannon_local_2019-1, cannon_local_2019, friedli_cluster_2017}, to prove that our algorithms achieve compression (or aggregation), with high probability. Moreover, using isoperimetric inequalities, we prove the absence of bottlenecks in sufficiently highly compressed configurations, which is necessary to get the system to globally align. Finally, we use the {\it bridging techniques} first proposed in \cite{miracle_clustering_2011} and later adapted in \cite{cannon_local_2019-1, cannon_local_2019}, to expand the information theoretic arguments in \cite{cannon_local_2019-1, cannon_local_2019} to prove that for sufficiently compressed configurations, our algorithms achieve alignment with high probability. Conversely, we show that our algorithms can achieve {\it expansion} and/or {\it non-alignment} (with all directions nearly equitably balanced), with the same algorithm by adjusting only two global parameters.

 In the general SOPS setting, with no connectivity constraints, we present an algorithm based on a single parameter coupling both compression and ferromagnetism simultaneously.  When this parameter is sufficiently large, we achieve aggregation and alignment (Theorem~\ref{theorem:aggregationwithalignment}), and when it is small we achieve dispersion and a balance among the orientations (Theorems~\ref{thm:gen_non_alignment}~and~\ref{thm:gen_dispersion}). We believe these parameters can be independently controlled in the general (disconnected) setting, but the proofs seemingly become significantly more challenging and coupling them into one parameter seems sufficient for most applications in programmable matter and swarm robotics.
 Because configurations tend to be highly disconnected,
proofs in the  general setting require additional technology to account for many small clusters that can be distributed throughout the lattice. Here we generalize the bridging techniques to account for more complex contours that form an interconnected network to show that the contour lengths of the bridging system can be made arbitrarily close to their minimum possible length and, as a result, alignment occurs with high probability.
%The main contributions of this work are as follows. First, we introduce a novel polymer model inspired by the relationship between flows and the Potts model \cite{essam_potts_1986}, that we adapt to the Potts and clock models in presence of the connectivity constraints. Next, we generalize the bridging techniques developed in \cite{miracle_clustering_2011} to more complex contours which form an interconnected network. We use Peierls arguments to show that the total length of the contours of the bridging system can be made arbitrarily close to the minimum possible length with high probability. Furthermore, we constrain the shapes of configurations with sufficiently low perimeter using isoperimetric inequalities, showing the absence of bottlenecks, and consequently proving alignment in both the absence and presence of connectivity constraints. 
We note that our algorithms for alignment in both settings work for all $q\geq2$; separation (where the sizes of the color classes are fixed) has only been shown for $q=2$, although the methods should also generalize to more colors \cite{cannon_local_2019}.

%We achieve greater generality than previous stochastic algorithms \cite{cannon_markov_2016, cannon_markov_2016-1, cannon_local_2019, cannon_local_2019-1} on self-organizing particle systems, as our results hold for all $q\geq 2$ and separation was only shown for $q=2$ in \cite{cannon_local_2019, cannon_local_2019-1}. It is also worth noting that because our algorithms allow distinct emergent behaviors that depend on only two parameters in the presence of connectivity constraints and one parameter in the absence of connectivity constraints, it is reasonable to imagine that these parameters can be globally set by environmental triggers, causing the system to spontaneously react to threats or new circumstances.

\begin{figure}
\begin{subfigure}[b]{.45\textwidth}
\begin{center}\begin{tikzpicture}[x=0.4cm,y=0.4cm]
\draw[lightgray] (0,-8) -- (1.73205,-9);
\draw[lightgray] (0,-1) -- (1.73205,0);
\draw[lightgray] (0,-7) -- (3.4641,-9);
\draw[lightgray] (0,-2) -- (3.4641,0);
\draw[lightgray] (0,-6) -- (5.19615,-9);
\draw[lightgray] (0,-3) -- (5.19615,0);
\draw[lightgray] (0,-5) -- (6.9282,-9);
\draw[lightgray] (0,-4) -- (6.9282,0);
\draw[lightgray] (0,-4) -- (8.66025,-9);
\draw[lightgray] (0,-5) -- (8.66025,0);
\draw[lightgray] (0,-3) -- (10.3923,-9);
\draw[lightgray] (0,-6) -- (10.3923,0);
\draw[lightgray] (0,-2) -- (12.1244,-9);
\draw[lightgray] (0,-7) -- (12.1244,0);
\draw[lightgray] (0,-1) -- (13.8564,-9);
\draw[lightgray] (0,-8) -- (13.8564,0);
\draw[lightgray] (0,0) -- (15.5885,-9);
\draw[lightgray] (0,-9) -- (15.5885,0);
\draw[lightgray] (1.73205,0) -- (15.5885,-8);
\draw[lightgray] (1.73205,-9) -- (15.5885,-1);
\draw[lightgray] (3.4641,0) -- (15.5885,-7);
\draw[lightgray] (3.4641,-9) -- (15.5885,-2);
\draw[lightgray] (5.19615,0) -- (15.5885,-6);
\draw[lightgray] (5.19615,-9) -- (15.5885,-3);
\draw[lightgray] (6.9282,0) -- (15.5885,-5);
\draw[lightgray] (6.9282,-9) -- (15.5885,-4);
\draw[lightgray] (8.66025,0) -- (15.5885,-4);
\draw[lightgray] (8.66025,-9) -- (15.5885,-5);
\draw[lightgray] (10.3923,0) -- (15.5885,-3);
\draw[lightgray] (10.3923,-9) -- (15.5885,-6);
\draw[lightgray] (12.1244,0) -- (15.5885,-2);
\draw[lightgray] (12.1244,-9) -- (15.5885,-7);
\draw[lightgray] (13.8564,0) -- (15.5885,-1);
\draw[lightgray] (13.8564,-9) -- (15.5885,-8);
\draw[lightgray] (15.5885,0) -- (15.5885,0);
\draw[lightgray] (15.5885,-9) -- (15.5885,-9);
\draw[lightgray] (0,0) -- (0,-9);
\draw[lightgray] (0.866025,-0.5) -- (0.866025,-8.5);
\draw[lightgray] (1.73205,0) -- (1.73205,-9);
\draw[lightgray] (2.59808,-0.5) -- (2.59808,-8.5);
\draw[lightgray] (3.4641,0) -- (3.4641,-9);
\draw[lightgray] (4.33013,-0.5) -- (4.33013,-8.5);
\draw[lightgray] (5.19615,0) -- (5.19615,-9);
\draw[lightgray] (6.06218,-0.5) -- (6.06218,-8.5);
\draw[lightgray] (6.9282,0) -- (6.9282,-9);
\draw[lightgray] (7.79423,-0.5) -- (7.79423,-8.5);
\draw[lightgray] (8.66025,0) -- (8.66025,-9);
\draw[lightgray] (9.52628,-0.5) -- (9.52628,-8.5);
\draw[lightgray] (10.3923,0) -- (10.3923,-9);
\draw[lightgray] (11.2583,-0.5) -- (11.2583,-8.5);
\draw[lightgray] (12.1244,0) -- (12.1244,-9);
\draw[lightgray] (12.9904,-0.5) -- (12.9904,-8.5);
\draw[lightgray] (13.8564,0) -- (13.8564,-9);
\draw[lightgray] (14.7224,-0.5) -- (14.7224,-8.5);
\draw[lightgray] (15.5885,0) -- (15.5885,-9);
\draw[fill=black] (1.73205,-2) circle (0.12);
\draw[fill=black] (1.73205,-3) circle (0.12);
\draw[fill=black] (1.73205,-4) circle (0.12);
\draw[fill=black] (1.73205,-5) circle (0.12);
\draw[fill=black] (2.59808,-1.5) circle (0.12);
\draw[fill=black] (2.59808,-2.5) circle (0.12);
\draw[fill=black] (2.59808,-3.5) circle (0.12);
\draw[fill=black] (2.59808,-4.5) circle (0.12);
\draw[fill=black] (2.59808,-5.5) circle (0.12);
\draw[fill=black] (3.4641,-1) circle (0.12);
\draw[fill=black] (3.4641,-2) circle (0.12);
\draw[fill=black] (3.4641,-3) circle (0.12);
\draw[fill=black] (3.4641,-4) circle (0.12);
\draw[fill=black] (3.4641,-5) circle (0.12);
\draw[fill=black] (3.4641,-6) circle (0.12);
\draw[fill=black] (4.33013,-1.5) circle (0.12);
\draw[fill=black] (4.33013,-2.5) circle (0.12);
\draw[fill=black] (4.33013,-3.5) circle (0.12);
\draw[fill=black] (4.33013,-4.5) circle (0.12);
\draw[fill=black] (4.33013,-5.5) circle (0.12);
\draw[black, line width=0.3mm] (4.33013,-6.5) circle (0.12);
\draw[fill=black] (5.19615,-1) circle (0.12);
\draw[fill=black] (5.19615,-2) circle (0.12);
\draw[fill=black] (5.19615,-3) circle (0.12);
\draw[fill=black] (5.19615,-4) circle (0.12);
\draw[fill=black] (5.19615,-5) circle (0.12);
\draw[black, line width=0.3mm] (5.19615,-6) circle (0.12);
\draw[black, line width=0.3mm] (5.19615,-7) circle (0.12);
\draw[fill=black] (6.06218,-1.5) circle (0.12);
\draw[fill=black] (6.06218,-2.5) circle (0.12);
\draw[fill=black] (6.06218,-3.5) circle (0.12);
\draw[fill=black] (6.06218,-4.5) circle (0.12);
\draw[black, line width=0.3mm] (6.06218,-5.5) circle (0.12);
\draw[black, line width=0.3mm] (6.06218,-6.5) circle (0.12);
\draw[black, line width=0.3mm] (6.06218,-7.5) circle (0.12);
\draw[fill=black] (6.9282,-1) circle (0.12);
\draw[fill=black] (6.9282,-2) circle (0.12);
\draw[fill=black] (6.9282,-3) circle (0.12);
\draw[fill=black] (6.9282,-4) circle (0.12);
\draw[black, line width=0.3mm] (6.9282,-5) circle (0.12);
\draw[black, line width=0.3mm] (6.9282,-6) circle (0.12);
\draw[black, line width=0.3mm] (6.9282,-7) circle (0.12);
\draw[fill=black] (7.79423,-1.5) circle (0.12);
\draw[fill=black] (7.79423,-2.5) circle (0.12);
\draw[fill=black] (7.79423,-3.5) circle (0.12);
\draw[black, line width=0.3mm] (7.79423,-4.5) circle (0.12);
\draw[black, line width=0.3mm] (7.79423,-5.5) circle (0.12);
\draw[black, line width=0.3mm] (7.79423,-6.5) circle (0.12);
\draw[black, line width=0.3mm] (7.79423,-7.5) circle (0.12);
\draw[fill=black] (8.66025,-1) circle (0.12);
\draw[fill=black] (8.66025,-2) circle (0.12);
\draw[fill=black] (8.66025,-3) circle (0.12);
\draw[black, line width=0.3mm] (8.66025,-4) circle (0.12);
\draw[black, line width=0.3mm] (8.66025,-5) circle (0.12);
\draw[black, line width=0.3mm] (8.66025,-6) circle (0.12);
\draw[black, line width=0.3mm] (8.66025,-7) circle (0.12);
\draw[fill=black] (9.52628,-1.5) circle (0.12);
\draw[fill=black] (9.52628,-2.5) circle (0.12);
\draw[black, line width=0.3mm] (9.52628,-3.5) circle (0.12);
\draw[black, line width=0.3mm] (9.52628,-4.5) circle (0.12);
\draw[black, line width=0.3mm] (9.52628,-5.5) circle (0.12);
\draw[black, line width=0.3mm] (9.52628,-6.5) circle (0.12);
\draw[black, line width=0.3mm] (9.52628,-7.5) circle (0.12);
\draw[fill=black] (10.3923,-2) circle (0.12);
\draw[black, line width=0.3mm] (10.3923,-3) circle (0.12);
\draw[black, line width=0.3mm] (10.3923,-4) circle (0.12);
\draw[black, line width=0.3mm] (10.3923,-5) circle (0.12);
\draw[black, line width=0.3mm] (10.3923,-6) circle (0.12);
\draw[black, line width=0.3mm] (10.3923,-7) circle (0.12);
\draw[black, line width=0.3mm] (11.2583,-2.5) circle (0.12);
\draw[black, line width=0.3mm] (11.2583,-3.5) circle (0.12);
\draw[black, line width=0.3mm] (11.2583,-4.5) circle (0.12);
\draw[black, line width=0.3mm] (11.2583,-5.5) circle (0.12);
\draw[black, line width=0.3mm] (11.2583,-6.5) circle (0.12);
\draw[black, line width=0.3mm] (11.2583,-7.5) circle (0.12);
\draw[black, line width=0.3mm] (12.1244,-3) circle (0.12);
\draw[black, line width=0.3mm] (12.1244,-4) circle (0.12);
\draw[black, line width=0.3mm] (12.1244,-5) circle (0.12);
\draw[black, line width=0.3mm] (12.1244,-6) circle (0.12);
\draw[black, line width=0.3mm] (12.1244,-7) circle (0.12);
\draw[black, line width=0.3mm] (12.9904,-3.5) circle (0.12);
\draw[black, line width=0.3mm] (12.9904,-4.5) circle (0.12);
\draw[black, line width=0.3mm] (12.9904,-5.5) circle (0.12);
\draw[black, line width=0.3mm] (12.9904,-6.5) circle (0.12);
\end{tikzpicture}
\end{center}
\label{fig:no_bottleneck}
\caption{A configuration without bottlenecks.}
\end{subfigure}\begin{subfigure}[b]{.5\textwidth}
\begin{center}\begin{tikzpicture}[x=0.4cm,y=0.4cm]
\draw[lightgray] (0,-8) -- (1.73205,-9);
\draw[lightgray] (0,-1) -- (1.73205,0);
\draw[lightgray] (0,-7) -- (3.4641,-9);
\draw[lightgray] (0,-2) -- (3.4641,0);
\draw[lightgray] (0,-6) -- (5.19615,-9);
\draw[lightgray] (0,-3) -- (5.19615,0);
\draw[lightgray] (0,-5) -- (6.9282,-9);
\draw[lightgray] (0,-4) -- (6.9282,0);
\draw[lightgray] (0,-4) -- (8.66025,-9);
\draw[lightgray] (0,-5) -- (8.66025,0);
\draw[lightgray] (0,-3) -- (10.3923,-9);
\draw[lightgray] (0,-6) -- (10.3923,0);
\draw[lightgray] (0,-2) -- (12.1244,-9);
\draw[lightgray] (0,-7) -- (12.1244,0);
\draw[lightgray] (0,-1) -- (13.8564,-9);
\draw[lightgray] (0,-8) -- (13.8564,0);
\draw[lightgray] (0,0) -- (15.5885,-9);
\draw[lightgray] (0,-9) -- (15.5885,0);
\draw[lightgray] (1.73205,0) -- (15.5885,-8);
\draw[lightgray] (1.73205,-9) -- (15.5885,-1);
\draw[lightgray] (3.4641,0) -- (15.5885,-7);
\draw[lightgray] (3.4641,-9) -- (15.5885,-2);
\draw[lightgray] (5.19615,0) -- (15.5885,-6);
\draw[lightgray] (5.19615,-9) -- (15.5885,-3);
\draw[lightgray] (6.9282,0) -- (15.5885,-5);
\draw[lightgray] (6.9282,-9) -- (15.5885,-4);
\draw[lightgray] (8.66025,0) -- (15.5885,-4);
\draw[lightgray] (8.66025,-9) -- (15.5885,-5);
\draw[lightgray] (10.3923,0) -- (15.5885,-3);
\draw[lightgray] (10.3923,-9) -- (15.5885,-6);
\draw[lightgray] (12.1244,0) -- (15.5885,-2);
\draw[lightgray] (12.1244,-9) -- (15.5885,-7);
\draw[lightgray] (13.8564,0) -- (15.5885,-1);
\draw[lightgray] (13.8564,-9) -- (15.5885,-8);
\draw[lightgray] (15.5885,0) -- (15.5885,0);
\draw[lightgray] (15.5885,-9) -- (15.5885,-9);
\draw[lightgray] (0,0) -- (0,-9);
\draw[lightgray] (0.866025,-0.5) -- (0.866025,-8.5);
\draw[lightgray] (1.73205,0) -- (1.73205,-9);
\draw[lightgray] (2.59808,-0.5) -- (2.59808,-8.5);
\draw[lightgray] (3.4641,0) -- (3.4641,-9);
\draw[lightgray] (4.33013,-0.5) -- (4.33013,-8.5);
\draw[lightgray] (5.19615,0) -- (5.19615,-9);
\draw[lightgray] (6.06218,-0.5) -- (6.06218,-8.5);
\draw[lightgray] (6.9282,0) -- (6.9282,-9);
\draw[lightgray] (7.79423,-0.5) -- (7.79423,-8.5);
\draw[lightgray] (8.66025,0) -- (8.66025,-9);
\draw[lightgray] (9.52628,-0.5) -- (9.52628,-8.5);
\draw[lightgray] (10.3923,0) -- (10.3923,-9);
\draw[lightgray] (11.2583,-0.5) -- (11.2583,-8.5);
\draw[lightgray] (12.1244,0) -- (12.1244,-9);
\draw[lightgray] (12.9904,-0.5) -- (12.9904,-8.5);
\draw[lightgray] (13.8564,0) -- (13.8564,-9);
\draw[lightgray] (14.7224,-0.5) -- (14.7224,-8.5);
\draw[lightgray] (15.5885,0) -- (15.5885,-9);
\draw[fill=black] (0.866025,-1.5) circle (0.12);
\draw[fill=black] (0.866025,-2.5) circle (0.12);
\draw[fill=black] (0.866025,-3.5) circle (0.12);
\draw[fill=black] (0.866025,-4.5) circle (0.12);
\draw[fill=black] (1.73205,-1) circle (0.12);
\draw[fill=black] (1.73205,-2) circle (0.12);
\draw[fill=black] (1.73205,-3) circle (0.12);
\draw[fill=black] (1.73205,-4) circle (0.12);
\draw[fill=black] (1.73205,-5) circle (0.12);
\draw[fill=black] (2.59808,-0.5) circle (0.12);
\draw[fill=black] (2.59808,-1.5) circle (0.12);
\draw[fill=black] (2.59808,-2.5) circle (0.12);
\draw[fill=black] (2.59808,-3.5) circle (0.12);
\draw[fill=black] (2.59808,-4.5) circle (0.12);
\draw[fill=black] (3.4641,-1) circle (0.12);
\draw[fill=black] (3.4641,-2) circle (0.12);
\draw[fill=black] (3.4641,-3) circle (0.12);
\draw[fill=black] (3.4641,-4) circle (0.12);
\draw[fill=black] (3.4641,-5) circle (0.12);
\draw[fill=black] (4.33013,-0.5) circle (0.12);
\draw[fill=black] (4.33013,-1.5) circle (0.12);
\draw[fill=black] (4.33013,-2.5) circle (0.12);
\draw[fill=black] (4.33013,-3.5) circle (0.12);
\draw[fill=black] (4.33013,-4.5) circle (0.12);
\draw[fill=black] (5.19615,-1) circle (0.12);
\draw[fill=black] (5.19615,-2) circle (0.12);
\draw[fill=black] (5.19615,-3) circle (0.12);
\draw[fill=black] (5.19615,-4) circle (0.12);
\draw[fill=black] (6.06218,-0.5) circle (0.12);
\draw[fill=black] (6.06218,-1.5) circle (0.12);
\draw[fill=black] (6.06218,-2.5) circle (0.12);
\draw[fill=black] (6.06218,-3.5) circle (0.12);
\draw[black, line width=0.3mm] (6.06218,-6.5) circle (0.12);
\draw[black, line width=0.3mm] (6.06218,-7.5) circle (0.12);
\draw[fill=black] (6.9282,-1) circle (0.12);
\draw[fill=black] (6.9282,-2) circle (0.12);
\draw[fill=black] (6.9282,-3) circle (0.12);
\draw[fill=black] (6.9282,-4) circle (0.12);
\draw[black, line width=0.3mm] (6.9282,-6) circle (0.12);
\draw[black, line width=0.3mm] (6.9282,-7) circle (0.12);
\draw[black, line width=0.3mm] (6.9282,-8) circle (0.12);
\draw[fill=black] (7.79423,-0.5) circle (0.12);
\draw[fill=black] (7.79423,-1.5) circle (0.12);
\draw[fill=black] (7.79423,-2.5) circle (0.12);
\draw[fill=black] (7.79423,-3.5) circle (0.12);
\draw[black, line width=0.3mm] (7.79423,-4.5) circle (0.12);
\draw[black, line width=0.3mm] (7.79423,-5.5) circle (0.12);
\draw[black, line width=0.3mm] (7.79423,-6.5) circle (0.12);
\draw[black, line width=0.3mm] (7.79423,-7.5) circle (0.12);
\draw[fill=black] (8.66025,-1) circle (0.12);
\draw[fill=black] (8.66025,-2) circle (0.12);
\draw[black, line width=0.3mm] (8.66025,-4) circle (0.12);
\draw[black, line width=0.3mm] (8.66025,-5) circle (0.12);
\draw[black, line width=0.3mm] (8.66025,-6) circle (0.12);
\draw[black, line width=0.3mm] (8.66025,-7) circle (0.12);
\draw[black, line width=0.3mm] (8.66025,-8) circle (0.12);
\draw[black, line width=0.3mm] (9.52628,-4.5) circle (0.12);
\draw[black, line width=0.3mm] (9.52628,-5.5) circle (0.12);
\draw[black, line width=0.3mm] (9.52628,-6.5) circle (0.12);
\draw[black, line width=0.3mm] (9.52628,-7.5) circle (0.12);
\draw[black, line width=0.3mm] (10.3923,-4) circle (0.12);
\draw[black, line width=0.3mm] (10.3923,-5) circle (0.12);
\draw[black, line width=0.3mm] (10.3923,-6) circle (0.12);
\draw[black, line width=0.3mm] (10.3923,-7) circle (0.12);
\draw[black, line width=0.3mm] (10.3923,-8) circle (0.12);
\draw[black, line width=0.3mm] (11.2583,-3.5) circle (0.12);
\draw[black, line width=0.3mm] (11.2583,-4.5) circle (0.12);
\draw[black, line width=0.3mm] (11.2583,-5.5) circle (0.12);
\draw[black, line width=0.3mm] (11.2583,-6.5) circle (0.12);
\draw[black, line width=0.3mm] (11.2583,-7.5) circle (0.12);
\draw[black, line width=0.3mm] (12.1244,-4) circle (0.12);
\draw[black, line width=0.3mm] (12.1244,-5) circle (0.12);
\draw[black, line width=0.3mm] (12.1244,-6) circle (0.12);
\draw[black, line width=0.3mm] (12.1244,-7) circle (0.12);
\draw[black, line width=0.3mm] (12.1244,-8) circle (0.12);
\draw[black, line width=0.3mm] (12.9904,-3.5) circle (0.12);
\draw[black, line width=0.3mm] (12.9904,-4.5) circle (0.12);
\draw[black, line width=0.3mm] (12.9904,-5.5) circle (0.12);
\draw[black, line width=0.3mm] (12.9904,-6.5) circle (0.12);
\draw[black, line width=0.3mm] (12.9904,-7.5) circle (0.12);
\draw[black, line width=0.3mm] (13.8564,-4) circle (0.12);
\draw[black, line width=0.3mm] (13.8564,-5) circle (0.12);
\draw[black, line width=0.3mm] (13.8564,-6) circle (0.12);
\draw[black, line width=0.3mm] (13.8564,-7) circle (0.12);
\end{tikzpicture}
\end{center}
\label{fig:bottleneck}
\caption{A configuration with a bottleneck.}
\end{subfigure}\caption{Configurations with two dominant orientations (black  vs. gray circles); large interfaces as in (a) are unlikely for large $\gamma,$ whereas small interfaces as in (b) are likely for any finite $\gamma$.}
%Particles on the upper left (black circles) have a different orientation from the particles on the bottom right (grey circles). Since the length of the interface between them is $O(\sqrt{n})$ in (a), it is favorable for the particles to eliminate the interface by aligning, while since the length of the interface between them is $O(1)$ in (b), it is not favorable for the particles eliminate the interface and align.}
\end{figure}

\section{Preliminaries}\label{Sec:prelims}
%\subsection{The Amoebot Model}
% \emph{ VERY ROUGH, NEEDS EDITING}
Our model of programmable matter is based on the \emph{amoebot model}, introduced in~\cite{derakhshandeh_amoebot_2014} and described in detail in~\cite{daymude_computing_2019}, which has served as the basis for previous stochastic algorithms for SOPS \cite{cannon_markov_2016,cannon_markov_2016-1,cannon_local_2019,cannon_local_2019-1}. In the amoebot model, particles occupy the nodes of a graph with each node occupied by at most one particle. When executing a spatial move, a particle expands into an adjacent unoccupied node, temporarily occupying both nodes and then contracts to the new node. 
Each particle stores whether it is expanded or contracted and can read whether its neighbors are expanded or contracted. No particle has access to global information such as system size or a shared co-ordinate system or compass.    

We extend the amoebot model to model heterogeneous particles, where each particle has one of $q$ orientations, akin to the variant introduced in \cite{cannon_local_2019, cannon_local_2019-1}. Each particle, when activated, chooses either a spatial move as in the original amoebot model, or an ``orientation move'' that updates its direction, each equal probability.  The system performs these \emph{atomic actions}, following the $\mathcal{A}$SYNC model of computation from distributed computing \cite{lynch_distributed_1996}. It has been shown in this model that for any concurrent asynchronous execution of atomic actions, there exists a sequential ordering of actions with the same end state provided that all conflicts arising in the concurrent asynchronous execution are resolved. We assume that conflicts due to multiple particles expanding into an unoccupied node are resolved arbitrarily so that only one particle expands into the unoccupied node, allowing us to consider only one particle to be active at any given time. 

\subsection{The Potts and clock models}
In our models, each configuration is an assignment of $n$ particles to distinct vertices of a finite triangular lattice $G_\Delta$ of $N > n$ vertices with the toroidal topology. In addition, each particle is also assigned an orientation from $\{0,1,\ldots,q-1\}$.
We assume $\TorusLattice$ to inhabit a $\sqrt{N} \times \sqrt{N}$ square region with periodic boundary conditions.
%Precise details on the construction of the lattice $G_\Delta$ are given in Appendix \ref{apx:lattice}.
Each vertex $(x,y)$ of $\TorusLattice$ has six outgoing edges, to the vertices $(x+1,y), (x,y+1), (x+1,y+1), (x-1,y), (x,y-1), (x-1,y-1)$, where addition and subtraction is taken modulo $\sqrt{N}-1$.
%\markred{
%To be precise, the vertex set of $\TorusLattice$ is $\{0,1,\cdots,\sqrt{N}-1\}^2$, where each vertex $(x,y)$ has edges to the vertices $(x+1,y), (x,y+1), (x+1,y+1), (x-1,y), (x,y-1), (x-1,y-1)$, where addition and subtraction is taken modulo $\sqrt{N}-1$.
%}
Moreover, in this setup, the set of particles in our configurations must always be connected and hole-free. 
Given such a configuration, we define its \emph{boundary} $\mathcal{P}$ to be the minimal closed walk over occupied sites of $G_\Delta$ that encloses all of the occupied sites in the configuration. The \emph{perimeter} $p(\sigma)$ of a configuration $\sigma$ is then defined to be the length of this closed walk.

We consider the following Potts Hamiltonian, on $G_\Delta$, a variant of the site-diluted Potts model \cite{chayes_aggregation_1995}: 
\[ H_\mathrm{Potts}(\sigma) = -J\sum_{\langle i,j\rangle} n_i n_j\,\delta(\theta_i, \theta_j) -\kappa \sum_{\langle i,j\rangle} n_i n_j \,,\]
where the sum is over all pairs of adjacent sites: $\langle i,j \rangle$ i.e., sites connected by a single lattice edge in 
%\marginpar{\tiny DR: should it be $\delta(\sigma_i, \sigma_j)$ instead of $\theta$? Slight abuse of notation, but clearer?}
$G_\Delta$, $n_i \in \{0,1\}$ indicates whether site $i$ is occupied or not, $\theta_i$ indicates the orientation of the particle on site $i$, and $J, \kappa$ are positive constants. We only consider configurations $\sigma$ in~$\Omega$, i.e., where the total number of particles is equal to $n$, and the particle-occupied sites form a connected, hole-free region. 

The probability of a configuration $\pi_\mathrm{Potts}(\sigma)$ is given by the Boltzmann distribution:
\[ \pi_\mathrm{Potts}(\sigma) = {\mathrm{e}^{-\beta H_\mathrm{Potts}(\sigma)}}/{Z_\mathrm{Potts}}, \ \ \ {\rm where} \ \ \ Z_\mathrm{Potts} = \sum_{\sigma' \in\, \Omega} \mathrm{e}^{-\beta H_\mathrm{Potts}(\sigma')} \,,\]
where $\beta$ denotes the inverse temperature. 
Setting parameters $\lambda = \exp(\beta\kappa)$, and $\gamma = \exp(\beta J)$, the above probability distribution can be expressed as: %\marginpar{\tiny DR: simplified by removing def of weight. is it used anywhere else?}
\begin{equation}
   \pi_{\mathrm{Potts}}(\sigma) = \frac{w_\mathrm{Potts}(\sigma)}{Z_\mathrm{Potts}}\,,\; w_\mathrm{Potts}(\sigma) = (\lambda\,\gamma)^{-p(\sigma)} \gamma^{-h(\sigma)}\,,\; Z_\mathrm{clock} = \sum_{\sigma' \in\, \Omega} w_\mathrm{Potts}(\sigma'),
% (\lambda\,\gamma)^{-p(\sigma)} \gamma^{-h(\sigma)}/{Z_\mathrm{Potts}}, \ \ \ {\rm where} \ \ \  Z_{\mathrm{Potts}} = \sum_{\sigma' \in\, \Omega} (\lambda\,\gamma)^{-p(\sigma)} \gamma^{-h(\sigma')}, \label{potts_distribution}
\end{equation}
where $h(\sigma)$ is the number of heterogeneous edges in $\sigma$, i.e., edges connecting particles with different orientations, and $p(\sigma)$ is its perimeter, as defined earlier. Here $\pi_\mathrm{Potts}$ is the stationary distribution for our Markov chain algorithm based on the ferromagnetic Potts model interactions.

Similarly, we consider the following clock model Hamiltonian on $G_\Delta$:
\[ H_\mathrm{clock}(\sigma) = -J\sum_{\langle i,j\rangle} n_i n_j\,\cos(2\pi(\theta_i - \theta_j)/q) -\kappa \sum_{\langle i,j\rangle} n_i n_j \,.\]

\noindent The probability of a configuration $\pi_\mathrm{clock}(\sigma)$ is given by the Boltzmann distribution as before, and can be expressed in terms of the parameters $\lambda , \gamma$ as:
\begin{equation}
    \pi_{\mathrm{clock}}(\sigma) = \frac{w_\mathrm{clock}(\sigma)}{Z_\mathrm{clock}}\,,\; w_\mathrm{clock}(\sigma) = (\lambda\,\gamma)^{-p(\sigma)} \prod_{\langle i, j\rangle}\gamma^{-d_{ij}}\,,\; Z_\mathrm{clock} = \sum_{\sigma' \in\, \Omega} w_\mathrm{clock}(\sigma'), \label{clock_distribution}
\end{equation}
where $\lambda > 0, \gamma > 0$ (as before),  $d_{ij} := 1-\cos(2\pi(\theta_i - \theta_j)/q)$, and the product is over all pairs of adjacent occupied sites. Here $\pi_\mathrm{clock}$ will be the stationary distribution for our Markov chain algorithm based on the clock model. 

For each of the above models, we will refer to $w(\sigma)$ ($w_{\mathrm{Potts}}$ or $w_{\mathrm{clock}}$) as the \emph{weight} of a configuration. The stationary probability distribution $\pi$ ($\pi_{\mathrm{Potts}}$ or $\pi_{\mathrm{clock}}$) is thus simply the weight function $w$ normalized by the \emph{partition function} $Z$ ($Z_{\mathrm{Potts}}$ or $Z_{\mathrm{clock}}$).

\subsection{Cluster expansions and bridging}
Our proofs build on several tools from statistical physics and combinatorics, so we begin by introducing two key methods.
%\subsection{The Cluster Expansion}
%\label{section:prelims_cluster_expansion}
The {cluster expansion} is one of the oldest tools in statistical physics \cite{mayer_statistical_1937,mayer_theory_1950, friedli_cluster_2017}, and has led to the development of the Pirogov-Sinai theory~\cite{pirogov_phase_1975, pirogov_phase_1976}, playing an important role in recent advances in efficient sampling and counting algorithms~\cite{helmuth_algorithmic_2019,jenssen_independent_2020,borgs_efficient_2020}. The cluster expansion expresses the logarithm of a polymer partition function as a sum over polymer clusters. 

Let $\mathcal{L}$ be a finite set of polymers $\{\xi_i\}$, where each polymer $\xi_i$ has weight $w(\xi_i)$. We also define  ``compatibility'' between polymers - each pair of polymers $\xi, \xi'$ is either compatible ($\xi\sim\xi'$) or incompatible ($\xi\nsim\xi'$).
%Two polymers $\xi\,,\,\xi'$ are incompatible $(\xi\nsim\xi')$ if they \emph{`touch'}.
The polymer partition function is then given by:
\[ \Xi = \sum_{\tau\in\Omega^\mathcal{L}} \prod_{\xi\in \tau}w(\xi) \,,\]
where $\Omega^\mathcal{L}$ is the set of all collections of pairwise compatible polymers in $\mathcal{L}$. 
The cluster expansion expresses the logarithm of the polymer partition function in terms of clusters, where a cluster $X$ is an ordered multiset of polymers $\{\xi_1,\ldots, \xi_k\}$ such that their incompatibility graph $H(X)$ is connected, where the incompatibility graph is constructed by representing  each polymer by a vertex and connecting two vertices if the corresponding polymers are incompatible. The cluster expansion gives:
\begin{align*}
\log\Xi = \sum_{X\in \mathcal{C}} \Psi(X)\, \text{, where }
\Psi(X) := \frac{1}{\vert X \vert !}\left( \sum_{{G\subseteq H_X } } (-1)^{\vert E(G)\vert}\right)\left( \prod_{\xi\in X} w(\xi)\right)\,,
\end{align*}
where the sum is taken over connected, spanning subgraphs $G$ and $\mathcal{C}$ is the set of all clusters. A sufficient condition for the convergence of the cluster expansion was given by Koteck{\'y} and Preiss \cite{kotecky_cluster_1986}. We will prove this condition in Lemma \ref{lem:kotecky_preiss_potts} and use the cluster expansion to separate the volume and surface contributions to the partition function, as done in \cite{friedli_cluster_2017, cannon_local_2019}.

%\subsection{Bridge Systems}
{Bridging} is a combinatorial technique used to show that large contours are uncommon, while allowing for the possibility of many small contours corresponding to ``defects''. It was first introduced in \cite{miracle_clustering_2011} and later adapted in \cite{cannon_local_2019}. 
We note that a constant fraction of defects will be unavoidable - an example of this is in the Ising model and Potts models, where a constant fraction of the vertices will not follow the majority color even at stationarity.
Each configuration corresponds to a set of contours - informally, a bridge system comprises of a set of bridges, which are edges on the dual graph on the lattice that connect contours to the boundary of the lattice. Contours that are connected this way are called bridged contours, while the remaining contours are unbridged.

Bridge systems are defined so that the total length of the bridges is at most a constant fraction of the total length of the bridged contours, 
which allows us to bound the number of bridge systems with total bridged contour length $\ell$ by $C^\ell$ for some constant $C$. Consequently, a \emph{Peierls argument} can be used to show that the gain in energy (probability weight) by the removal of the bridged contours is greater than the loss in entropy by the removal of these contours.
Explicit constructions of bridge systems are shown in \cite{cannon_local_2019} and in our proof of alignment for general SOPS (see
Section~\ref{sec:general}).

\section{Compression and Alignment in Connected SOPS}

% In our connected SOPS model, each configuration is an assignment of $n$ particles to distinct vertices of a finite triangular lattice $G_\Delta$ of $N > n$ vertices with the toroidal topology. In addition, each particle is also assigned an orientation from $\{0,1,\ldots,q-1\}$.
% We assume $\TorusLattice$ to inhabit a $\sqrt{N} \times \sqrt{N}$ square region with periodic boundary conditions.
%Precise details on the construction of the lattice $G_\Delta$ are given in Appendix \ref{apx:lattice}.
% Each vertex $(x,y)$ of $\TorusLattice$ has six outgoing edges, to the vertices $(x+1,y), (x,y+1), (x+1,y+1), (x-1,y), (x,y-1), (x-1,y-1)$, where addition and subtraction is taken modulo $\sqrt{N}-1$.
%\markred{
%To be precise, the vertex set of $\TorusLattice$ is $\{0,1,\cdots,\sqrt{N}-1\}^2$, where each vertex $(x,y)$ has edges to the vertices $(x+1,y), (x,y+1), (x+1,y+1), (x-1,y), (x,y-1), (x-1,y-1)$, where addition and subtraction is taken modulo $\sqrt{N}-1$.
%}
% Moreover, in this setup, the set of particles in our configurations must always be connected and hole-free. Given such a configuration, we define its \emph{boundary} $\mathcal{P}$ to be the minimal closed walk over occupied sites of $G_\Delta$ that encloses all of the occupied sites in the configuration. The \emph{perimeter} $p(\sigma)$ of a configuration $\sigma$ is then defined to be the length of this closed walk.
%In the connected SOPS model, we have an additional restriction that the set of occupied sites of $G_\Delta$ is simply connected (connected and hole-free).

Starting with any simply connected set of particles, we define a local Markov chain aiming to simultaneously compresses the configuration and  align all but a small fraction of their orientations.
%Our algorithm is a Markov Chain implementing the Metropolis-Hastings algorithm \cite{metropolis_equation_1953} for our target Gibbs distribution.
On each iteration, a particle is activated uniformly at random using a Poisson clock.
When activated, a particle chooses to attempt a spatial move or a reorientation move with a equal probability. Informally, spatial moves consist of the particle moving to a randomly chosen neighboring site, provided that site is unoccupied and the particle configuration remains simply connected, while a reorientation move allows the particle to change its orientation to point in a new direction.  While it is surprising that a property such as connectivity can be determined locally, a set of local moves were defined in Cannon et al. \cite{cannon_markov_2016} that prevent the configuration from disconnecting or forming holes and yet the chain remains ergodic on the infinite lattice, so all valid configurations can still be reached.  This ergodicity result carries over to our setting as the we use a lattice that while finite, is sufficiently large that self-intersections via wraparound are not possible.
%Let $\sigma$ and $\sigma'$ be the configurations before and after the move.  Spatial moves that keep configurations simply connected were defined in \cite{cannon_markov_2016}, where it was shown that restricting moves to onl
%
Using the Metropolis-Hastings algorithm \cite{metropolis_equation_1953}, once a move is determined to be valid, it is implemented with probability $\min\{1,\pi(\sigma')/\pi(\sigma)\}$, where $\pi$ is the desired stationary distribution.

%In a reorientation move, one of the $q$ possible orientations is selected at random, and the particle will switch to this new orientation with probability $\min\{1,\pi(\sigma')/\pi(\sigma)\}$, where $\pi$ is the desired stationary distribution. 
% \inlinecomm{SH: Not sure how to get rid of this repetition}
%
%\label{apx:validspatialmoves}
%\subparagraph*{Valid Moves:}

More precisely, consider a spatial move from a location $\ell$ to an empty adjacent location $\ell'$. Let the sets of lattice sites adjacent to the locations $\ell$ and $\ell'$ be $N(\ell)$ and $N(\ell')$ respectively. Furthermore, let $N(\ell \cup \ell')$ denote $N(\ell)\cup N(\ell') \setminus \{\ell, \ell'\}$, and $\mathbb{S}:= N(\ell) \cap N(\ell')$ denote the set of sites adjacent to both $\ell$ and $\ell'$ so that $|\mathbb{S}| \in \{0,1,2\}$.

%Consider a potential movement of a particle from a location $\ell$ to a neighboring location $\ell'$. In general SOPS, the move is {\it valid} as long as the location $\ell'$ is unoccupied. In connected SOPS, to maintain simple connectivity,
\begin{definition}
A move from $\ell$ to $\ell'$ is {\it valid} if $\ell'$ is unoccupied, the number of particle-occupied sites in $N(\ell)$ is less than $5$, and either of the following two properties are satisfied:

\emph{Property 1:} $|\mathbb{S}| \geq 1$ and every particle-occupied site in $N(\ell \cup \ell')$ is connected to a particle-occupied site in $\mathbb{S}$ through $N(\ell \cup \ell')$.

\emph{Property 2:} $|\mathbb{S}| = 0$,  $\ell$ and $\ell'$ each have at least one neighbor, and all particle-occupied sites in $N(\ell) \setminus \{\ell'\}$ are connected by paths within this set, and all occupied sites in $N(\ell') \setminus \{\ell\}$ are connected by paths within this set.
\end{definition}

%In the connected SOPS setting, the valid spatial moves are required to preserve connectivity of the configuration. It was shown in Cannon et al. that on the infinite triangular lattice, disallowing moves that cannot be determined locally to maintain connectivity can be implemented without disconnecting the state space, so that the Markov chain is still ergodic~\cite{cannon_markov_2016-1}.
%This ergodicity result carries over to our setting as the we use a lattice that while finite, is sufficiently large that self-intersections via wraparound are not possible.
%The precise definition of a valid move is given in the Appendix.% \ref{apx:validspatialmoves}.
\noindent Note that in Section~\ref{sec:general}, we will consider almost the same algorithm in the general SOPS setting where there are no connectivity restrictions, so there all spatial moves from an occupied site to an adjacent unoccupied site are valid.

%Assuming that the ratio $\pi(\sigma')/\pi(\sigma)$ can  be computed by a particle locally, the Metropolis-Hastings algorithm gives us a local algorithm that guarantees convergence to our desired stationary distribution.
%\inlinecomm{DEFINE POTTS MODEL HERE}
% \subsection{The Potts and Clock Models}
% \label{section:potts_clock_models}
% The Potts model and its variant, the clock model, have been studied extensively in the statistical physics literature \cite{wu_potts_1982, ortiz_dualities_2012, baxter_potts_1973, frohlich_kosterlitz-thouless_1981} and both have $q$ ordered phases corresponding to all the particles oriented along the same direction, and a disordered phase where there is an equal number of particles oriented along every direction. In two dimensions, both the Potts model and the clock model are known to have a phase transition from the disordered phase at high temperatures (where alignment is favored to lesser degree) to an ordered phase at low temperatures (where alignment is favored to a greater degree). However, our model differs from the traditional treatment of the Potts model in two important respects: (i) a non-zero fraction of the lattice sites are unoccupied allowing the particles (spins) to make spatial moves, and (ii) the total number of particles (spins) is fixed.
% \inlinecomm{MOVE THIS COMMENT TO RIGHT AFTER DEFINING POTTS PI}
It is important to note that the ratio between the probabilities $\pi(\sigma')/\pi(\sigma)$ that arises from the Metropolis-Hastings algorithm can be calculated by an activated particle  using only local information - the positions and orientations of particles in its immediate neighborhood, as well as those in the  neighborhood of the destination site if the particle is moving.
Specifically, changes in perimeter in connected SOPS can be computed locally as shown in \cite{cannon_markov_2016, cannon_markov_2016-1}. %because the boundary length of a simply connected configuration with perimeter $p$ is $2p+6$, so changes in boundary length can be computed locally.
%\marginpar{\tiny DR: What is perimeter vs boundary length? was this defined?  this can be said better.}

% We begin in the connected setting where the particle system stays simply connected. 
We now proceed to show that when s $\lambda$ and $\gamma$ are sufficiently large, the alignment algorithm will cause the system to compress to form a low-perimeter configurations with high probability. Moreover, in both the Potts and clock model settings, in any configuration with sufficiently low-perimeter, one of the $q$ orientations will dominate with high probability.  

We note that we did not attempt to give rigorous bounds on the rates of convergence for our Markov chains.  We expect that convergence will be fast when the parameters $\lambda$ and $\gamma$ are small and the system evolves to a disordered (gaseous) state, but the connectivity constraint makes proving this challenging.  In contrast, we expect convergence to equilibrium will be slow in the ordered (solid) state when $\lambda$ is large, but we conjecture that desirable compressed and aligned states will be reached quickly, long before the system is very close to stationarity.

%While we focus our proofs  on the nonsaturated Potts model setting, we note that the results for the clock model algorithm follow similarly.

%and align.  We use a variant of the following algorithm with different parameter settings to achieve compression/expansion and alignment/nonalignment, and in the connected setting we include only local moves that are guaranteed to keep the configuration simply connected.

\subsection{Compression in Connected SOPS}

\newcommand{\polymerset}{{\Omega_\mathcal{P}^\mathcal{L}}}
\newcommand{\Omegaref}{\Omega_\mathcal{P}^{0}}
We denote the set of possible configurations in this paradigm by $\Omega$.
Recall that $N$ represents the number of sites of the lattice $G_\Delta$.
To ensure that the proof of ergodicity from~\cite{cannon_markov_2016-1} carries over to our setting, we use a sufficiently large value of $N$, namely $N \geq (n+1)^2$, although we expect the  results to hold for  smaller~$N$.

\begin{definition}[Compression]
A simply connected configuration $\sigma$ of $n$ particles on a lattice is said to be $\alpha$-compressed if its perimeter is at most $\alpha \cdot p_{\min}(n)$, where $p_{\min}(n)$ is the minimum possible perimeter of a configuration of $n$ particles.
\end{definition}

%We  prove that the Potts model inspired algorithm achieves compression with high probability for sufficiently large $\lambda, \gamma$. The proof of compression for the clock model algorithm follows  similarly.

The main result of this section is the following theorem. 

\begin{theorem}\label{thm:con_comp_thm}
Given any $\alpha >1$, if constants $\lambda>1$ and $\gamma>29.3\,(q-1)$ satisfy
$
\lambda\,\gamma > 7^{\alpha/(\alpha-1)} 
$
and $n$ is sufficiently large, then the probability a configuration drawn from the stationary distribution $\pi_\mathrm{Potts}$ is not $\alpha$-compressed is exponentially small.
\end{theorem}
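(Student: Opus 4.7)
The plan is to establish compression via a Peierls-style argument in which the quantity
\[
\Pr\bigl(p(\sigma) > \alpha\, p_{\min}(n)\bigr) = \frac{\sum_{p > \alpha p_{\min}} W(p)}{Z_{\mathrm{Potts}}}
\]
is shown to be exponentially small, where $W(p)$ is the total $\pi_{\mathrm{Potts}}$-weight of configurations of perimeter exactly $p$. Because $\pi_{\mathrm{Potts}}$ weights orientations as well as positions, the first step is to sum the orientation degrees of freedom out for each fixed simply connected shape $R$ of $n$ sites with perimeter $p$, producing
\[
W(R) \;=\; (\lambda\gamma)^{-p}\sum_{\theta:R\to\{0,\dots,q-1\}} \gamma^{-h(R,\theta)} \;=\; (\lambda\gamma)^{-p}\,Z_{\mathrm{Potts}}(R),
\]
so that $Z_{\mathrm{Potts}}(R)$ is a standard low-temperature Potts partition function on the induced subgraph of $R$.

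To control $Z_{\mathrm{Potts}}(R)$ I would introduce a polymer representation in which a polymer is a connected component of dual edges separating disagreeing spins, with weight a power of $\gamma^{-1}$ in its length, as suggested by the flow/Potts connection mentioned in the introduction. Applying the Koteck\'y--Preiss criterion (to be verified in Lemma~\ref{lem:kotecky_preiss_potts}) under the hypothesis $\gamma > 29.3(q-1)$ ensures the cluster expansion converges, and separating the resulting sum over clusters into bulk clusters (supported in the interior of $R$) and boundary clusters (touching $\partial R$) yields a factorization
\[
Z_{\mathrm{Potts}}(R) \;=\; q\cdot e^{n f(\gamma)}\cdot e^{O(p)},
\]
where $f(\gamma)$ is a shape-independent bulk term and the $O(p)$ surface correction has an explicit, controllable rate. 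Since every configuration with $n$ particles contributes the same volume factor $q\, e^{n f(\gamma)}$, it cancels when forming the ratio $W(p)/W(p_{\min})$, leaving only $(\lambda\gamma)^{p_{\min}-p}$ times a surface correction of the form $\mu^p$ for some $\mu$ that tends to $1$ as $\gamma\to\infty$.

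The last ingredient is a combinatorial bound on the number of simply connected, hole-free shapes of $n$ particles with a fixed perimeter $p$. Tracing the boundary as a closed walk on the triangular lattice with at most $6$ outgoing edges per vertex (and using that the boundary uniquely determines the shape) gives a bound of the form $\mathrm{poly}(n)\cdot 7^p$, where the $7$ is slack to absorb a choice of starting site and is the source of the constant appearing in $7^{\alpha/(\alpha-1)}$. Combining these estimates,
\[
\frac{W(p)}{W(p_{\min})} \;\le\; \mathrm{poly}(n)\,\bigl(7\,\mu\,(\lambda\gamma)^{-1}\bigr)^{p}\,(\lambda\gamma)^{p_{\min}},
\]
and summing over $p \ge \alpha\, p_{\min}$ while using $p-p_{\min} \ge (1-1/\alpha)p$ together with $\lambda\gamma > 7^{\alpha/(\alpha-1)}$ forces the geometric sum to be bounded by $e^{-\Omega(p_{\min})} = e^{-\Omega(\sqrt n)}$, as desired.

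The main obstacle is the cluster-expansion step: not just verifying Koteck\'y--Preiss for this specific polymer model, but extracting uniform in $R$ bounds on the surface correction $\mu$. Two subtleties arise. First, because particles occupy an arbitrary simply connected region of the torus rather than a fixed box, the boundary of $R$ is irregular and the cluster-counting arguments used in classical Potts expansions must be ported to walk-bounded geometries; the inequality $29.3(q-1)$ is where this sensitivity shows up. Second, $\mu$ must be made close enough to $1$ that $7\mu < (\lambda\gamma)^{1-1/\alpha}$; this is why the threshold on $\lambda\gamma$ is exactly $7^{\alpha/(\alpha-1)}$ and leaves no room to waste on the surface correction, so the cluster expansion has to be used in a quantitatively sharp way rather than as a soft asymptotic statement.
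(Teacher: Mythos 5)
The proposal follows the paper's architecture: a Peierls argument controlling $Z_{\mathrm{Potts}}(R)$ for each fixed shape $R$ via a polymer cluster expansion (with Koteck\'y--Preiss), a volume/surface factorization, and a self-avoiding walk count over boundaries. But two points need sharpening, one of which is a concrete gap.

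First, the polymer model you describe (``connected components of dual edges separating disagreeing spins, with weight a power of $\gamma^{-1}$'') is a contour model, and for $q > 2$ a set of contours alone does not determine a Potts configuration---you must also track which spin sits on each side of each contour, so the weight $\gamma^{-\text{length}}$ does not reproduce $Z_{\mathrm{Potts}}(R)$. The paper's flow representation (Definition~\ref{defn:consistent}) fixes this: polymers are labeled primal edge sets whose labels encode spin differences modulo $q$, and the Koteck\'y--Preiss enumeration $\nu(m,q) \le (6e(q-1))^m/2$ of \emph{labeled} connected subgraphs is precisely where the $q-1$ in $\gamma > 29.3(q-1)$ enters. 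You allude to the flow/Potts connection but the stated weight assignment does not reflect it.

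Second, attributing the $7$ in $7^{\alpha/(\alpha-1)}$ to a $\mathrm{poly}(n)\cdot 7^p$ shape count does not close the argument. Your Peierls sum then has geometric base $7\mu\,(\lambda\gamma)^{-(1-1/\alpha)}$ with $\mu = e^{O(c)} > 1$ the surface correction, so you would need $(\lambda\gamma)^{1-1/\alpha} > 7\mu$; but $\lambda\gamma > 7^{\alpha/(\alpha-1)}$ gives only $(\lambda\gamma)^{1-1/\alpha} > 7 < 7\mu$. In the paper, $7$ is not the walk-counting constant: it is a convenient rounding of $(4+2\sqrt{2})\,e^{7c(\alpha+1)/\alpha} < 7$, where $2+\sqrt{2}$ is the sharp Duminil-Copin self-avoiding walk bound and the extra factor of $2$ comes from Lemma~\ref{lem:monochromatic}, which reduces to monochromatic boundaries so that the polymer correspondence of Lemma~\ref{lem:polymer_bijection_potts} applies. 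The slack between $4+2\sqrt{2} \approx 6.83$ and $7$ is exactly what absorbs $\mu$. Interestingly, your plan of summing out orientations directly to get $Z_\mathrm{Potts}(R) = q\,\Xi_R$ (a $q$-to-$1$ flow correspondence without fixing the boundary color) would drop the factor of $2^p$ altogether and give extra headroom; but it still requires the sharp $(2+\sqrt{2})^p$ walk bound rather than the crude $7^p$, otherwise the surface correction still pushes the base above $7$ and the stated threshold is unprovable.
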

% \noindent Furthermore, when $\gamma$ satisfies additional constraints given in Theorem~\ref{thm:con_align_thm}, we show that the compressed configurations are very likely to be aligned.

% \begin{theorem}\label{thm:compression_potts}
% Given any $\alpha >1$, when constants $\lambda>1, c=0.0001$, and $\gamma>29.3\,(q-1)$ satisfy
% \begin{equation}\label{compression_condn_potts2}
% \lambda\,\gamma > (4+2\sqrt{2}))^{\frac{\alpha}{\alpha-1}}
% \left( e^{7c}\right)^{\frac{\alpha+1}{\alpha-1}}, 
% \end{equation}
% with $n$ is sufficiently large, then the probability that a configuration drawn from the stationary distribution  $\pi_\mathrm{Potts}$ is not $\alpha$-compressed is exponentially small.
% % \[ \pi_\mathrm{Potts}(S_\alpha) < \zeta^{\sqrt{n}} \] 
% \end{theorem}

%Note that above Equation is satisfied if $\lambda\,\gamma > 7^{\alpha/(\alpha-1)}$.
%\end{theorem*}

%\oldtext{and will be sketched at the end of the section and described in detail in the appendix.} \inlinecomm{I don't think we're gonna do the full clock model proof in the appendix since we don't have time.}
%\comm{NOTE: boundary is NOT the set of edges between occupied and unoccupied sites.}

Let $\mathcal{P}$ denote the boundary of some configuration $\sigma$ in our configuration space $\Omega$. As~$\sigma$ is connected, hole-free, and contains a finite ($n$) number of particles, $\mathcal{P}$ is a single closed walk on $G_\Delta$ and the perimeter of the configuration, $p(\sigma)$, is equal to $\abs{\mathcal{P}}$, the total length of walk $\mathcal{P}$.
If we restrict our particle configurations to be connected and hole-free, there is a one-to-one correspondence between the possible sets of occupied sites and the possible boundaries $\mathcal{P}$.
Let $\Omega_\mathcal{P}$ denote the set of configurations in $\Omega$ with boundary $\mathcal{P}$, and let $\Lambda_\mathcal{P} \subseteq G_\Delta$ be the induced subgraph of the triangular lattice $G_\Delta$ by the particle-occupied vertices for any configuration in $\Omega_\mathcal{P}$. A configuration in $\Omega_\mathcal{P}$ thus corresponds to a mapping of the vertices of $\Lambda_\mathcal{P}$ to the orientations $\{0,\ldots,q-1\}$.

We consider the subset of configurations $\Omega_\mathcal{P}^0 \subseteq \Omega_\mathcal{P}$ where all particles on the boundary~$\mathcal{P}$ have the same color $0$. We will later analyze the weight of configurations in $\Omega_\mathcal{P}^0$ using a polymer model and the cluster expansion. 
We would first like to obtain an upper bound on $w(\Omega_\mathcal{P})$, the total weight of configurations in $\Omega_\mathcal{P}$, in terms of $w(\Omega_\mathcal{P}^0)$, the total weight of configurations in $\Omega_\mathcal{P}^0$. 

\markred{
Consider some fixed boundary $\mathcal{P}$, and take a configuration of particles $\sigma \in \Omega_\mathcal{P}$. Consider the set $E_H(\sigma)$ of heterogeneous edges (edges between particles of differing orientations) in $\Lambda_\mathcal{P}$. These edges correspond to a ``network'' of contours in $G_{\varhexagon}$. We denote by $E_C(\sigma)$ the set of heterogeneous edges that have a path to the boundary $\mathcal{P}$ over $E_H(\sigma)$ in $G_{\varhexagon}$.

Removing the edges $E_H(\sigma)$ from $\Lambda_\mathcal{P}$ subdivides $V(\Lambda_\mathcal{P})$ into connected components. A \emph{face} $F$ of the configuration refers to a union of the vertex sets of one or more of these components, that is connected, hole-free, and has all particles on its boundary of the same orientation. The orientation of a face $F$ refers to the common orientation shared by all particles on the boundary of $F$.
If a face $F$ is not a subset of any other face of $\sigma$, we call $F$ a \emph{maximal face}. We observe that there exists a unique partition of $V(\Lambda_\mathcal{P})$ into maximal faces, which can be obtained by removing only the edges in $E_C(\sigma)$ from $\Lambda_\mathcal{P}$, and taking the vertex sets of the resulting connected components of $\Lambda_\mathcal{P}$.
}

\begin{lemma}
\label{lem:monochromatic}
For $\gamma > 3q$, we have
%\[w(\Omega_\mathcal{P}) < w(\Omega_\mathcal{P}^0) \,q^{\vert \mathcal{P}\vert}\,\frac{q\gamma}{\gamma - 3q}\,. \]
\[w(\Omega_\mathcal{P}) < w(\Omega_{\mathcal{P}}^0) \cdot q\, 2^{\abs{\mathcal{P}}} \frac{\gamma}{\gamma - 3q}. \]
\end{lemma}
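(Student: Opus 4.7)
The plan is to prove the inequality by a recoloring map from $\Omega_\mathcal{P}$ to $\Omega_\mathcal{P}^0$ followed by a Peierls-style sum over the bridged contour networks that parameterize the fibers of this map. Given $\sigma \in \Omega_\mathcal{P}$, I would define $\phi(\sigma) \in \Omega_\mathcal{P}^0$ by recoloring every outer maximal face of $\sigma$ (that is, every maximal face that meets $\mathcal{P}$) to orientation $0$, while leaving the remaining maximal faces intact. Since every particle on $\mathcal{P}$ lies in an outer maximal face, $\phi(\sigma)$ has boundary uniformly $0$ and indeed lies in $\Omega_\mathcal{P}^0$. The fiber $\phi^{-1}(\sigma_0)$ is then naturally indexed by pairs $(\mathcal{C}, \kappa)$, where $\mathcal{C} \subseteq G_{\varhexagon}$ is a bridged contour network attached to $\mathcal{P}$ that subdivides the outer region of $\sigma_0$ into sub-regions, and $\kappa$ is an orientation assignment to these sub-regions with the constraint that sub-regions separated by an edge of $\mathcal{C}$ receive distinct orientations.

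Next, I would compare weights within a fiber. The crucial input from the structure of maximal faces is that adjacent maximal faces must carry distinct orientations (otherwise they would merge in the partition obtained by removing $E_C(\sigma)$ from $\Lambda_\mathcal{P}$). Using this, a case analysis by edge type shows that every edge of $E_C(\sigma)$ separates two outer sub-regions of different colors and so becomes homogeneous under $\phi$, while no net new heterogeneous edges arise at the outer--inner interface. This yields the per-fiber weight estimate $w(\sigma) \leq w(\phi(\sigma))\,\gamma^{-|\mathcal{C}|}$, which is the key ingredient coupling the recoloring to the Peierls count.

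To close the argument, I would sum the contributions of each fiber to obtain
\[
    \frac{w(\Omega_\mathcal{P})}{w(\Omega_\mathcal{P}^0)} \;\leq\; q \sum_{\mathcal{C}} (\text{valid colorings of }\mathcal{C}) \cdot \gamma^{-|\mathcal{C}|},
\]
where the factor $q$ absorbs the color-rotation symmetry (the choice of which color plays the role of the reference color $0$). I would estimate the remaining sum by factoring the contour's construction into (i) a binary choice at each of the $|\mathcal{P}|$ boundary edges---whether or not a contour of $\mathcal{C}$ attaches there---contributing $2^{|\mathcal{P}|}$, and (ii) for each attachment, a bounded geometric extension into the interior, in which each additional hex edge of $\mathcal{C}$ has at most $3$ continuations (since vertices of $G_{\varhexagon}$ have degree $3$) and each newly enclosed sub-region admits at most $q$ orientation choices. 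The resulting geometric series $\sum_{k\geq 0}(3q/\gamma)^k = \gamma/(\gamma-3q)$, convergent under the hypothesis $\gamma > 3q$, gives the claimed bound.

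The hardest step will be to make the weight comparison in Step~2 fully rigorous. One must ensure that recoloring the outer maximal faces to $0$ cannot produce a net gain of heterogeneous edges at the outer--inner interface, which relies on the fact that any outer face adjacent to an inner face already carries a color distinct from that inner face's color (otherwise the two would merge into a single maximal face). A secondary delicate issue is checking that the $3q$ branching bound in Step~3 is not an overcount when $\mathcal{C}$ forms cycles that enclose new sub-regions, since the coloring constraints then propagate globally rather than locally along the growth of $\mathcal{C}$.
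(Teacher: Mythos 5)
Your high-level strategy (map to $\Omega_\mathcal{P}^0$, Peierls sum over contour networks, $2^{|\mathcal{P}|}$ starting-points times $3$-ary branching times $q$-ary face colorings) matches the paper's, but there is a concrete gap in the recoloring map. You recolor only the \emph{outer} maximal faces and leave inner maximal faces intact, and you then assert that ``every edge of $E_C(\sigma)$ separates two outer sub-regions of different colors.'' This is false: $E_C(\sigma)$ is exactly the set of edges bounding the maximal-face partition, and inner maximal faces (components not meeting $\mathcal{P}$) can and do occur, e.g.\ a ``balloon'' of contours hanging off $\mathcal{P}$, or even a single boundary vertex $p$ of nonzero orientation whose only nonzero interior neighbor $p'$ has a third orientation---then $\{p'\}$ is an inner maximal face and the edges $\{p,p'\}$ and $\{p', \cdot\}$ are outer--inner edges of $E_C(\sigma)$. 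Under your map those edges stay heterogeneous whenever the inner face's orientation is not $0$, so the per-fiber weight estimate $w(\sigma) \leq \gamma^{-|\mathcal{C}|}\,w(\phi(\sigma))$ fails if $\mathcal{C} = E_C(\sigma)$. If you instead take $\mathcal{C}$ to be only the outer--outer edges, the weight inequality becomes valid but the fiber no longer has the clean parametrization you describe: recovering $\sigma$ from $\phi(\sigma)$ now also requires the positions and colors of the inner contours, extra entropy your $2^{|\mathcal{P}|}\cdot 3^{|\mathcal{C}|}\cdot q^{\cdot}$ count does not absorb.

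The paper sidesteps all of this by applying a color-rotation to \emph{every} maximal face---inner as well as outer---shifting each face's boundary orientation to $0$. Under this map every edge of $E_C(\sigma)$ separates two faces both rotated to the same boundary color, so all of $E_C(\sigma)$ becomes homogeneous and one gets the exact identity $w(\sigma) = \gamma^{-|E_C(\sigma)|}\,w(\phi(\sigma))$; interior heterogeneous edges inside a face are preserved by the uniform shift, so nothing else changes. The pre-image is then reconstructed purely from $E_C(\sigma)$ (starting points on $\mathcal{P}$, $3$-ary branching) together with the original orientation of each face, of which there are at most $|E_C(\sigma)|+1$ since they form a connected auxiliary graph. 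Adopting this face-wise rotation rather than recoloring only the outer faces is the needed fix and removes the difficulty you flagged as the hardest step.
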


\markred{
\begin{proof}
Fix a configuration $\sigma \in \Omega_\mathcal{P}$. The set of heterogeneous edges $E_C(\sigma)$ connected to the boundary $\mathcal{P}$ partitions the vertices $V(\Lambda_\mathcal{P})$ into a set of maximal faces $\mathcal{F}(\sigma)$.

We first partition $\Omega_\mathcal{P}$ by $x = \abs{E_C(\sigma)}$ into the sets $\{\Omega_{\mathcal{P},x} \mid x = \{0,1,2,\ldots\}\}$. Fix some $x \in \{0,1,2,\ldots\}\}$. We define a map $\phi_x : \Omega_{\mathcal{P},x} \to \Omega_\mathcal{P}^0$. The map converts every face $F \in \mathcal{F}(\sigma)$ to a face of orientation $0$ as follows: for each face $F \in \mathcal{F}$ of some orientation $i \in \{0,\ldots,q-1\}$, a cyclic shift $j \mapsto (j - (i-1)) \mod q$ is applied to the orientation of every particle in $F$. As this removes exactly the heterogeneous contours corresponding to $E_C(\sigma)$, we observe that
\begin{align*}
    w(\sigma) = \gamma^{-x}w(\phi(\sigma)).
\end{align*}
We upper bound the size of the pre-image $\phi_x^{-1}(\tau)$ of any $\tau \in \Omega_{\mathcal{P}}^0$. We observe that a pre-image $\sigma \in \phi_x^{-1}(\tau)$ is fully defined by the contours $E_C(\sigma)$ and the original orientations of each of the faces. To recreate $E_C(\sigma)$, we designate whether each edge of the boundary is the starting point of a contour, of which there are at most $2^{\abs{\mathcal{P}}}$ ways to do so. There are then at most $3^x$ ways of reconstructing $E_C(\sigma)$ (where $\abs{E_C(\sigma)} = x$)  using the given starting points. This is as the dual lattice $G_{\varhexagon}$ is a regular graph of degree $3$, so at each juncture, we can choose to either go left, go right, or split into two paths. This defines our set of maximal faces $\mathcal{F}(\sigma)$. We then observe that adjacent faces are each separated by at least one (unique) edge in $E_C(\sigma)$, so as the auxiliary graph defined over the maximal faces $\mathcal{F}(\sigma)$ is connected, we must have $\abs{\mathcal{F}(\sigma)} \leq x+1$. This gives us at most $q^{x+1}$ ways to color these faces. Thus, 
\begin{align*}
    w(\Omega_{\mathcal{P},x})
    = \gamma^{-x}\sum_{\sigma \in \Omega_{\mathcal{P},x}}w(\phi(\sigma))
    \leq \gamma^{-x} 2^{\abs{\mathcal{P}}} 3^x q^{x+1} w(\Omega_{\mathcal{P}}^0).
\end{align*}
Summing this over $x \in \{0,1,\ldots\}$, we have for $\gamma > 3q$,
\begin{align*}
    w(\Omega_{\mathcal{P}}) = \sum_{x=0}^\infty \leq w(\Omega_{\mathcal{P}}^0) \cdot q 2^{\abs{\mathcal{P}}} \sum_{x=0}^\infty {\left(\frac{3q}{\gamma}\right)}^x
    = w(\Omega_{\mathcal{P}}^0) \cdot q 2^{\abs{\mathcal{P}}} \frac{\gamma}{\gamma - 3q}.
\end{align*}
\end{proof}
}

The proof is a generalized version of that in \cite{cannon_local_2019}, by defining maps from $\Omega_\mathcal{P} \to \Omegaref$ such that all vertices on boundary $\mathcal{P}$ are of orientation $0$.
We will use the cluster expansion to analyze the total weight $w(\Omega_\mathcal{P}) := \sum_{\sigma \in \Omega_\mathcal{P}} w(\sigma)$ of the configurations in $\Omega_\mathcal{P}$. Since the cluster expansion can only be applied to polymer partition functions, we begin by representing the configurations of $\Omega_\mathcal{P}$ with a polymer model.
%\subsection{The Cluster Expansion}
% \label{section:prelims_cluster_expansion}
\markred{The cluster expansion is one of the oldest tools in statistical physics \cite{mayer_statistical_1937,mayer_theory_1950, friedli_cluster_2017}, and has led to the development of the Pirogov-Sinai theory \cite{pirogov_phase_1975, pirogov_phase_1976}, and has played an important part in the recent advances in efficient sampling and counting algorithms \cite{helmuth_algorithmic_2019,jenssen_independent_2020,borgs_efficient_2020}.

The cluster expansion expresses the logarithm of a polymer partition function as a sum over polymer clusters. Let $\mathcal{L}$ be a finite set of polymers $\{\xi_i\}$, where each polymer $\xi_i$ has a weight $w(\xi)$.
We also define a notion of ``compatibility'' between polymers - each pair of polymers $\xi, \xi'$ is either compatible ($\xi\sim\xi'$) or incompatible ($\xi\nsim\xi'$).
%Two polymers $\xi\,,\,\xi'$ are incompatible $(\xi\nsim\xi')$ if they \emph{`touch'}.
The polymer partition function is then given by:
\[ \Xi = \sum_{\tau\in\Omega^\mathcal{L}} \prod_{\xi\in \tau}w(\xi) \,,\]
where $\Omega^\mathcal{L}$ is the set of all collections of pairwise compatible polymers in $\mathcal{L}$. 
The cluster expansion expresses the logarithm of the polymer partition function in terms of clusters, where a cluster $X$ is an ordered multiset of polymers $\{\xi_1,\ldots, \xi_k\}$ such that their incompatibility graph $H(X)$ is connected, where the incompatibility graph is constructed by representing  each polymer by a vertex and connecting two vertices if the corresponding polymers are incompatible. The cluster expansion gives:
\begin{align*}
\log\Xi = \sum_{X\in \mathcal{C}} \Psi(X)\, \text{, where }
\Psi(X) := \frac{1}{\vert X \vert !}\left( \sum_{\substack{G\subseteq H_X\\\text{$G$ connected, spanning} } } (-1)^{\vert E(G)\vert}\right)\left( \prod_{\xi\in X} w(\xi)\right)\,,
\end{align*}
and $\mathcal{C}$ is the set of all clusters. A sufficient condition for the convergence of the cluster expansion was given by Koteck{\'y} and Preiss \cite{kotecky_cluster_1986}. We  prove this condition in Lemma \ref{lem:kotecky_preiss_potts} and use the cluster expansion to separate the volume and surface contributions to the partition function, as done in \cite{friedli_cluster_2017, cannon_local_2019}. }

%We now state our results, while deferring most of the proofs to the appendix for brevity.

%\vskip.1in
%\noindent\underbar{\bf The Polymer Model:} \ 
\subparagraph*{The Polymer Model:}
%To define a polymer model, we must first define a set of polymers $\mathcal{L}$. %Each polymer $\xi \in \mathcal{L}$ has a weight $w(\xi)$, and each pair of polymers in $\mathcal{L}$ is either compatible or incompatible.
We say two edges of $G_\Delta$ are adjacent if they share a common vertex. A polymer $\xi$ in $\mathcal{L}$ is defined to be a labeling $\xi: E(G_\Delta) \to \{0,1,\ldots,q-1\}$ of the edges of~$G_\Delta$ such that the set $E(\xi)$, defined to be the edges of $G_\Delta$ with a non-zero label in $\xi$, is non-empty and connected under the above notion of adjacency.
The labeling must also be \emph{consistent}, as defined below.

\begin{definition}[Consistent Labeling]
\label{defn:consistent}
We fix a canonical direction for each edge in $G_\Delta$. This direction can be arbitrarily defined, so for simplicity we say that the edge is oriented toward the vertex with the larger $x$, followed by $y$ coordinate, where the coordinate axes are oriented such that the $x$ coordinate increases from left to right and the $y$ coordinate increases from top to bottom.

We define labels $\xi : E(G_\Delta) \to \{0,1,\ldots,q-1\}$. These edge labels represent ``flows'' in our defined canonical direction, modulo $q$. In other words, when summing up the total flow along a walk on $G_\Delta$, for each edge $e$ on the walk, we add the label $\xi(e)$ to the sum if the walk is in the canonical direction of the edge, and $q-\xi(e)$ if the walk is in the opposite direction. We call an assignment of labels \emph{consistent} if every closed walk on $G_\Delta$ has a total flow summing to $0$ modulo $q$.

\end{definition}

Consider a fixed boundary $\mathcal{P}$ as defined above, corresponding to some configuration in~$\Omega$. For a polymer $\xi$, denote by $V(\xi)$ the set of vertices incident to an edge with a non-zero label in $\xi$. We say a polymer $\xi$ is within $\mathcal{P}$ if $V(\xi) \subseteq \Lambda_\mathcal{P}$.
As described earlier, the set~$\polymerset$ of polymer configurations corresponding to $\mathcal{P}$ is the set of all subsets of $\mathcal{L}$ of pairwise compatible polymers within $\mathcal{P}$. The weight $w(\tau)$ of a configuration $\tau \in \polymerset$ is the product of the weights of its constituent polymers.

Two polymers $\xi_1$, $\xi_2$ are incompatible if there are edges $e_1 \in E(\xi_1)$ and $e_2 \in E(\xi_2)$ such that $e_1$ and $e_2$ are adjacent. The weight of a polymer $\xi$ is defined as $w(\xi) := \gamma^{-\abs{E(\xi)}}$, in the Potts model, and $w(\xi) := \prod_{e \in E(\xi)}\gamma^{\cos\parens{\frac{2\pi}{q}\xi(e)}-1}$ in the clock model.

\begin{figure}
\begin{subfigure}[b]{.14\textwidth}
\begin{center}\begin{tikzpicture}[x=0.6cm,y=0.6cm]
\draw[black, line width=0.3mm] (1,4) circle (0.12);
\draw[fill=black] (1,3) circle (0.12);
\draw[line width=0.5mm] (1.12728,2.12728) -- (0.872721,1.87272);
\draw[line width=0.5mm] (0.872721,2.12728) -- (1.12728,1.87272);
\draw[line width=0.3mm] (1.14549,0.916) -- (1,1.168);
\draw[line width=0.3mm] (1,1.168) -- (0.854508,0.916);
\draw[line width=0.3mm] (0.854508,0.916) -- (1.14549,0.916);
\node[align=left] at (1.5,5) {\scriptsize Orientations};
\node[align=left] at (1.7,4) {\scriptsize = 0};
\node[align=left] at (1.7,3) {\scriptsize = 1};
\node[align=left] at (1.7,2) {\scriptsize = 2};
\node[align=left] at (1.7,1) {\scriptsize = 3};
\end{tikzpicture}
\end{center}
\vskip 0pt
\end{subfigure}
\begin{subfigure}[b]{.4\textwidth}
\begin{center}\begin{tikzpicture}[x=0.6cm,y=0.6cm]
\draw[lightgray] (0,-4) -- (1.73205,-5);
\draw[lightgray] (0,-1) -- (1.73205,0);
\draw[lightgray] (0,-3) -- (3.4641,-5);
\draw[lightgray] (0,-2) -- (3.4641,0);
\draw[lightgray] (0,-2) -- (5.19615,-5);
\draw[lightgray] (0,-3) -- (5.19615,0);
\draw[lightgray] (0,-1) -- (6.9282,-5);
\draw[lightgray] (0,-4) -- (6.9282,0);
\draw[lightgray] (0,0) -- (8.66025,-5);
\draw[lightgray] (0,-5) -- (8.66025,0);
\draw[lightgray] (1.73205,0) -- (8.66025,-4);
\draw[lightgray] (1.73205,-5) -- (8.66025,-1);
\draw[lightgray] (3.4641,0) -- (8.66025,-3);
\draw[lightgray] (3.4641,-5) -- (8.66025,-2);
\draw[lightgray] (5.19615,0) -- (8.66025,-2);
\draw[lightgray] (5.19615,-5) -- (8.66025,-3);
\draw[lightgray] (6.9282,0) -- (8.66025,-1);
\draw[lightgray] (6.9282,-5) -- (8.66025,-4);
\draw[lightgray] (8.66025,0) -- (8.66025,0);
\draw[lightgray] (8.66025,-5) -- (8.66025,-5);
\draw[lightgray] (0,0) -- (0,-5);
\draw[lightgray] (0.866025,-0.5) -- (0.866025,-4.5);
\draw[lightgray] (1.73205,0) -- (1.73205,-5);
\draw[lightgray] (2.59808,-0.5) -- (2.59808,-4.5);
\draw[lightgray] (3.4641,0) -- (3.4641,-5);
\draw[lightgray] (4.33013,-0.5) -- (4.33013,-4.5);
\draw[lightgray] (5.19615,0) -- (5.19615,-5);
\draw[lightgray] (6.06218,-0.5) -- (6.06218,-4.5);
\draw[lightgray] (6.9282,0) -- (6.9282,-5);
\draw[lightgray] (7.79423,-0.5) -- (7.79423,-4.5);
\draw[lightgray] (8.66025,0) -- (8.66025,-5);
\draw[black, line width=0.3mm] (0.866025,-0.5) circle (0.12);
\draw[black, line width=0.3mm] (0.866025,-1.5) circle (0.12);
\draw[black, line width=0.3mm] (0.866025,-3.5) circle (0.12);
\draw[black, line width=0.3mm] (1.73205,0) circle (0.12);
\draw[black, line width=0.3mm] (1.73205,-1) circle (0.12);
\draw[black, line width=0.3mm] (1.73205,-2) circle (0.12);
\draw[black, line width=0.3mm] (1.73205,-3) circle (0.12);
\draw[black, line width=0.3mm] (1.73205,-4) circle (0.12);
\draw[black, line width=0.3mm] (2.59808,-0.5) circle (0.12);
\draw[black, line width=0.3mm] (2.59808,-1.5) circle (0.12);
\draw[black, line width=0.3mm] (2.59808,-2.5) circle (0.12);
\draw[line width=0.3mm] (2.74357,-3.584) -- (2.59808,-3.332);
\draw[line width=0.3mm] (2.59808,-3.332) -- (2.45258,-3.584);
\draw[line width=0.3mm] (2.45258,-3.584) -- (2.74357,-3.584);
\draw[black, line width=0.3mm] (2.59808,-4.5) circle (0.12);
\draw[black, line width=0.3mm] (3.4641,0) circle (0.12);
\draw[fill=black] (3.4641,-1) circle (0.12);
\draw[black, line width=0.3mm] (3.4641,-2) circle (0.12);
\draw[black, line width=0.3mm] (3.4641,-3) circle (0.12);
\draw[black, line width=0.3mm] (3.4641,-4) circle (0.12);
\draw[black, line width=0.3mm] (4.33013,-0.5) circle (0.12);
\draw[black, line width=0.3mm] (4.33013,-1.5) circle (0.12);
\draw[black, line width=0.3mm] (4.33013,-2.5) circle (0.12);
\draw[black, line width=0.3mm] (4.33013,-3.5) circle (0.12);
\draw[black, line width=0.3mm] (4.33013,-4.5) circle (0.12);
\draw[black, line width=0.3mm] (5.19615,0) circle (0.12);
\draw[black, line width=0.3mm] (5.19615,-1) circle (0.12);
\draw[line width=0.5mm] (5.32343,-1.87272) -- (5.06887,-2.12728);
\draw[line width=0.5mm] (5.06887,-1.87272) -- (5.32343,-2.12728);
\draw[line width=0.5mm] (5.32343,-2.87272) -- (5.06887,-3.12728);
\draw[line width=0.5mm] (5.06887,-2.87272) -- (5.32343,-3.12728);
\draw[black, line width=0.3mm] (5.19615,-4) circle (0.12);
\draw[black, line width=0.3mm] (6.06218,-0.5) circle (0.12);
\draw[fill=black] (6.06218,-1.5) circle (0.12);
\draw[line width=0.5mm] (6.18946,-2.37272) -- (5.9349,-2.62728);
\draw[line width=0.5mm] (5.9349,-2.37272) -- (6.18946,-2.62728);
\draw[line width=0.5mm] (6.18946,-3.37272) -- (5.9349,-3.62728);
\draw[line width=0.5mm] (5.9349,-3.37272) -- (6.18946,-3.62728);
\draw[black, line width=0.3mm] (6.06218,-4.5) circle (0.12);
\draw[black, line width=0.3mm] (6.9282,0) circle (0.12);
\draw[line width=0.3mm] (7.07369,-1.084) -- (6.9282,-0.832);
\draw[line width=0.3mm] (6.9282,-0.832) -- (6.78271,-1.084);
\draw[line width=0.3mm] (6.78271,-1.084) -- (7.07369,-1.084);
\draw[fill=black] (6.9282,-2) circle (0.12);
\draw[fill=black] (6.9282,-3) circle (0.12);
\draw[black, line width=0.3mm] (6.9282,-4) circle (0.12);
\draw[black, line width=0.3mm] (7.79423,-0.5) circle (0.12);
\draw[black, line width=0.3mm] (7.79423,-1.5) circle (0.12);
\draw[black, line width=0.3mm] (7.79423,-2.5) circle (0.12);
\draw[black, line width=0.3mm] (7.79423,-3.5) circle (0.12);
\draw[black, line width=0.3mm] (8.66025,-2) circle (0.12);
\draw[black, line width=0.3mm] (8.66025,-3) circle (0.12);
\end{tikzpicture}
\end{center}
\end{subfigure}\begin{subfigure}[b]{.4\textwidth}
\begin{center}\begin{tikzpicture}[x=0.6cm,y=0.6cm]
\draw[lightgray] (0,-4) -- (1.73205,-5);
\draw[lightgray] (0,-1) -- (1.73205,0);
\draw[lightgray] (0,-3) -- (3.4641,-5);
\draw[lightgray] (0,-2) -- (3.4641,0);
\draw[lightgray] (0,-2) -- (5.19615,-5);
\draw[lightgray] (0,-3) -- (5.19615,0);
\draw[lightgray] (0,-1) -- (6.9282,-5);
\draw[lightgray] (0,-4) -- (6.9282,0);
\draw[lightgray] (0,0) -- (8.66025,-5);
\draw[lightgray] (0,-5) -- (8.66025,0);
\draw[lightgray] (1.73205,0) -- (8.66025,-4);
\draw[lightgray] (1.73205,-5) -- (8.66025,-1);
\draw[lightgray] (3.4641,0) -- (8.66025,-3);
\draw[lightgray] (3.4641,-5) -- (8.66025,-2);
\draw[lightgray] (5.19615,0) -- (8.66025,-2);
\draw[lightgray] (5.19615,-5) -- (8.66025,-3);
\draw[lightgray] (6.9282,0) -- (8.66025,-1);
\draw[lightgray] (6.9282,-5) -- (8.66025,-4);
\draw[lightgray] (8.66025,0) -- (8.66025,0);
\draw[lightgray] (8.66025,-5) -- (8.66025,-5);
\draw[lightgray] (0,0) -- (0,-5);
\draw[lightgray] (0.866025,-0.5) -- (0.866025,-4.5);
\draw[lightgray] (1.73205,0) -- (1.73205,-5);
\draw[lightgray] (2.59808,-0.5) -- (2.59808,-4.5);
\draw[lightgray] (3.4641,0) -- (3.4641,-5);
\draw[lightgray] (4.33013,-0.5) -- (4.33013,-4.5);
\draw[lightgray] (5.19615,0) -- (5.19615,-5);
\draw[lightgray] (6.06218,-0.5) -- (6.06218,-4.5);
\draw[lightgray] (6.9282,0) -- (6.9282,-5);
\draw[lightgray] (7.79423,-0.5) -- (7.79423,-4.5);
\draw[lightgray] (8.66025,0) -- (8.66025,-5);
\draw[black] (0.866025,-0.5) circle (0.096);
\draw[black] (0.866025,-1.5) circle (0.096);
\draw[black] (0.866025,-3.5) circle (0.096);
\draw[black] (1.73205,0) circle (0.096);
\draw[black] (1.73205,-1) circle (0.096);
\draw[black] (1.73205,-2) circle (0.096);
\draw[black] (1.73205,-3) circle (0.096);
\draw[black] (1.73205,-4) circle (0.096);
\draw[black] (2.59808,-0.5) circle (0.096);
\draw[black] (2.59808,-1.5) circle (0.096);
\draw[black] (2.59808,-2.5) circle (0.096);
\draw[black] (2.59808,-3.5) circle (0.096);
\draw[black] (2.59808,-4.5) circle (0.096);
\draw[black] (3.4641,0) circle (0.096);
\draw[black] (3.4641,-1) circle (0.096);
\draw[black] (3.4641,-2) circle (0.096);
\draw[black] (3.4641,-3) circle (0.096);
\draw[black] (3.4641,-4) circle (0.096);
\draw[black] (4.33013,-0.5) circle (0.096);
\draw[black] (4.33013,-1.5) circle (0.096);
\draw[black] (4.33013,-2.5) circle (0.096);
\draw[black] (4.33013,-3.5) circle (0.096);
\draw[black] (4.33013,-4.5) circle (0.096);
\draw[black] (5.19615,0) circle (0.096);
\draw[black] (5.19615,-1) circle (0.096);
\draw[black] (5.19615,-2) circle (0.096);
\draw[black] (5.19615,-3) circle (0.096);
\draw[black] (5.19615,-4) circle (0.096);
\draw[black] (6.06218,-0.5) circle (0.096);
\draw[black] (6.06218,-1.5) circle (0.096);
\draw[black] (6.06218,-2.5) circle (0.096);
\draw[black] (6.06218,-3.5) circle (0.096);
\draw[black] (6.06218,-4.5) circle (0.096);
\draw[black] (6.9282,0) circle (0.096);
\draw[black] (6.9282,-1) circle (0.096);
\draw[black] (6.9282,-2) circle (0.096);
\draw[black] (6.9282,-3) circle (0.096);
\draw[black] (6.9282,-4) circle (0.096);
\draw[black] (7.79423,-0.5) circle (0.096);
\draw[black] (7.79423,-1.5) circle (0.096);
\draw[black] (7.79423,-2.5) circle (0.096);
\draw[black] (7.79423,-3.5) circle (0.096);
\draw[black] (8.66025,-2) circle (0.096);
\draw[black] (8.66025,-3) circle (0.096);
\draw[black,-{Stealth[length=1.3mm,width=1.3mm]}] (1.77535,-3.025) -- (2.55477,-3.475);
\node[align=center] at (2.16506,-3.4) {\tiny 3};
\draw[black,-{Stealth[length=1.3mm,width=1.3mm]}] (1.77535,-3.975) -- (2.55477,-3.525);
\node[align=center] at (2.16506,-3.9) {\tiny 3};
\draw[black,-{Stealth[length=1.3mm,width=1.3mm]}] (2.64138,-0.525) -- (3.4208,-0.975);
\node[align=center] at (3.03109,-0.9) {\tiny 1};
\draw[black,-{Stealth[length=1.3mm,width=1.3mm]}] (2.64138,-1.475) -- (3.4208,-1.025);
\node[align=center] at (3.03109,-1.4) {\tiny 1};
\draw[black,-{Stealth[length=1.3mm,width=1.3mm]}] (2.59808,-2.55) -- (2.59808,-3.45);
\node[align=center] at (2.72798,-3) {\tiny 3};
\draw[black,-{Stealth[length=1.3mm,width=1.3mm]}] (2.59808,-3.55) -- (2.59808,-4.45);
\node[align=center] at (2.72798,-4) {\tiny 1};
\draw[black,-{Stealth[length=1.3mm,width=1.3mm]}] (2.64138,-3.475) -- (3.4208,-3.025);
\node[align=center] at (3.03109,-3.4) {\tiny 1};
\draw[black,-{Stealth[length=1.3mm,width=1.3mm]}] (2.64138,-3.525) -- (3.4208,-3.975);
\node[align=center] at (3.03109,-3.9) {\tiny 1};
\draw[black,-{Stealth[length=1.3mm,width=1.3mm]}] (3.4641,-0.05) -- (3.4641,-0.95);
\node[align=center] at (3.59401,-0.5) {\tiny 1};
\draw[black,-{Stealth[length=1.3mm,width=1.3mm]}] (3.4641,-1.05) -- (3.4641,-1.95);
\node[align=center] at (3.59401,-1.5) {\tiny 3};
\draw[black,-{Stealth[length=1.3mm,width=1.3mm]}] (3.5074,-0.975) -- (4.28683,-0.525);
\node[align=center] at (3.89711,-0.9) {\tiny 3};
\draw[black,-{Stealth[length=1.3mm,width=1.3mm]}] (3.5074,-1.025) -- (4.28683,-1.475);
\node[align=center] at (3.89711,-1.4) {\tiny 3};
\draw[black,-{Stealth[length=1.3mm,width=1.3mm]}] (4.37343,-1.525) -- (5.15285,-1.975);
\node[align=center] at (4.76314,-1.9) {\tiny 2};
\draw[black,-{Stealth[length=1.3mm,width=1.3mm]}] (4.37343,-2.475) -- (5.15285,-2.025);
\node[align=center] at (4.76314,-2.4) {\tiny 2};
\draw[black,-{Stealth[length=1.3mm,width=1.3mm]}] (4.37343,-2.525) -- (5.15285,-2.975);
\node[align=center] at (4.76314,-2.9) {\tiny 2};
\draw[black,-{Stealth[length=1.3mm,width=1.3mm]}] (4.37343,-3.475) -- (5.15285,-3.025);
\node[align=center] at (4.76314,-3.4) {\tiny 2};
\draw[black,-{Stealth[length=1.3mm,width=1.3mm]}] (5.19615,-1.05) -- (5.19615,-1.95);
\node[align=center] at (5.32606,-1.5) {\tiny 2};
\draw[black,-{Stealth[length=1.3mm,width=1.3mm]}] (5.23945,-1.025) -- (6.01888,-1.475);
\node[align=center] at (5.62917,-1.4) {\tiny 1};
\draw[black,-{Stealth[length=1.3mm,width=1.3mm]}] (5.23945,-1.975) -- (6.01888,-1.525);
\node[align=center] at (5.62917,-1.9) {\tiny 3};
\draw[black,-{Stealth[length=1.3mm,width=1.3mm]}] (5.19615,-3.05) -- (5.19615,-3.95);
\node[align=center] at (5.32606,-3.5) {\tiny 2};
\draw[black,-{Stealth[length=1.3mm,width=1.3mm]}] (5.23945,-3.975) -- (6.01888,-3.525);
\node[align=center] at (5.62917,-3.9) {\tiny 2};
\draw[black,-{Stealth[length=1.3mm,width=1.3mm]}] (6.06218,-0.55) -- (6.06218,-1.45);
\node[align=center] at (6.19208,-1) {\tiny 1};
\draw[black,-{Stealth[length=1.3mm,width=1.3mm]}] (6.10548,-0.525) -- (6.8849,-0.975);
\node[align=center] at (6.49519,-0.9) {\tiny 3};
\draw[black,-{Stealth[length=1.3mm,width=1.3mm]}] (6.06218,-1.55) -- (6.06218,-2.45);
\node[align=center] at (6.19208,-2) {\tiny 1};
\draw[black,-{Stealth[length=1.3mm,width=1.3mm]}] (6.10548,-1.475) -- (6.8849,-1.025);
\node[align=center] at (6.49519,-1.4) {\tiny 2};
\draw[black,-{Stealth[length=1.3mm,width=1.3mm]}] (6.10548,-2.475) -- (6.8849,-2.025);
\node[align=center] at (6.49519,-2.4) {\tiny 3};
\draw[black,-{Stealth[length=1.3mm,width=1.3mm]}] (6.10548,-2.525) -- (6.8849,-2.975);
\node[align=center] at (6.49519,-2.9) {\tiny 3};
\draw[black,-{Stealth[length=1.3mm,width=1.3mm]}] (6.06218,-3.55) -- (6.06218,-4.45);
\node[align=center] at (6.19208,-4) {\tiny 2};
\draw[black,-{Stealth[length=1.3mm,width=1.3mm]}] (6.10548,-3.475) -- (6.8849,-3.025);
\node[align=center] at (6.49519,-3.4) {\tiny 3};
\draw[black,-{Stealth[length=1.3mm,width=1.3mm]}] (6.10548,-3.525) -- (6.8849,-3.975);
\node[align=center] at (6.49519,-3.9) {\tiny 2};
\draw[black,-{Stealth[length=1.3mm,width=1.3mm]}] (6.9282,-0.05) -- (6.9282,-0.95);
\node[align=center] at (7.05811,-0.5) {\tiny 3};
\draw[black,-{Stealth[length=1.3mm,width=1.3mm]}] (6.9282,-1.05) -- (6.9282,-1.95);
\node[align=center] at (7.05811,-1.5) {\tiny 2};
\draw[black,-{Stealth[length=1.3mm,width=1.3mm]}] (6.9715,-0.975) -- (7.75093,-0.525);
\node[align=center] at (7.36122,-0.9) {\tiny 1};
\draw[black,-{Stealth[length=1.3mm,width=1.3mm]}] (6.9715,-1.025) -- (7.75093,-1.475);
\node[align=center] at (7.36122,-1.4) {\tiny 1};
\draw[black,-{Stealth[length=1.3mm,width=1.3mm]}] (6.9715,-1.975) -- (7.75093,-1.525);
\node[align=center] at (7.36122,-1.9) {\tiny 3};
\draw[black,-{Stealth[length=1.3mm,width=1.3mm]}] (6.9715,-2.025) -- (7.75093,-2.475);
\node[align=center] at (7.36122,-2.4) {\tiny 3};
\draw[black,-{Stealth[length=1.3mm,width=1.3mm]}] (6.9282,-3.05) -- (6.9282,-3.95);
\node[align=center] at (7.05811,-3.5) {\tiny 3};
\draw[black,-{Stealth[length=1.3mm,width=1.3mm]}] (6.9715,-2.975) -- (7.75093,-2.525);
\node[align=center] at (7.36122,-2.9) {\tiny 3};
\draw[black,-{Stealth[length=1.3mm,width=1.3mm]}] (6.9715,-3.025) -- (7.75093,-3.475);
\node[align=center] at (7.36122,-3.4) {\tiny 3};
\end{tikzpicture}
\end{center}
\end{subfigure}
\caption{Particle configuration in $\Omegaref$, and its corresponding polymer configuration in $\polymerset$ (with two polymers).}
\label{fig:polymermodel}
\end{figure}

\begin{lemma}
\label{lem:polymer_bijection_potts}
There is a bijection $\phi$ between $\Omegaref$ and $\polymerset$ with the property that for any $\sigma \in \Omegaref$, we have $w(\sigma) = (\lambda\gamma)^{-p(\sigma)}w(\phi(\sigma))$.
\end{lemma}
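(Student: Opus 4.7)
The plan is to define $\phi$ via the flow labeling induced by orientation differences, verify that the resulting labels decompose into valid consistent polymers, and then recover the inverse by integration from the boundary.

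Given $\sigma \in \Omegaref$, assign to each edge $e = (u,v) \in E(\Lambda_\mathcal{P})$ oriented canonically from $u$ to $v$ the label $\hat\xi_\sigma(e) := (\theta(v) - \theta(u)) \bmod q$, and assign $0$ to every edge outside $E(\Lambda_\mathcal{P})$. The non-zero support of $\hat\xi_\sigma$ is exactly the heterogeneous-edge set $E_H(\sigma)$ inside $\Lambda_\mathcal{P}$. Decompose $E_H(\sigma)$ into connected components $C_1, \ldots, C_k$ under the edge-adjacency relation (edges sharing an endpoint) and define $\phi(\sigma) := \{\xi_1, \ldots, \xi_k\}$, where each $\xi_i$ carries the restriction of $\hat\xi_\sigma$ to $C_i$ and $0$ elsewhere.

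To see $\phi(\sigma) \in \polymerset$, each $\xi_i$ clearly has a non-empty connected support contained in $\Lambda_\mathcal{P}$. Pairwise compatibility follows from the observation that if $V(\xi_i) \cap V(\xi_j) \neq \emptyset$ for $i \neq j$, then edges of $C_i$ and $C_j$ incident to a common vertex would be edge-adjacent and hence in a single component, a contradiction. For consistency of each $\xi_i$, it suffices to check that its flow vanishes modulo $q$ around every triangular face and around a generating set of non-contractible cycles of the torus. For a face with vertices $v_1, v_2, v_3$, the number of heterogeneous edges is $0$, $2$, or $3$---never exactly $1$, since $\theta_1 \neq \theta_2$ together with $\theta_1 = \theta_3 = \theta_2$ is impossible---and whenever $\geq 2$ heterogeneous edges appear on a face they pairwise share vertices on that face and hence all lie in the same component. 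Consequently the flow of $\xi_i$ around any face is either trivially $0$ or equals the telescoping sum $\sum_j(\theta(v_{j+1}) - \theta(v_j)) \equiv 0 \pmod{q}$. For non-contractible cycles, the hypothesis $N \geq (n+1)^2$ forces $\Lambda_\mathcal{P}$ to fit in a non-wrapping sub-region of the torus and thus be contractible; every non-trivial homology class therefore admits a representative lying entirely outside $E(\Lambda_\mathcal{P})$ on which $\xi_i$ vanishes identically, and the difference between any given cycle and such a representative is a $2$-boundary whose contribution to $\xi_i$'s flow is a sum of face flows, already shown to vanish.

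For $\phi^{-1}$, given $\tau \in \polymerset$ form $\hat\xi_\tau := \sum_{\xi \in \tau} \xi$, which is pointwise well-defined by pairwise compatibility and consistent as a sum of consistent labelings. Fix $\theta \equiv 0$ on $\mathcal{P}$ and, for any $v \in \Lambda_\mathcal{P} \setminus \mathcal{P}$, define $\theta(v)$ by integrating $\hat\xi_\tau$ along any path in $\Lambda_\mathcal{P}$ from a boundary vertex to $v$; consistency makes this independent of the path, yielding $\phi^{-1}(\tau) \in \Omegaref$, and routine checking gives $\phi^{-1}\circ\phi = \mathrm{id}$ and $\phi\circ\phi^{-1} = \mathrm{id}$. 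The weight identity then follows from $\sum_i |E(\xi_i)| = |E_H(\sigma)| = h(\sigma)$: we have $w(\phi(\sigma)) = \prod_i \gamma^{-|E(\xi_i)|} = \gamma^{-h(\sigma)}$, so $w(\sigma) = (\lambda\gamma)^{-p(\sigma)}\gamma^{-h(\sigma)} = (\lambda\gamma)^{-p(\sigma)}w(\phi(\sigma))$. The main obstacle is establishing consistency of each individual polymer: the face part hinges on the triangular-lattice fact that no face has exactly one heterogeneous edge and that multiple heterogeneous edges on a face always share a vertex, while the non-contractible part relies on the topological observation that a region of $n$ particles cannot wrap around a $\sqrt{N} \times \sqrt{N}$ torus when $N \geq (n+1)^2$.
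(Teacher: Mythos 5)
Your map $\phi$, its decomposition of the heterogeneous edge set into edge-connected components, the inverse by integrating from the boundary, and the weight identity all match the paper's construction. Where you take a genuinely different route is in establishing consistency of the individual polymers $\xi_i$. The paper verifies consistency of the global labeling $\hat\xi_\sigma$ by telescoping $\theta_\sigma$ around closed walks and then asserts that each component $\xi_i$ inherits consistency; but that inference is not automatic, since a closed walk may traverse heterogeneous edges belonging to several different components, in which case the $\xi_i$-flow is no longer a telescoping sum (and on a general graph the implication fails outright). Your argument isolates precisely what makes it true here: because any two edges of a triangular face share a vertex, all heterogeneous edges of a given face lie in a single edge-component, and no face has exactly one heterogeneous edge, so each $\xi_i$'s face flow coincides with that of $\hat\xi_\sigma$ and hence vanishes; the hypothesis $N \geq (n+1)^2$ then disposes of the non-contractible cycles of the torus homologically. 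Your route is thus more explicit about exactly the step the paper's proof glosses over, which is a real improvement.

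One point worth tightening on your side: well-definedness of $\theta(v)$ in the inverse requires independence not only of the path from a fixed boundary vertex, but also of the choice of starting boundary vertex. This holds because any two vertices of $\mathcal{P}$ are joined by a path lying entirely outside $\Lambda_\mathcal{P}$, on which $\hat\xi_\tau$ vanishes identically since every $\xi \in \tau$ satisfies $V(\xi) \subseteq V(\Lambda_\mathcal{P})$; consistency then forces the flow between any two boundary vertices to be zero along every path. This is easy to supply, but it is not covered by ``independence of the path'' alone.
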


\markred{
\begin{proof}
Consider a particle configuration $\sigma \in \Omegaref$. Let $\theta_\sigma: V(\Lambda_\mathcal{P}) \to \{0,1,\ldots,q-1\}$ be the assignment of orientations to the particles of the configuration.
We define a labeling $\xi^*: E(\Lambda_\mathcal{P}) \to \{0,1,\ldots,q-1\}$ as follows: For an edge $(u,v) \in E(\Lambda_\mathcal{P})$, where the canonical direction of the edge as defined in Definition \ref{defn:consistent} is from $u$ to $v$, define $\xi^*((u,v)) := ( \theta_\sigma(v) - \theta_\sigma(u) ) \mod q$. For $(u,v) \in E(G_\Delta)\setminus E(\Lambda_\mathcal{P})$, we set $\xi^*((u,v)) = 0$

We observe that $\xi^*$ will often not correspond to a single polymer. Denote by $E(\xi^*)$ the edges of $G_\Delta$ with a non-zero label in $\xi^*$. We partition $E(\xi^*)$ into connected components by our notion of edge adjacency defined above. Each of these connected components corresponds to a polymer. More precisely, the polymer $\xi$ corresponding to a connected component with edge set $E$ is defined as
\begin{align*}
\xi(e) =
\begin{cases}
\xi^*(e) &\text{if $e \in E$}\\
0 &\text{otherwise}
\end{cases}.
\end{align*}
Clearly, $E(\xi) = E$ and $E(\xi)$ is connected. For any closed walk $v_1, v_2, \ldots, v_k = v_1$ over vertices in $G_\Delta$, if $v_i \in V(\Lambda_\mathcal{P})$ for all $i \in \{1,2,\ldots,k\}$ the total cost along this path would be equal to $\sum_{i=1}^{k-1} \theta_\sigma(v_{i+1}) - \theta_\sigma(v_{i})$ modulo $q$, which sums to $0$. If there are vertices $v_i \not\in V(\Lambda_\mathcal{P})$, we can split the closed walk into smaller walks, each starting and ending on the vertices incident to the boundary edges of $\mathcal{P}$. Each of these walks would have a total cost summing to zero for the same reason as before, as all vertices on the boundary have orientation $0$ in $\sigma$.

Thus, $\xi$ is consistent and is hence a polymer. This gives us a set of polymers $\tau := \{\xi_1,\cdots,\xi_m\}$, one from each of the connected components. These polymers are pairwise compatible as they are created from separate edge components. It is also clear that $V(\xi_i) \subseteq V(\Lambda_\mathcal{P})$ for all $i \in \{1,\ldots,m\}$.
This construction gives us a function $\phi$ from configurations in $\Omegaref$ to polymer configurations in $\polymerset$. It is also simple to check that for the Potts model,
\begin{align*}
    w(\sigma) = (\lambda\gamma)^{-p(\sigma)}\gamma^{-h(\sigma)} = (\lambda\gamma)^{-p(\sigma)}\prod_{\xi \in \tau}\gamma^{-\abs{E_\xi}} =(\lambda\gamma)^{-p(\sigma)}w(\tau),
\end{align*}
and similarly for the clock model,
\begin{align*}
    w(\sigma) = (\lambda\gamma)^{-p(\sigma)}\prod_{(i,j)}\gamma^{-d_{ij}} = (\lambda\gamma)^{-p(\sigma)}\prod_{\xi \in \tau}\prod_{e \in E_\xi}\gamma^{\cos\parens{\frac{2\pi}{q}\xi(e)}-1} =(\lambda\gamma)^{-p(\sigma)}w(\tau).
\end{align*}

To show that $\phi$ is injective, consider any two particle configurations $\sigma_1, \sigma_2 \in \Omegaref$. If $\sigma_1 \neq \sigma_2$, there must be a vertex $v$ of $\Lambda_\mathcal{P}$ such that $\theta_{\sigma_1}(v) \neq \theta_{\sigma_2}(v)$. Let $u$ be a particle on the boundary $\mathcal{P}$. As $\theta_{\sigma_1}(u) = 1 = \theta_{\sigma_2}(u)$ and there is a path from $u$ to $v$ in $\Lambda_\mathcal{P}$, there must be two adjacent vertices $u', v'$ of $\Lambda_\mathcal{P}$ such that $\theta_{\sigma_1}(u') = \theta_{\sigma_2}(u')$ but $\theta_{\sigma_1}(v') \neq \theta_{\sigma_2}(v')$. This necessarily means the labeling $\xi^*$ will differ between $\sigma_1$ and $\sigma_2$ in the construction, and hence $\phi(\sigma_1)$ and $\phi(\sigma_2)$ will differ on at least one polymer, so $\phi(\sigma_1) \neq \phi(\sigma_2)$.

To show that $\phi$ is surjective, Take any polymer configuration $\tau = \{\xi_1,\xi_2,\ldots,\xi_m\} \in \polymerset$. We define a labeling $\xi^* := \sum_{i = 1}^{m} \xi_i$, which will also be consistent as each $\xi \in \tau$ is consistent. Set the orientations of all particles on the boundary $\mathcal{P}$ to $0$. For any particle $v$ in $\Lambda_\mathcal{P}$, there is a path $\{u = v_1,v_2,\ldots,v_k = v\}$ from an arbitrary $u$ on the boundary $\mathcal{P}$ to $v$. To compute the orientation of $v$, as in Definition \ref{defn:consistent}, we sum the labels on the edges along the path, adding $\xi^*(e)$ if the walk is in the canonical direction of the edge, and $q-\xi^*(e)$ if the walk is in the opposite direction. This sum is computed modulo $q$. This computed orientation of $v$ is independent of the path chosen as $\xi^*$ is consistent.

This gives us a mapping of the vertices of $\Lambda_\mathcal{P}$ to orientations, and hence a configuration $\sigma \in \Omegaref$. It can be easily verified that applying the construction on this configuration will give us the initial labeling $\xi^*$, and as there is one unique way to partition $\xi$ into edge components, we necessarily have $\phi(\sigma) = \tau$.
\end{proof}
}

The map $\phi$ simply encodes the orientations of particles in a configuration $\sigma \in \Omegaref$ as differences between orientations on the edges of $G_\Delta$. This is illustrated in Figure \ref{fig:polymermodel}. The full version of the paper gives a full description of this mapping and a proof that it is indeed a bijection.
From Lemma \ref{lem:polymer_bijection_potts}, we have
\begin{align*}
    w(\Omegaref) =
    \sum_{\sigma \in \Omegaref} \left(\lambda \gamma \right)^{-\abs{\mathcal{P}}} w(\phi(\sigma)) =
    \sum_{\tau \in \polymerset} \left(\lambda \gamma \right)^{-\abs{\mathcal{P}}} w(\tau) =
    \left(\lambda \gamma \right)^{-\abs{\mathcal{P}}}  \Xi_\mathcal{P},
\end{align*}
\noindent where $\Xi_\mathcal{P}$ is the partition function for the set of polymer configurations $\polymerset$:
\begin{align*}
    \Xi_\mathcal{P} := \sum_{\tau \in \polymerset} w(\tau) = \sum_{\tau \in \polymerset} \prod_{\xi \in \tau} w(\gamma).
\end{align*}

\subparagraph*{The Potts Model:}
From now, our analysis will be specific to the Potts model. The clock model will be discussed in Section \ref{section:clockmodelanalysis}.
The following Lemmas and proofs are slight variations of those used in \cite{cannon_local_2019}.

\begin{lemma}
\label{lem:kotecky_preiss_potts}
For any polymer $\xi \in \mathcal{L}$, whenever $\gamma > 29.3(q-1)$, we have for $c= 0.0001$, 
\[ \sum_{\substack{\xi'\in\mathcal{L} \\ \xi' \nsim \xi} } w(\xi') \exp(c\vert V(\xi')\vert) \leq c \vert V(\xi)\vert, \] 
where $V(\xi')$ denotes the set of vertices in the polymer $\xi'$, and $\vert V(\xi')\vert$ denotes the number of vertices in $\xi'$.
\end{lemma}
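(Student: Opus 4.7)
The plan is to prove this standard Koteck{\'y}--Preiss bound by charging each incompatible polymer $\xi'$ to a vertex it shares with $\xi$, and then using vertex-transitivity of $G_\Delta$ to reduce to a per-vertex estimate that will be controlled by a convergent geometric series. Concretely, recall that $\xi' \nsim \xi$ means that some edge of $\xi'$ shares a vertex with some edge of $\xi$, i.e.\ $V(\xi') \cap V(\xi) \neq \emptyset$. Fixing any $v \in V(\xi')\cap V(\xi)$ to charge $\xi'$ to, we obtain
\[
\sum_{\xi'\nsim \xi} w(\xi')\,e^{c|V(\xi')|} \;\leq\; \sum_{v\in V(\xi)}\; \sum_{\xi' : v\in V(\xi')} w(\xi')\,e^{c|V(\xi')|} \;=\; |V(\xi)|\cdot S,
\]
where $S$ is the inner sum, which is independent of $v$ by the vertex-transitivity of the toroidal triangular lattice. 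It then suffices to show $S\leq c$.

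Next I would partition the sum defining $S$ by $m = |E(\xi')|$ and by the underlying edge set $E$. Since $E$ is connected in the edge-adjacency sense, the graph $(V(E),E)$ is connected, so $|V(E)| \leq m+1$. For a fixed such $E$, the polymers with $E(\xi')=E$ are exactly the consistent labelings $\xi' : E \to \{1,\ldots,q-1\}$; by Definition~\ref{defn:consistent} these correspond bijectively to potential functions $\theta:V(E)\to\mathbb{Z}/q$ modulo a global shift satisfying $\theta(u)\neq\theta(v)$ for every edge, i.e.\ to proper $q$-colorings of $(V(E),E)$ modulo shift, of which there are at most $(q-1)^{|V(E)|-1}$ by the standard spanning-tree argument. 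Since $w(\xi')=\gamma^{-m}$ in the Potts setting, this yields
\[
S \;\leq\; \sum_{m\geq 1}\, \gamma^{-m}(q-1)^{m}\,e^{c(m+1)}\cdot N_m(v),
\]
where $N_m(v)$ counts connected edge subgraphs of $G_\Delta$ containing $v$ with exactly $m$ edges.

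Finally I would invoke a standard lattice animal bound of the form $N_m(v) \leq C^{m}$ for a constant $C$ depending only on the triangular lattice geometry (the line graph of $G_\Delta$ has maximum degree $10$, so a classical enumeration of connected subgraphs in the line graph, rooted at one of the six edges incident to $v$, gives such a bound). Substituting and summing the geometric series gives
\[
S \;\leq\; e^{c}\sum_{m\geq 1}\!\left(\frac{C(q-1)e^{c}}{\gamma}\right)^{\!m} \;=\; \frac{e^{c}\cdot C(q-1)e^{c}/\gamma}{1 - C(q-1)e^{c}/\gamma}.
\]
With $c=0.0001$ and the hypothesis $\gamma > 29.3(q-1)$, the ratio $C(q-1)e^{c}/\gamma$ is bounded away from $1$ and the whole expression is $\leq c$, completing the proof. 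The main obstacle is obtaining the constant $29.3$: this requires a tight version of the lattice animal estimate $N_m(v) \leq C^{m}$ (rather than a loose $(e\Delta)^{m}$ type bound) tailored to the triangular lattice and its line graph, so that $C(q-1)/\gamma$ comes in just under $1$ for the precise threshold stated. All other steps are routine, being straightforward adaptations of the arguments in~\cite{cannon_local_2019} to the Potts--with--flow polymer representation introduced here.
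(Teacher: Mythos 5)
Your charging step and the reduction to a per-vertex sum $S$ are correct and are exactly what the paper does, and your observation that consistent labelings of a fixed edge set $E$ inject into proper colorings of $(V(E),E)$ modulo shift (giving at most $(q-1)^{|V(E)|-1}$ rather than $(q-1)^{m}$) is in fact sharper than the paper's per-edge bound --- although you immediately discard this gain by writing $(q-1)^{m}$ in your displayed sum. The real problem is the last step. With the lattice-animal constant you sketch (rooted connected subgraphs via the line graph of degree $10$, so $C\approx 10e$, or even the paper's $C=6e$ from \cite{borgs_left_2010}), the ratio $r := C(q-1)e^c/\gamma$ at $\gamma = 29.3(q-1)$ is roughly $6e/29.3\approx 0.56$ (and worse with $10e$), so $\sum_{m\geq 1} r^m$ is order $1$, not $10^{-4}$. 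The sum must be beaten down to $c=0.0001$, and no ``tight'' lattice-animal constant $C$ fixes this, because already $N_1(v)=6$ forces $C\geq 6$ and $6/29.3\approx 0.2$ gives a geometric sum near $0.25$, still four orders of magnitude too large.

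What your argument is missing is the observation --- which does most of the work in the paper --- that the number of polymers $\nu(m,q)$ with $m$ edges through a fixed vertex is $0$ for $m<6$, and $0$ also for $m\in\{7,8,9,13\}$, because a polymer's edge set must support a nonzero consistent (zero-curl) labeling and hence must ``enclose'' at least one vertex: a single edge, a triangle, a tree, etc., all have nonzero flux across some cut and admit no consistent nonzero label. Your count $(q-1)^{|V(E)|-1}\cdot N_m(v)$ wildly overcounts at small $m$ precisely for this reason. The paper therefore computes $\nu(m,q)$ exactly for $m=6,\dots,15$ (e.g.\ $\nu(6,q)=7(q-1)$, $\nu(10,q)=30(q-1)$, etc., each of which is suppressed by $\gamma^{-m}$ to a negligible quantity) and applies the crude $(6e(q-1))^m/2$ bound only to the tail $m\geq 16$, where the $16$th power of $r\approx 0.56$ gets the tail just under $c$. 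To repair your proof you would need to add exactly these two ingredients: (i) $\nu(m,q)=0$ for $m<6$ together with explicit small-$m$ counts, and (ii) starting the geometric tail at $m=16$ rather than $m=1$.
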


The proof is on the lines of that in \cite{cannon_local_2019}. The key part of this proof is the use of an upper bound $\nu(m,q) \leq (6e(q-1))^m/2$ from \cite{borgs_left_2010}, where $\nu(m,q)$ represents the number of polymers with $m$ edges containing some fixed vertex $v \in V(G_\Delta)$. %The proof is given in Appendix~\ref{con_app}.

\markred{
\begin{proof}
Since two polymers $\xi',\xi$ must contain a vertex in common to be incompatible: $\xi'\nsim \xi$, the summation on the left hand side of the sufficiency condition above satisfies:
\begin{align*}
    \sum_{\substack{\xi'\in\mathcal{L} \\ \xi' \nsim \xi} } \gamma^{-\vert E(\xi')\vert} \exp(c\vert V(\xi')\vert) &\leq \sum_{v\,\in V(\xi)} \sum_{\substack{\xi'\in\mathcal{L} \\ v \in  \xi'} } \gamma^{-\vert E(\xi')\vert} \exp(c\vert V(\xi')\vert) 
\end{align*}

To prove the lemma, it will be sufficient prove that the following condition is satisfied for $c=0.0001$:
\begin{equation}
  \sum_{\substack{\xi'\in\mathcal{L} \\ v \in  \xi'} } \gamma^{-\vert E(\xi')\vert} \exp(c\vert V(\xi')\vert) \leq c  \label{kotecky_preiss_simplified}
\end{equation}
when $\gamma > 29.3(q-1)$. 

Equation (\ref{kotecky_preiss_simplified}) implies the lemma:
\begin{align*}
 \sum_{\substack{\xi'\in\mathcal{L} \\ v \in  \xi'} } \gamma^{-\vert E(\xi')\vert} \exp(c\vert V(\xi')\vert) &\leq c \Rightarrow \sum_{v\,\in V(\xi)}\sum_{\substack{\xi'\in\mathcal{L} \\ v \in  \xi'} } \gamma^{-\vert E(\xi')\vert} \exp(c\vert V(\xi')\vert) \leq \sum_{v\,\in V(\xi)} c = c \vert V(\xi)\vert\\
 \Rightarrow \sum_{\substack{\xi'\in\mathcal{L} \\ \xi' \nsim \xi} } \gamma^{-\vert E(\xi')\vert} \exp(c\vert V(\xi')\vert) &\leq \sum_{v\,\in V(\xi)} \sum_{\substack{\xi'\in\mathcal{L} \\ v \in  \xi'} } \gamma^{-\vert E(\xi')\vert} \exp(c\vert V(\xi')\vert) \leq c \vert V(\xi)\vert
\end{align*}

The left-hand side of Equation~(\ref{kotecky_preiss_simplified}) is a sum over all directed edge-weighted connected subgraphs of $G_{\Delta}$ containing a given vertex. Let the number of polymers with $m$ edges containing a given vertex be $\nu(m,q)$. 

As the maximum degree of a vertex in $G_\Delta$ is $6$, the number of connected subgraphs of $G_\Delta$ with $m$ edges containing a given vertex is at most $(6e)^m/2$ \cite{borgs_left_2010}.
Each edge in the polymer takes one of $q-1$ possible values from $\{1,2,\ldots,q-1 \}$, which gives us $\nu(m,q) \leq (6e(q-1))^m/2$.
%Though the direction of each edge is fixed: from left to right, the edge-weight could take one of $q-1$ possible values from $\{1,2,\ldots,q-1 \}$, making the total number of polymers containing a given vertex be at most $(6e(q-1))^m/2$, i.e., $\nu(m,q) \leq (6e(q-1))^m/2$.
As $(V(\xi'), E(\xi'))$ is a connected graph, we have $\vert V(\xi')\vert \leq \vert E(\xi')\vert + 1$, so
%The number of vertices in a connected graph can be at most one more than the number of the edges, so that $\vert V(\xi')\vert \leq \vert E(\xi')\vert + 1$. 
the left-hand side of Equation~(\ref{kotecky_preiss_simplified}) becomes:

\begin{align*}
\sum_{\substack{\xi'\in\mathcal{L} \\ v \in  \xi'} } \gamma^{-\vert E(\xi')\vert} \exp(c\vert V(\xi')\vert) &\leq \sum_{m=1}^{\vert E(G_\Delta)\vert} \nu(m,q)\,\gamma^{-m}\,\exp(c(m+1)).
\end{align*}
As $\nu(m,q) \leq (6e(q-1))^m/2$, for sufficiently large values of $\gamma$, the above sum can be made arbitrarily small. 

For a tighter lower bound on $\gamma$,
%above which we can prove that the lemma is true,
we evaluate $\nu(m,q)$ for the lowest values of $m$. The smallest polymer consists of $6$ edges emanating from a single vertex, and has $\nu(6,q) = 7(q-1)$. Similarly calculating, we find $\nu(7,q)=\nu(8,q)=\nu(9,q)=0$, $\nu(10,q)=30(q-1)$, $\nu(11,q) = 30(q-1)(q-2)$, $\nu(12,q)=24(q-1)+28(q-1)^2$, $\nu(13,q)=0$, $\nu(14,q)=137(q-1)+72(q-1)(q-2)$, and $\nu(15,q)=24(q-1)(q-2)(q-3)+246(q-1)(q-2)$. Carefully evaluating $\nu(m,q)$ for higher values of $m$ could further lower the smallest value of $\gamma$ for which the Lemma holds. 

Substituting the above, we get:
\begin{align*}
&\sum_{m=1}^{\vert E(G_\Delta)\vert} \nu(m,q)\,\gamma^{-m}\,e^{c(m+1)} \\
&\leq e^c\Big[ 7(q-1)\left(\frac{e^c}{\gamma}\right)^6 + 30(q-1)\left(\frac{e^c}{\gamma}\right)^{10} + 
30(q-1)(q-2)\left(\frac{e^c}{\gamma}\right)^{11} + \\
&\left( 24(q-1) + 28(q-1)^2 \right)\left(\frac{e^c}{\gamma}\right)^{12} + \left(137(q-1) +
2(q-1)(q-2) \right)\left(\frac{e^c}{\gamma}\right)^{14} + \\
&\left(246(q-1)(q-2) + 24(q-1)(q-2)(q-3) \right) \left(\frac{e^c}{\gamma}\right)^{15} \Big] + \frac{e^c}{2}\sum_{m=16}^{\vert E(G_\Delta)\vert} \left(\frac{6(q-1)e^{1+c}}{\gamma} \right)^m 
\end{align*}
where the last term is a geometric series which has an upper bound:
\[ \frac{e^c}{2}\sum_{m=16}^{\vert E(G_\Delta)\vert} \left(\frac{6(q-1)e^{1+c}}{\gamma} \right)^m \leq \frac{e^c}{2}\frac{\left(\frac{6(q-1)e^{1+c}}{\gamma} \right)^{16}}{1-\left(\frac{6(q-1)e^{1+c}}{\gamma} \right)} \] 
Using $q\geq 2$, we numerically verify that the Lemma holds for $\gamma \geq 29.3(q-1)$.
\end{proof}
}

Lemma~\ref{lem:kotecky_preiss_potts} has an important consequence in addition to guaranteeing the convergence of the cluster expansion, as stated in the original paper of Koteck{\'y} and Preiss \cite{kotecky_cluster_1986}, and rephrased in \cite{jenssen_independent_2020}. Consider the function $\Psi(X)$ defined earlier for any cluster $X$.
%\begin{equation}
    %\Psi(X) = \frac{1}{\vert X \vert !}\left( \sum_{\substack{G\subseteq H_X: \\ \mathrm{connected,}\\ \mathrm{spanning} } } (-1)^{\vert E(G)\vert}\right)\left( \prod_{\xi\in X} w(\xi)\right) \label{Psi_defn}
%\end{equation}
An additional consequence \cite{kotecky_cluster_1986,jenssen_independent_2020} of Lemma~\ref{lem:kotecky_preiss_potts} is that $\Psi(X)$ will satisfy the following inequality
\begin{equation}
\sum_{\substack{X\in \mathcal{X}\\ X\nsim \xi}} \vert \Psi(X) \vert \leq c\vert V(\xi) \vert. \label{Psi_kotecky_preiss}
\end{equation}
for any polymer $\xi$, where $\mathcal{X}$ is the set of all clusters of polymers, and a cluster $X\nsim \xi$ if there exists a polymer $\xi'\in X$ such that $\xi'\nsim \xi$. The support of a cluster $X$ is denoted by $\bar{X}$ and is given by $\bar{X}=\bigcup_{\xi\in X} V(\xi)$. 

Consider an arbitrary vertex $v \in G_\Delta$, and let $\xi_v$ be the smallest polymer consisting of six edges of equal weight attached to $v$. From Equation~(\ref{Psi_kotecky_preiss}), we have:
\begin{equation}
\sum_{\substack{X\in \mathcal{X}\\ X\nsim \xi_v}} \vert \Psi(X) \vert \leq c\vert V(\xi_v) \vert = 7c \Rightarrow \sum_{\substack{X\in \mathcal{X}\\ v\,\in \bar{X}}} \vert \Psi(X) \vert \leq \sum_{\substack{X\in \mathcal{X} \\ X\nsim \xi_v}} \vert \Psi(X) \vert \leq 7c. \label{required_Psi_constraint}
\end{equation}

\begin{lemma}
\label{lem:volume_surface_potts}
If for any polymer $\xi \in \mathcal{L}$, there exists a constant $c$ such that 
\[ \sum_{\substack{\xi'\in\mathcal{L} \\ \xi' \nsim \xi} } w(\xi') \exp(c\vert V(\xi')\vert) \leq c \vert V(\xi)\vert ,\] 
then for any connected region $\Lambda_\mathcal{P}$ with boundary $\mathcal{P}$, the partition function $\Xi_\mathcal{P}$ satisfies
\[ \psi\vert \Lambda_\mathcal{P}\vert - 7c\vert \partial\Lambda \vert \leq \ln \Xi_\mathcal{P} \leq  \psi\vert \Lambda_\mathcal{P}\vert + 7c\vert \partial\Lambda \vert .\]
\end{lemma}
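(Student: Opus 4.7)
The plan is to apply the cluster expansion, isolate a translation-invariant ``free energy density'' contribution per lattice site, and bound the remaining boundary correction using the hypothesis in exactly the form already derived as (\ref{required_Psi_constraint}).

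First, since the hypothesis is precisely the Koteck\'{y}--Preiss convergence criterion, the cluster expansion converges absolutely and gives
\[ \ln \Xi_\mathcal{P} \;=\; \sum_{X \,:\, \bar{X} \subseteq \Lambda_\mathcal{P}} \Psi(X). \]
I would then define
\[ \psi \;:=\; \sum_{X \,:\, v_0 \in \bar{X}} \frac{\Psi(X)}{\vert \bar{X}\vert}, \]
where $v_0$ is any fixed vertex of $G_\Delta$ and the sum runs over all clusters in the ambient lattice. By the translation invariance of $G_\Delta$ this quantity is independent of $v_0$, and by (\ref{required_Psi_constraint}) the series converges absolutely with $\vert\psi\vert \leq 7c$.

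Next, interchanging the order of summation, I would rewrite
\[ \psi \,\vert\Lambda_\mathcal{P}\vert \;=\; \sum_{v \in \Lambda_\mathcal{P}}\sum_{X \,:\, v \in \bar{X}} \frac{\Psi(X)}{\vert\bar{X}\vert} \;=\; \sum_X \Psi(X)\,\frac{\vert\bar{X} \cap \Lambda_\mathcal{P}\vert}{\vert\bar{X}\vert}. \]
Every cluster $X$ with $\bar{X} \subseteq \Lambda_\mathcal{P}$ contributes with coefficient $1$ here and also appears with coefficient $1$ in $\ln \Xi_\mathcal{P}$, so these terms cancel in the difference, leaving only clusters whose support straddles the boundary:
\[ \psi \,\vert\Lambda_\mathcal{P}\vert - \ln \Xi_\mathcal{P} \;=\; \sum_{\substack{X \,:\, \bar{X} \cap \Lambda_\mathcal{P} \neq \emptyset \\ \bar{X} \not\subseteq \Lambda_\mathcal{P}}} \Psi(X)\,\frac{\vert\bar{X} \cap \Lambda_\mathcal{P}\vert}{\vert\bar{X}\vert}. \]
I would then use that $\bar{X}$ is a connected subset of $V(G_\Delta)$ --- the incompatibility graph of a cluster is connected by definition, and two polymers are incompatible exactly when their edge sets share a vertex, so the support $\bar{X}$ is a union of polymer supports glued along common vertices. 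Consequently, any cluster with vertices both in $\Lambda_\mathcal{P}$ and outside it must contain at least one vertex of $\partial \Lambda$. Using $\vert\bar{X} \cap \Lambda_\mathcal{P}\vert/\vert\bar{X}\vert \leq 1$ and summing over such a boundary vertex, the right-hand side is bounded in absolute value by
\[ \sum_{u \in \partial \Lambda}\sum_{X \,:\, u \in \bar{X}} \vert\Psi(X)\vert \;\leq\; 7c\,\vert\partial \Lambda\vert, \]
where the final inequality is (\ref{required_Psi_constraint}) applied at each boundary vertex. This gives both the upper and lower bounds in the statement simultaneously.

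The main obstacle is the bookkeeping around $\psi$: one must verify it is well-defined, translation-invariant on $G_\Delta$, and agrees site-by-site with the local contribution $\sum_{X : v \in \bar{X}} \Psi(X)/\vert\bar{X}\vert$, so that truncating the cluster expansion to $\Lambda_\mathcal{P}$ produces an error only from boundary-straddling clusters. Once that is set up, the crucial geometric observation --- connectedness of $\bar{X}$, forcing a straddling cluster to hit $\partial \Lambda$ --- reduces the error to a sum that (\ref{required_Psi_constraint}) immediately controls by $7c\vert\partial\Lambda\vert$.
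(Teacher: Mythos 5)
Your proposal is correct and follows essentially the same route as the paper's proof: apply the cluster expansion, define the site-local density $\psi = \sum_{X : v \in \bar X}\Psi(X)/\lvert\bar X\rvert$, bound $\lvert\psi\rvert \le 7c$ via (\ref{required_Psi_constraint}), then isolate the boundary error as the sum over clusters straddling $\Lambda_\mathcal{P}$ and control it by $7c\lvert\partial\Lambda\rvert$ by charging each straddling cluster to a boundary vertex. Your remark spelling out why $\bar X$ is connected (polymer supports are connected and incompatible polymers share a vertex, so the incompatibility graph being connected forces $\bar X$ to be connected) is a nice explicit justification of a step the paper's proof uses implicitly when it asserts that any such cluster must contain a vertex of $\partial\Lambda_\mathcal{P}$.
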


\begin{proof}
Denote the set of all clusters $X$ whose support $\bar{X} \subseteq \Lambda_\mathcal{P}$ by  $\mathcal{X}_{\Lambda_\mathcal{P}}$, and the set of all clusters $X$ by $\mathcal{X}$. For any cluster $X \in \mathcal{X}_{\Lambda_\mathcal{P}}$, $1=\left(\sum_{v \in \Lambda_\mathcal{P}} \mathbbm{1}_{v\,\in\bar{X}}\right)/\vert \bar{X}\vert$.
Using the cluster expansion, the log partition function can be expressed as:
\begin{align*}
    \ln \Xi_\mathcal{P} &= \sum_{\substack{X\in\mathcal{X}_{\Lambda_\mathcal{P}}}} \Psi(X) = \sum_{\substack{X\in \mathcal{X}:\\ \bar{X}\subseteq \Lambda_\mathcal{P}}}\Psi(X) = \sum_{\substack{v\,\in \Lambda_\mathcal{P}}} \sum_{\substack{X\in\mathcal{X}:\\ v\,\in \bar{X},\\ \bar{X}\subseteq \Lambda_\mathcal{P}}} \frac{1}{\abs{\bar{X}}} \Psi(X) \\
    &= 
    \sum_{\substack{v\,\in \Lambda_\mathcal{P}}} \left(  \sum_{\substack{X\in\mathcal{X}:\\ v\,\in \bar{X}}} \frac{1}{\abs{\bar{X}}} \Psi(X) - \sum_{\substack{X\in\mathcal{X}:\\ v\,\in \bar{X},\\ \bar{X}  \not\subseteq \Lambda_\mathcal{P} }} \frac{1}{\abs{\bar{X}}} \Psi(X) \right) 
    \nonumber \\
    &= \left(\sum_{\substack{v\,\in \Lambda_\mathcal{P}}} \sum_{\substack{X\in\mathcal{X}:\\ v\,\in \bar{X}}} \frac{1}{\abs{\bar{X}}} \Psi(X)  \right) - \left( \sum_{\substack{v\,\in \Lambda_\mathcal{P}}} \sum_{\substack{X\in\mathcal{X}:\\ v\,\in \bar{X},\\ \bar{X}  \not\subseteq \Lambda_\mathcal{P} }} \frac{1}{\abs{\bar{X}}} \Psi(X)  \right) \label{vol_surface_log_partition_fn}
\end{align*}

We note that the inner sum in the first term:
\[ \psi := \sum_{\substack{X\in\mathcal{X}:\\ v\,\in \bar{X}}} \frac{1}{\abs{\bar{X}}} \Psi(X)   \]
is independent of $v$ and $\Lambda_\mathcal{P}$, and depends only on our polymer model through its dependence on $\Psi(X)$. Moreover, by Equation (\ref{required_Psi_constraint}),  $\vert\psi\vert \leq 7c$. Hence, the first term is $\psi\vert \Lambda_{\mathcal{P}}\vert$. 

Analyzing the second term, we note that if $v\,\in \bar{X}$ for some vertex  $v\,\in\Lambda_\mathcal{P}$ and $\bar{X} \not\subseteq \Lambda_\mathcal{P}$, then $\bar{X}$ must contain some vertex $v' \in \partial \Lambda_{\mathcal{P}}$. Using this intuition, Equation (\ref{required_Psi_constraint}), and the triangle inequality, the absolute value of the second term becomes
\begin{align*}
\left\vert \sum_{\substack{v\,\in \Lambda_\mathcal{P}}} \sum_{\substack{X\in\mathcal{X}:\\ v\,\in \bar{X},\\ \bar{X}  \not\subseteq \Lambda_\mathcal{P} }} \frac{1}{\vert \bar{X}\vert} \Psi(X) \right\vert  &\leq    \sum_{\substack{v\,\in \Lambda_\mathcal{P}}} \sum_{\substack{X\in\mathcal{X}:\\ v\,\in \bar{X},\\ \bar{X}  \not\subseteq \Lambda_\mathcal{P} }} \frac{1}{\vert \bar{X}\vert} \left\vert \Psi(X) \right\vert \nonumber \\
&\leq \sum_{\substack{v'\in\partial\Lambda_{\mathcal{P}}}} \sum_{\substack{X\in\mathcal{X}:\\v'\,\in\bar{X}}} \frac{\vert \bar{X}\bigcap\Lambda_\mathcal{P} \vert}{\vert \bar{X}\vert}\, \left\vert \Psi(X) \right\vert \nonumber \\  
& \leq \sum_{\substack{v'\in\partial\Lambda_{\mathcal{P}}}} \sum_{\substack{X\in\mathcal{X}:\\v'\,\in\bar{X}}} \left\vert \Psi(X) \right\vert \leq 7c\vert \partial\Lambda\vert \nonumber
\end{align*}
The proof of the Lemma then follows from substitution of the above results and the triangle inequality.
\end{proof}

The proof follows on the lines of the proof of a similar Lemma in \cite{cannon_local_2019}, and section 5.7.1 of~\cite{friedli_cluster_2017}. % and is given in the full version of the paper.% Appendix.
Using Lemma \ref{lem:volume_surface_potts}, and noting that $\vert \partial \Lambda_{\mathcal{P}}\vert \leq p(\sigma) \, \forall \sigma \, \in \Omega_\mathcal{P}$ and $\vert \Lambda_\mathcal{P} \vert = n$, we get: 
 \begin{equation}
     n\psi - 7c\,p(\sigma) \leq \ln \Xi_\mathcal{P} \leq n\psi + 7c\,p(\sigma) \label{vol_surface_result_potts}
 \end{equation}

Note that the partition function $Z_\mathrm{Potts}$ is greater than the contribution from particle configurations in $\Omega_\mathcal{P}^0$ where the length of the boundary is the smallest attainable perimeter $\vert \mathcal{P} \vert = p_\mathrm{min}$:
\begin{equation}
    Z_\mathrm{Potts} \geq w(\Omegaref) = \left( \lambda\,\gamma\right)^{-p_\mathrm{min}} \Xi_{\mathcal{P}} \geq \left( \lambda\,\gamma\right)^{-p_\mathrm{min}} e^{n\psi - 7c p_\mathrm{min}}. \label{Z_potts_lower_bound}
\end{equation}

Given $\alpha >1$, let $S_\alpha$ be all configurations that are not $\alpha$-compressed. We will prove that the probability of the set $S_\alpha$ in the stationary distribution is exponentially small for sufficiently large $\lambda, \gamma$:
\begin{lemma}
\label{lem:compression_potts}
%Consider the stationary distribution $\pi_\mathrm{Potts}(\sigma)$, where there are $n$ total particles, each assigned an orientation from $\{0,2\pi/q,\ldots ,2\pi(q-1)/q\}$. 
Given any $\alpha >1$, when constants $\lambda>1, c=0.0001$, and $\gamma>29.3\,(q-1)$ satisfy
\begin{equation}  
\lambda\,\gamma > (4+2\sqrt{2}))^{\frac{\alpha}{\alpha-1}}
\left( e^{7c}\right)^{\frac{\alpha+1}{\alpha-1}} \label{compression_condn_potts}
\end{equation}
and $n$ is sufficiently large, then the probability that a configuration drawn from the stationary distribution
%of the Markov chain algorithm 
$\pi_\mathrm{Potts}$ is not $\alpha$-compressed is exponentially small,
$ \pi_\mathrm{Potts}(S_\alpha) < \zeta^{\sqrt{n}}. $ 
\end{lemma}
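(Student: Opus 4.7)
The plan is a Peierls-style union bound over boundaries, stratified by length. Since every $\sigma \in S_\alpha$ has $p(\sigma) \geq \alpha\,p_{\min}$, I would write
\[ \pi_\mathrm{Potts}(S_\alpha) \;\leq\; \sum_{p \geq \alpha p_{\min}} \;\sum_{\mathcal{P}:\,|\mathcal{P}| = p} \frac{w(\Omega_\mathcal{P})}{Z_\mathrm{Potts}}, \]
so I need a per-boundary weight bound and a count of admissible boundaries of each length.

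First I would chain the preceding lemmas to bound $w(\Omega_\mathcal{P})/Z_\mathrm{Potts}$. Lemma~\ref{lem:monochromatic} controls the cost of non-monochromatic boundary colorings: $w(\Omega_\mathcal{P}) \leq w(\Omegaref)\cdot q\,2^p\,\gamma/(\gamma - 3q)$, valid since $\gamma > 29.3(q-1) > 3q$. The polymer bijection of Lemma~\ref{lem:polymer_bijection_potts} rewrites $w(\Omegaref) = (\lambda\gamma)^{-p}\,\Xi_\mathcal{P}$, and Lemma~\ref{lem:volume_surface_potts} together with $|\partial \Lambda_\mathcal{P}| \leq p$ gives $\Xi_\mathcal{P} \leq \exp(n\psi + 7c\,p)$. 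The lower bound on $Z_\mathrm{Potts}$ from Equation~(\ref{Z_potts_lower_bound}) cancels the volume term $e^{n\psi}$, yielding
\[ \frac{w(\Omega_\mathcal{P})}{Z_\mathrm{Potts}} \;\leq\; \frac{q\,\gamma}{\gamma - 3q}\,(\lambda\gamma)^{-(p - p_{\min})}\,2^p\,\exp\bigl(7c(p + p_{\min})\bigr). \]

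Next I would bound the number of boundaries of length $p$. Using the Peierls-type encoding of simply connected, hole-free configurations on the triangular lattice as in \cite{cannon_markov_2016-1}, the count of admissible closed walks of length $p$ on the dual lattice is at most $\mathrm{poly}(n)\cdot(2+\sqrt{2})^p$, where the polynomial factor accounts for the choice of starting edge. Combining with the $2^p$ factor from Lemma~\ref{lem:monochromatic} produces effective base $4 + 2\sqrt{2}$, so
\[ \pi_\mathrm{Potts}(S_\alpha) \;\leq\; \mathrm{poly}(n)\sum_{p \geq \alpha p_{\min}} \Bigl[\tfrac{(4 + 2\sqrt{2})\,e^{7c}}{\lambda\gamma}\Bigr]^{\!p} (\lambda\gamma)^{p_{\min}}\,e^{7c\,p_{\min}}. \]
The hypothesis $\lambda\gamma > (4+2\sqrt{2})^{\alpha/(\alpha-1)}(e^{7c})^{(\alpha+1)/(\alpha-1)}$ is exactly the condition that at $p = \alpha p_{\min}$ the summand reduces to $\rho^{p_{\min}}$ for some $\rho<1$; it also ensures the bracketed base is less than $1$ so the geometric tail converges. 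Since $p_{\min} = \Theta(\sqrt{n})$, the polynomial prefactor is absorbed for $n$ large enough, giving $\pi_\mathrm{Potts}(S_\alpha) < \zeta^{\sqrt{n}}$ for some $\zeta \in (0,1)$.

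The main obstacle is the boundary-counting step: one must verify that the admissible boundaries (closed walks on the dual lattice arising as boundaries of simply connected, hole-free configurations of $n$ particles) are really enumerated by at most $\mathrm{poly}(n)\cdot(2+\sqrt{2})^p$ without double-counting. This follows from prior SOPS compression arguments, which I would cite rather than rederive. The remaining work is careful bookkeeping: ensuring the $e^{n\psi}$ terms cancel cleanly, verifying that the constant prefactor $q\gamma/(\gamma-3q)$ does not perturb the exponential decay, and choosing the base of $\zeta$ slightly larger than $\rho$ so as to subsume $\mathrm{poly}(n)$ for sufficiently large $n$.
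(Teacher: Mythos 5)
Your proposal is correct and follows essentially the same route as the paper: chain Lemma~\ref{lem:monochromatic}, the polymer bijection, the cluster-expansion surface/volume bound, and the lower bound on $Z_\mathrm{Potts}$, then do a Peierls union bound over boundary lengths using the self-avoiding-walk count and close the geometric sum via the hypothesis on $\lambda\gamma$. The only (immaterial) cosmetic difference is that the paper substitutes $p_{\min}\le k/\alpha$ term by term to convert the sum into a single geometric series, whereas you factor out $(\lambda\gamma)^{p_{\min}}e^{7cp_{\min}}$ and argue that the leading term at $p=\alpha p_{\min}$ already gives $\rho^{p_{\min}}$ with $\rho<1$ and the tail converges.
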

\noindent Note that Equation (\ref{compression_condn_potts}) is satisfied if $\lambda\,\gamma > 7^{\alpha/(\alpha-1)}$, proving Theorem~\ref{thm:con_comp_thm}.
%\end{theorem}
The proof of the Lemma requires using Lemma \ref{lem:monochromatic}, Lemma \ref{lem:volume_surface_potts} and Equation~(\ref{Z_potts_lower_bound}), and an upper bound on the number of self-avoiding walks of a given length on the triangular lattice from \cite{duminil-copin_connective_2012,cannon_markov_2016}. %, and is given in the full version of the paper.% Appendix~\ref{con_app}.

\markred{
\begin{proof}
Consider a real number $\nu$ which satisfies $\nu > 2+\sqrt{2}$ and $\lambda\,\gamma > (q\,\nu)^\frac{\alpha}{\alpha-1}
\left(e^{7c}\right)^\frac{\alpha+1}{\alpha-1}$. Such a $\nu$ exists since Equation~(\ref{compression_condn_potts}) holds and is a strict inequality. 
The probability $\pi(S_\alpha)$ can be calculated as, 
\[ \pi(S_\alpha) = \frac{w(S_\alpha)}{Z_\mathrm{Potts}} = \frac{\sum_{k=\ceil*{\alpha\,p_{\text{min}}}}^{p_{\text{max}}} \sum_{\mathcal{P}:\vert\mathcal{P}\vert = k} w(\Omega_{\mathcal{P}})}{Z_\mathrm{Potts}}
< \frac{q\gamma}{\gamma - 3q}\frac{\sum_{k=\ceil*{\alpha\,p_{\text{min}}}}^{p_{\text{max}}} \sum_{\mathcal{P}:\vert\mathcal{P}\vert = k} 2^k\,w(\Omega_{\mathcal{P}}^0)}{Z_\mathrm{Potts}}\]
where we have used Lemma \ref{lem:monochromatic} in the last inequality.

Using Lemma \ref{lem:volume_surface_potts} and Equation~(\ref{Z_potts_lower_bound}), we get
\begin{align*}
    \pi(S_\alpha)
    &< \frac{q\gamma}{\gamma - 3q} \frac{\sum_{k=\ceil*{\alpha\,p_{\text{min}}}}^{p_{\text{max}}} (\lambda\,\gamma)^{-k} 2^k \sum_{\mathcal{P}:\vert\mathcal{P}\vert = k} \Xi_\mathcal{P}}{(\lambda\,\gamma)^{-p_\mathrm{min}} e^{n\psi - 7c p_\mathrm{min}}} \\
    &\leq \frac{q\gamma}{\gamma - 3q}\frac{\sum_{k=\ceil*{\alpha\,p_{\text{min}}}}^{p_{\text{max}}} \left(\frac{\lambda\,\gamma}{2}\right)^{-k} \sum_{\mathcal{P}:\vert\mathcal{P}\vert = k} e^{n\psi + 7ck}}{(\lambda\,\gamma)^{-p_\mathrm{min}} e^{n\psi - 7c p_\mathrm{min}}} \\
   &< \frac{q\gamma}{\gamma - 3q}\sum_{k=\ceil*{\alpha\,p_{\text{min}}}}^{p_{\text{max}}} \left(\frac{\lambda\,\gamma}{2}\right)^{-k} \nu^k\, e^{7ck}\,(\lambda\,\gamma)^{p_\mathrm{min}} \,e^{7c p_\mathrm{min}}
\end{align*}
in the last inequality, we use an upper bound from \cite{duminil-copin_connective_2012,cannon_markov_2016}, that the number of self-avoiding walks on the triangular lattice of length $k$ is no more than $(2 + \sqrt{2})^k$, which is at most $\nu^k$.

Since $k\geq \alpha\,p_\mathrm{min}$, and $\alpha > 1$, $p_\mathrm{min}\leq k/\alpha$. Substituting, we get:

\begin{align*}
    \pi(S_\alpha) &\leq  \frac{q\gamma}{\gamma - 3q} \sum_{k=\ceil*{\alpha\,p_{\text{min}}}}^{p_{\text{max}}} \left(\frac{\lambda\,\gamma}{2}\right)^{-k} \nu^k\, e^{7ck}\,(\lambda\,\gamma)^{k/\alpha} \,e^{7c k/\alpha} \\
    &= \frac{q\gamma}{\gamma - 3q}\sum_{k=\ceil*{\alpha\,p_{\text{min}}}}^{p_{\text{max}}} \left( \frac{2\,\nu\,e^{7c(1+1/\alpha)}}{(\lambda\,\gamma)^{1-1/\alpha}} \right)^k
\end{align*}

Our choice of $\nu$ ensures that $\left(\nu\,e^{7c(1+1/\alpha)}\right)/(\lambda\,\gamma)^{1-1/\alpha}$ is less than $1$. Since $\alpha\,p_\mathrm{min} = O(\sqrt{n})$, for sufficiently large $n$, there exists a constant $\zeta < 1$ such that $\pi(S_\alpha)<\zeta^{\sqrt{n}}$. This proves the theorem.

Lastly, since $\alpha > 1$ the right-hand side of  Eq.~(\ref{compression_condn_potts}) satisfies:
\begin{align*}
(4+2\sqrt{2}))^{\frac{\alpha}{\alpha-1}}
\left( e^{7c}\right)^{\frac{\alpha+1}{\alpha-1}} & < (4+2\sqrt{2}))^{\frac{\alpha}{\alpha-1}}
\left( e^{7c}\right)^{\frac{2\alpha}{\alpha-1}}
= \left((4+2\sqrt{2})e^{14c}\right)^{\frac{\alpha}{\alpha-1}} < 7^{\frac{\alpha}{\alpha-1}}
\end{align*}

\end{proof}
}

%\vskip.1in
%\noindent\underbar{\bf The Clock Model:} \
\subparagraph*{The Clock Model:}
\label{section:clockmodelanalysis}
The proof of compression for the clock-model-inspired algorithm follows along the same lines as the  proof for the Potts-model-inspired algorithm. The set of allowed particle configurations is the same as before, so the set of configurations in $\Omega_\mathcal{P}^0$ is in a one-to-one correspondence with compatible collections of polymers with the same polymer model as above, albeit with the weight of a polymers redefined as $w_\mathrm{clock}(\xi)=\prod_{e\in\xi}\gamma^{-d_e}$, where $d_e=1-\cos(2\pi \ell(e) /q)$, and $\ell(e) \in\{1,2,\ldots,q-1\}$ is the label associated with an edge $e \in \xi$. This changes the prefactor in Lemma \ref{lem:monochromatic}, replacing $\gamma$ with $\gamma^{-\cos(2\pi/q)}$, and requiring $\gamma^{-\cos(2\pi/q)} > 3q$. The polymer partition function becomes 
$$ \Xi_\mathcal{P} = \sum_{\substack{\mathcal{L}'\subseteq \mathcal{L}_\mathcal{P}\\\mathrm{compatible}}} \prod_{\xi\in\mathcal{L}'} w_\mathrm{clock}(\xi).$$

\noindent Since the maximum weight of an edge in a polymer is now $\gamma^{-(1-\cos(2\pi/q))}$, instead of $\gamma^{-1}$, the condition for Lemma \ref{lem:kotecky_preiss_potts} to hold becomes $\gamma^{1-\cos(2\pi/q)}>29.3(q-1)$. Lemmas \ref{lem:volume_surface_potts} and Theorem \ref{thm:con_comp_thm} follow without modification except for the modified condition: $\gamma^{1-\cos(2\pi/q)}>29.3(q-1)$ in Theorem \ref{thm:con_comp_thm}.

\subsection{ Alignment in Compressed Configurations}
\label{sec:alignment}

% In this section, we will prove that for $\alpha$-compressed configurations with sufficiently small $\alpha$, our algorithms achieve alignment with high probability when $\gamma$ is sufficiently large. 

\begin{definition}[Alignment]
\label{defn:alignment}
We say a configuration of $n$ particles with $q$ orientations is $\delta$-aligned if there exists an orientation $\theta \in \{0,1,\ldots,q-1\}$, such that the number of particles of orientation $\theta$ is at least $(1-\delta)n$.
\end{definition}

Our main result is the following theorem:

\begin{theorem}\label{thm:con_align_thm}
Denote by $\pi_{\mathrm{Potts},\mathcal{P}}$ the stationary distribution $\pi_\mathrm{Potts}$ conditioned on the boundary of the configuration being $\mathcal{P}$.
For any $\eta$ where $1/2 < \eta < 1$, there exists a constant $\alpha^\ast = \alpha^\ast(\eta,q) > 1$, such that for all $\alpha$ where $1 < \alpha < \alpha^\ast$, there exists a sufficiently large $\gamma^\ast = \gamma^\ast(\eta,q,\alpha,\alpha^\ast)$ where as long as $\gamma > \gamma^\ast$ and $\mathcal{P}$ is $\alpha$-compressed, the probability that a configuration drawn from $\pi_{\mathrm{Potts},\mathcal{P}}$
is not $(1-\eta)$-aligned is exponentially small.
%Given $\eta$ where $1/2<\eta<1$, there exists a constant $\alpha^\ast = \alpha^\ast(\eta,q) > 1$, with the property that for any $\alpha$-compressed boundary $\mathcal{P}$ where $1 < \alpha < \alpha^\ast$, there exists a sufficiently large $\gamma^\ast = \gamma^\ast(\eta,q,\alpha,\alpha^\ast)$ where as long as $\gamma > \gamma^\ast$, assuming that the configuration has boundary $\mathcal{P}$, the probability that
%that a configuration drawn from the stationary distribution $\pi_\mathrm{Potts}$
%is not $1-\eta$-aligned is exponentially small.

In particular, possible values of $\alpha^*$ and $\gamma^*$ are:
\begin{align*}
\alpha^\ast(\eta, q) &= \min\left\{\sqrt{\eta}+\sqrt{1-\eta}, \sqrt{q^{-1}}+\sqrt{1-q^{-1}} \right\} \\
\gamma^\ast(\eta,q,\alpha,\alpha^\ast) &= \Big( 3^{\frac{2\alpha}{\alpha^\ast - \alpha}} \cdot 4^{\frac{3}{4}+\frac{\alpha^\ast - 1}{2\delta^\ast(\eta, q)(\alpha^\ast - \alpha)}} \Big)^{q-1} \text{where $\delta^\ast(\eta, q) := \min\{ 1-\eta, q^{-1} \}$.}
\end{align*}
\end{theorem}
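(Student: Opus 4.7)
The plan is to adapt the bridging technique of \cite{miracle_clustering_2011} as developed in the SOPS setting in \cite{cannon_local_2019}. Fix an $\alpha$-compressed boundary $\mathcal{P}$ and condition on it; then $\Lambda_{\mathcal{P}}$ is fixed with $\abs{\mathcal{P}} \leq \alpha p_{\min}(n)$, and every $\sigma \in \Omega_\mathcal{P}$ carries a contour network on $\TorusDual$ consisting of the dual edges of the heterogeneous edges of $\sigma$. The key claim is that any $\sigma$ which is not $(1-\eta)$-aligned must carry a contour network of total length $\ell = \Omega(\sqrt{n})$; a Peierls-type estimate using bridge systems and the Potts weight $\gamma^{-\ell}$ per unit of contour then kills the total probability.

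The first step is an isoperimetric lower bound on $\ell$. For any partition of $\Lambda_{\mathcal{P}}$ into disjoint subsets of sizes $m$ and $n-m$, summing the perimeters of the two parts in $\TorusLattice$ yields $p(\sigma)+2\ell$, where $\ell$ is the interface length; by the triangular-lattice isoperimetric bound this sum is at least $p_{\min}(m)+p_{\min}(n-m) \geq c_0(\sqrt{m}+\sqrt{n-m})(1+o(1))$. Combined with $p(\sigma) \leq \alpha c_0 \sqrt{n}(1+o(1))$, this yields
\[ 2\ell \;\geq\; c_0 \sqrt{n}\bigl(\sqrt{m/n}+\sqrt{1-m/n}-\alpha\bigr)(1+o(1)). \]
If $\sigma$ is not $(1-\eta)$-aligned I split into two cases: if some color class has mass in $[1-\eta,\eta]n$, take that class as the partition, keeping $\sqrt{m/n}+\sqrt{1-m/n} \geq \sqrt{\eta}+\sqrt{1-\eta}$; otherwise by pigeonhole some color class has mass at least $n/q$, giving $\sqrt{m/n}+\sqrt{1-m/n} \geq \sqrt{1/q}+\sqrt{1-1/q}$. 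Taking the minimum defines $\alpha^\ast$, and in either case $\ell \geq \tfrac{1}{2} c_0\, \delta^\ast(\eta,q)(\alpha^\ast-\alpha)\sqrt{n}$ for sufficiently large $n$.

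The second step is the bridge-and-Peierls estimate. Following \cite{cannon_local_2019}, each contour not already touching $\mathcal{P}$ is attached to it by a short bridge on $\TorusDual$ whose length is at most a fixed constant fraction of the bridged contour; the resulting bridged network has total length $L = \Theta(\ell)$. Standard dual-lattice counting bounds the number of bridged networks of total length $L$ by $C^L$ for a universal constant $C$; the at most $3^{\abs{\mathcal{P}}} \leq 3^{\alpha c_0 \sqrt{n}}$ choices of attachment points on $\mathcal{P}$ contribute an additional factor; and given such a network the particles are partitioned into $O(L)$ faces, each colored in at most $q$ ways. Flipping every bounded face to the dominant orientation strictly increases $w_\mathrm{Potts}$ by a factor $\gamma^{\ell}$, so
\[ \pi_{\mathrm{Potts},\mathcal{P}}\bigl(\text{not } (1-\eta)\text{-aligned}\bigr) \;\leq\; 3^{\alpha c_0 \sqrt{n}} \sum_{L \geq L_0} (Cq/\gamma)^L, \]
with $L_0 = \Omega(\sqrt{n})$ coming from the isoperimetric step. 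Choosing $\gamma \geq \gamma^\ast$ so that the geometric decay outruns the $3^{\alpha c_0 \sqrt{n}}$ prefactor, and balancing exponents, yields the stated form $\gamma^\ast = \bigl(3^{2\alpha/(\alpha^\ast-\alpha)} \cdot 4^{O(1/((\alpha^\ast-\alpha)\delta^\ast))}\bigr)^{q-1}$.

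The main obstacle is keeping the isoperimetric gap $\alpha^\ast - \alpha$ intact through the Peierls count. The bottleneck configuration of Figure~1(b) shows that without compression, non-alignment can occur with only $O(1)$ contour edges, so the whole argument rests on this gap, and it must survive multiplication by the exponentially many bridging, attachment-point, and face-coloring choices---this is precisely what forces the exponent $\alpha/(\alpha^\ast-\alpha)$ in $\gamma^\ast$. A secondary subtlety is that for $q > 2$ the contours can branch at vertices of $\TorusDual$ (the $q=2$ case in \cite{cannon_local_2019} only involves simple closed curves), so the bridging construction and the face-coloring count must be generalized to handle an interconnected network of contours; the same extension will reappear in the general SOPS proof of Section~\ref{sec:general}.
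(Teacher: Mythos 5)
Your plan is correct in broad strokes—isoperimetric lower bound plus bridging plus Peierls, following \cite{cannon_local_2019}—and indeed this is the same route the paper takes. However, there is a concrete gap between your Step~1 and your Step~2 that the paper closes with an intermediate lemma you do not mention, and there is a second delicacy that you gloss over but that is responsible for most of the constants in $\gamma^\ast$.

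The gap: in Step~1 you lower-bound the interface of the partition obtained by separating the most-popular color class from everything else. This partition may consist of arbitrarily scattered fragments, and its interface is \emph{not} the quantity that the Peierls map removes. The Peierls argument (Step~2) acts on the bridged region $\mathcal{R}_\delta(\sigma)$ and removes (a fraction of) the contours on $bd_\mathrm{int}(\mathcal{R}_\delta)$; every one of these contour edges separates a $c_1$ particle from a $c_2$ particle, so $\vert bd_\mathrm{int}(\mathcal{R}_\delta)\vert$ is a \emph{subset} of the color-class interface, and a lower bound on the larger quantity does not give you a lower bound on the smaller one. To close the gap you must first show that the bridged region $\mathcal{R}_\delta(\sigma)$ has size within a controlled window, namely that $n_{\mathcal{R}_\delta}/n$ lies between $(\rho_p-\delta)/(1-\delta)$ and $\rho_p/(1-\delta)$ (the paper's Lemma~\ref{lem:partition_sigma_R}, which follows from the two-sided error guarantee of the bridging construction), and only then apply your isoperimetric inequality to the partition $\mathcal{R}_\delta$ versus $\bar{\mathcal{R}}_\delta$. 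The price is that the resulting bound is $\nu\sqrt{n}\,(\alpha_c(\delta,\eta,q)-\alpha)$ where $\alpha_c(\delta,\eta,q)$ is a strictly $\delta$-dependent quantity that tends to $\alpha^\ast$ only as $\delta\to 0$; your Step~1 silently sets $\delta=0$.

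That $\delta$-dependence is the second issue. Making $\delta$ small improves $\alpha_c(\delta)-\alpha$ but blows up the bridge-system preimage count, which scales like $4^{\Theta(\ell/\delta)}$. So you must tune $\delta$ to a value strictly between $0$ and $\delta^\ast(\eta,q)$, and argue that both the geometric factor and the isoperimetric gap can be satisfied simultaneously—the paper does this via a concavity argument on $\alpha_c(\cdot)$, and this tuning is exactly what produces the $\frac{\alpha^\ast-1}{2\delta^\ast(\alpha^\ast-\alpha)}$ exponent on $4$ in the stated $\gamma^\ast$. Your ``balancing exponents'' remark names the right phenomenon but does not justify why a compatible $\delta$ exists.

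Finally, a coherence note on your Peierls map: you say you ``flip every bounded face to the dominant orientation,'' which would remove all bridged heterogeneous edges at once (weight gain $\gamma^\ell$) at the cost of $q^{O(\ell)}$ face-coloring preimages, whereas the paper's $f_\eta$ applies a single cyclic shift to $\bar{\mathcal{R}}_\delta$ (weight gain only $\gamma^{\ell/(q-1)}$, preimage factor $q$). These are genuinely different maps with different bookkeeping; your stated $\gamma^\ast$ carries the $(q-1)$ in the outer exponent, which is the signature of the cyclic-shift accounting, not the flip-to-dominant accounting. Either map can be made to work, but you should commit to one and carry its bookkeeping consistently rather than mixing the two.
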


%Our main result will be the following theorem:
%\begin{theorem}\label{thm:con_align_thm}
%Given an alignment threshold $\eta$ such that $1/2<\eta<1$, a boundary $\mathcal{P}$ that is $\alpha$-compressed for $\alpha > 1$, with 
%$\alpha < \alpha^\ast(\eta, q)$, where $\alpha^\ast(\eta, q) = \min\{\sqrt{\eta}+\sqrt{1-\eta}, \sqrt{q^{-1}}+\sqrt{1-q^{-1}} \} \,,$, and \[ \gamma > \left( 3^\frac{2\alpha}{\alpha^\ast(\eta, q) - \alpha}\,4^{\frac{3}{4}+\frac{\alpha^\ast(\eta, q) - 1}{2\delta^\ast(\eta, q)\left(\alpha^\ast(\eta, q) - \alpha\right)}} \right)^{q-1} \] where $\delta^\ast(\eta, q) = \min\{ 1-\eta, q^{-1} \}$, then the probability that the configuration drawn from the stationary distribution of the Markov chain algorithm ($\pi_\mathrm{Potts}$) does not have at least $\eta$ fraction of its particles aligned in the same direction is exponentially small for sufficiently large $n$.
%\end{theorem}

For any particle configuration, let $2\pi\theta_p/q,$ be the {\it most popular orientation}, or the orientation possessed by the greatest number of particles, where $\theta_p \in \{0,1,\ldots,(q-1) \},$ and let $\rho_p$ be the fraction of particles with orientation $\theta_p$. Note that $1/q\leq \rho_p \leq 1$, and $\rho_p \geq \eta$ for a $(1-\eta)$-aligned configuration.
%
% We will show that for any $\eta < 1$ and sufficiently small $\alpha$, the probability that an $\alpha$-compressed configuration is not $\eta$-aligned is exponentially small for $\gamma$ is sufficiently large.
%
We begin with a few isoperimetric inequalities.

\begin{lemma}(\cite{angel_isoperimetric_2018, oh_sharp_2020})
\label{lem:isoperimetric_triangular_lower}
The minimum perimeter of a region containing $n$ particles on the triangular lattice $G_\Delta$ is greater than or equal to $2\sqrt{3}\sqrt{n - (1/4)}-3$.
\end{lemma}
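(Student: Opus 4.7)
The plan is to reduce this to the classical discrete isoperimetric inequality on the triangular lattice, whose extremal configurations are hexagonal clusters. The first step is a direct check of the extremal case: for a hexagonal cluster consisting of a center vertex together with $r$ concentric rings, the number of occupied sites is $n = 3r^2 + 3r + 1$, and the closed walk along the outer ring that encloses all occupied sites has length exactly $p = 6r$. A one-line algebraic identity then verifies tightness of the claimed bound, since
\begin{equation*}
(6r+3)^2 = 36r^2 + 36r + 9 = 12(3r^2 + 3r + 1) - 3 = 12n - 3,
\end{equation*}
so $p = \sqrt{12n-3} - 3 = 2\sqrt{3}\sqrt{n - 1/4} - 3$ with equality on the hexagonal sizes.

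The second step is to handle general $n$ by invoking the isoperimetric result of Angel–Benjamini–Horesh (or equivalently the formulation in Oh's thesis), which states that among all simply connected subsets of $G_\Delta$ with a given number of vertices, the hexagonal clusters minimize the boundary. The plan is to combine this monotonicity of the isoperimetric profile with the concavity (in $n$) of the function $2\sqrt{3}\sqrt{n - 1/4} - 3$: for any $n$ sandwiched between consecutive hexagonal sizes $n_r = 3r^2 + 3r + 1$ and $n_{r+1}$, one verifies that the profile-minimum perimeter is at least $6r$, and since $6r \geq 2\sqrt{3}\sqrt{n - 1/4} - 3$ throughout the interval $[n_r, n_{r+1}]$ (using that the right-hand side equals $6r$ at $n = n_r$ and grows sub-linearly while the true minimum only moves to $6(r+1)$), the bound propagates.

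The main obstacle is the passage from the hexagon-optimality statement (most naturally phrased in terms of vertex or edge boundary in the Wulff-shape sense) to the specific perimeter used in this paper, namely the length of the minimal closed walk over occupied sites enclosing the configuration. The key observation to reconcile the two is that, for a simply connected hole-free region on $G_\Delta$, this walk length equals the number of edge-slots along the outer cycle of the induced subgraph, which for the optimal hexagonal cluster matches the $6r$ count used above; any other cluster with $n$ occupied sites can be compared to the hexagonal one via the exchange/compression arguments in the cited references. Once that correspondence is in place, the bound follows from the extremal identity together with the interpolation argument.
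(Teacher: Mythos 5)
Your extremal check of the hexagonal case is correct: $n_r = 3r^2+3r+1$, $p = 6r$, and $(6r+3)^2 = 12n_r - 3$, so equality holds at hexagonal sizes. The gap is in the interpolation step. You assert that $6r \geq 2\sqrt{3}\sqrt{n-1/4}-3$ holds throughout $[n_r,n_{r+1}]$, but this inequality is an equality at $n=n_r$ and the right-hand side is strictly increasing in $n$, so for $n > n_r$ it exceeds $6r$. Combining this with the monotonicity bound $p_{\min}(n) \geq p_{\min}(n_r) = 6r$ therefore gives something weaker than the lemma, not what is claimed. In fact $p_{\min}(n)$ increases gradually as $n$ ranges over $(n_r, n_{r+1}]$, and the lemma needs to track that intermediate growth.

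The paper avoids this issue by citing a sharper statement from the referenced works: for any connected hole-free region of $n$ sites, $n \leq \bigl\lfloor \tfrac{(p_{\min}(n)+3)^2+3}{12}\bigr\rfloor$. Dropping the floor and rearranging immediately gives $p_{\min}(n) \geq 2\sqrt{3}\sqrt{n-1/4}-3$, with no interpolation needed. The inequality you'd want to interpolate from the hexagonal anchors is exactly this one, so the cleanest fix to your argument is to invoke it directly rather than attempt to derive it from the extremal cases alone. (The paper also records the easy preliminary observation that disconnected or holey configurations only increase the perimeter, reducing to the connected hole-free case; you gesture at this via the walk-length correspondence, which is fine but should be stated as a reduction.)
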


\begin{proof}
The perimeter of a region containing $n$ particles in $G_\Delta$ which is not connected, or which is connected but has holes is greater than the perimeter of a connected hole-free region containing $n$ particles in $G_\Delta$. 
From \cite{angel_isoperimetric_2018, oh_sharp_2020}, we have for the perimeter of a connected hole-free region containing $n$ particles in $G_\Delta$, the following isoperimetric inequality: $n\leq \floor{\frac{(p_\mathrm{min}+3)^2+3}{12}}$. This gives:
\begin{align*}
n &\leq \floor{\frac{(p_\mathrm{min}+3)^2+3}{12}} \leq \frac{(p_\mathrm{min}+3)^2+3}{12} \Rightarrow 2\sqrt{3}\sqrt{n-(1/4)}\leq p_\mathrm{min}+3
\end{align*}
which proves the Lemma. 
\end{proof}

\begin{lemma}(\cite{cannon_local_2019}, Lemma 2.2)
\label{lem:isoperimetric_triangular_upper}
The minimum perimeter of a region containing $n$ particles on the triangular lattice $G_\Delta$ is at most $2\sqrt{3}\sqrt{n}$.
\end{lemma}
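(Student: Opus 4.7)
The plan is to exhibit an explicit $n$-particle configuration whose perimeter is at most $2\sqrt{3}\sqrt{n}$. The natural candidate on the triangular lattice $G_\Delta$ is a regular hexagonal cluster, augmented by a partial outer layer as needed. For each integer $k \geq 1$, the hexagon of side $k$ has exactly $n(k) := 3k^2 - 3k + 1$ particles and perimeter exactly $6(k-1)$.

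First, I would verify the bound at the hexagonal numbers $n = n(k)$: squaring $6(k-1) \leq 2\sqrt{3}\sqrt{n(k)}$ reduces to $3(k-1)^2 \leq 3k^2 - 3k + 1$, i.e., $3k \geq 2$, which holds for all $k \geq 1$.

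For a general $n$, let $k$ be the unique positive integer with $n(k) \leq n < n(k+1) = 3k^2 + 3k + 1$, and write $n = n(k) + j$ with $0 \leq j < 6k$. The construction is to start from a hexagon of side $k$ and grow a partial outer layer by adding the $j$ extra particles in a controlled ``spiral'' order: on each of the $6$ sides of the new layer, place up to $k-1$ \emph{nook} particles (each adjacent to two consecutive existing boundary particles) before moving to the next side; only after all $6$ sides are completed do we add the $6$ \emph{new corner} particles (each adjacent to three existing particles). A short case analysis on the boundary walk shows that the first nook on each fresh side raises the perimeter by exactly $+1$, while subsequent nooks on the same side (which have three consecutive existing neighbors) and the later corner additions all leave the perimeter unchanged. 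Hence the resulting configuration has perimeter exactly $6(k-1) + s$, where $s \in \{0, 1, \ldots, 6\}$ counts the number of sides on which at least one nook has been placed.

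The final step is to verify the algebraic inequality $6(k-1) + s \leq 2\sqrt{3}\sqrt{n(k) + j}$ in the tightest subcase $j = (s-1)(k-1) + 1$ (i.e., immediately after opening a new side), where both sides are compared in the same ``step'' of the step function. Squaring and simplifying reduces this to $s^2 \leq 24k$, which holds for all $s \leq 6$ whenever $k \geq 2$; the residual small cases $k = 1$ (equivalently $n \leq 6$) can be handled by direct enumeration using the same hexagonal building blocks. The main obstacle is the per-step perimeter bookkeeping during the spiral growth, which requires a careful case analysis at the old corners where adjacent partial sides meet; this is a finite combinatorial check mirroring standard constructions used in proofs of the isoperimetric inequality on the triangular lattice, as in \cite{cannon_local_2019}.
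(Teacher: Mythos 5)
Your construction is correct. The paper itself does not prove this lemma --- it cites it as Lemma~2.2 of \cite{cannon_local_2019} --- so the hexagonal-spiral argument you outline is essentially the standard approach from that line of work rather than a new route, and there is no in-paper proof to compare against. Both the algebra and the perimeter bookkeeping check out: at the tightest point $j=(s-1)(k-1)+1$ the inequality $6(k-1)+s \le 2\sqrt{3}\sqrt{n(k)+j}$ reduces exactly to $s^2 \le 24k$, which holds for all $s\le 6$ once $k\ge 2$, with $n\le 6$ handled by inspection, and the base case $j=0$ reduces to $k\ge 2/3$. The $+1$-then-$+0$ pattern when filling a side is also right; one small clarification is that the ``three consecutive existing neighbors'' of a later nook are consecutive \emph{along the current boundary walk} (the previously placed nook and the two old boundary particles it sits above), not three collinear lattice sites, which is the configuration that lets the walk reroute through the new nook and bypass the now-interior boundary particle without gaining length.
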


The dual lattice, $G_{\varhexagon}$, to the triangular lattice $G_\Delta$ is obtained by creating a dual vertex in the center of each triangle in $G_\Delta$, and joining these dual vertices with edges if their corresponding triangular faces share an edge. Each edge~$e_\Delta$ of $G_\Delta$ corresponds with the edge $e_{\varhexagon}$ of $G_{\varhexagon}$ that crosses it.
%For each edge $e_\Delta$ in $G_\Delta$, the
This corresponding edge~$e_{\varhexagon}$
%in $G_{\varhexagon}$
separates the two endpoints of~$e_\Delta$ in~$G_\Delta$.
% that are connected by $e_\Delta$. 
A \emph{contour} refers to a self-avoiding walk on the edges of the dual lattice $G_{\varhexagon}$. The length of a contour refers to the number of edges in the contour.

In this setting, we distinguish between the \emph{boundary contour} and the \emph{internal boundary contour} of a region $R \subseteq V(\Lambda_\mathcal{P})$. The boundary contour refers to the set of edges on the dual lattice $G_{\varhexagon}$ corresponding to edges between sites in $R$ and sites not in $R$, while the internal boundary contour includes edges only from $E(\Lambda_\mathcal{P})$ rather than all of $E(G_{\varhexagon})$. We make use of the following geometric result, which we show in the full version of the paper:

\begin{lemma}
\label{lem:min_cut_length}
For a connected hole-free $\alpha$-compressed configuration %\sigma \in \Omega_\mathcal{P}$
with $n$ particles,
%with $n$ particles that is $\alpha$-compressed,
a particle-occupied region $R$ containing $\kappa n$
%a fraction $\kappa$ of the
particles has an internal boundary contour $bd_\mathrm{int}(R)$ of length at least $\nu\sqrt{n}(\sqrt{\kappa} + \sqrt{1-\kappa} - \alpha)$ for any $\nu < 2\sqrt{3}$ for sufficiently large $n$.
\end{lemma}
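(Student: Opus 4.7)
The plan is to derive a combinatorial edge-counting identity relating $|bd_\mathrm{int}(R)|$ to the edge boundaries of $R$, of its complement in $\Lambda_\mathcal{P}$, and of $\Lambda_\mathcal{P}$ itself, and then to convert the walk-perimeter isoperimetric bound (Lemma~\ref{lem:isoperimetric_triangular_lower}) and the $\alpha$-compression assumption (via Lemma~\ref{lem:isoperimetric_triangular_upper}) into bounds on these edge boundaries. For any $S \subseteq V(G_\Delta)$, write $eb(S)$ for the number of edges of $G_\Delta$ having exactly one endpoint in $S$, and let $R^c := \Lambda_\mathcal{P} \setminus R$. Classifying each edge of $G_\Delta$ by whether its two endpoints lie in $R$, $R^c$, or $V(G_\Delta) \setminus \Lambda_\mathcal{P}$, a direct case analysis yields
\[
eb(R) + eb(R^c) = 2\,|bd_\mathrm{int}(R)| + eb(\Lambda_\mathcal{P}),
\]
since each edge of $bd_\mathrm{int}(R)$ is counted once in each of $eb(R)$ and $eb(R^c)$, while each edge from $\Lambda_\mathcal{P}$ to the lattice exterior is counted once in one of $eb(R), eb(R^c)$ and once in $eb(\Lambda_\mathcal{P})$.

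Next I would convert walk-perimeter bounds into edge-boundary bounds. Since $G_\Delta$ is $6$-regular, $eb(S) = 6|S| - 2|E(S)|$ for any $S$; combining this with planar Euler's formula applied to the subgraph induced by $S$ (and the fact that each internal triangular face contributes three edges) will give the uniform inequality $eb(S) \geq 2\,p(S)$, with equality $eb(S) = 2\,p(S) + 6$ when $S$ is simply connected and hole-free. Applied to $\Lambda_\mathcal{P}$, this yields $eb(\Lambda_\mathcal{P}) = 2\,p(\sigma) + 6 \leq 4\sqrt{3}\,\alpha\sqrt{n} + 6$ via $\alpha$-compression together with Lemma~\ref{lem:isoperimetric_triangular_upper}. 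For $R$ and $R^c$, combining $eb(S) \geq 2\,p(S)$ with Lemma~\ref{lem:isoperimetric_triangular_lower} shows that for any $c < 4\sqrt{3}$ and $n$ sufficiently large, $eb(R) \geq c\sqrt{\kappa n}$ and $eb(R^c) \geq c\sqrt{(1-\kappa)n}$ (in the degenerate case $\kappa n$ or $(1-\kappa)n$ small, the claimed bound is trivially nonpositive).

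Substituting these into the identity gives
\[
|bd_\mathrm{int}(R)| \;\geq\; \tfrac{c}{2}\sqrt{n}\bigl(\sqrt{\kappa}+\sqrt{1-\kappa}\bigr) - 2\sqrt{3}\,\alpha\sqrt{n} - 3.
\]
Given $\nu < 2\sqrt{3}$, the final step is to choose $c$ sufficiently close to $4\sqrt{3}$ so that $(c/2-\nu)(\sqrt{\kappa}+\sqrt{1-\kappa})$ exceeds $(2\sqrt{3}-\nu)\alpha$, which is always possible in the non-trivial regime $\sqrt{\kappa}+\sqrt{1-\kappa}>\alpha$, and then to take $n$ large enough to absorb the additive constant, yielding $|bd_\mathrm{int}(R)| \geq \nu\sqrt{n}(\sqrt{\kappa}+\sqrt{1-\kappa}-\alpha)$. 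The main technical hurdle will be establishing the uniform inequality $eb(S) \geq 2\,p(S)$, particularly when $S=R$ or $S=R^c$ is non-connected or contains holes, since this requires a careful planar Euler-characteristic analysis accounting for the possibly multiple boundary walks that comprise $p(S)$ on the triangular lattice.
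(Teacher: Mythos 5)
Your proposal is correct and takes essentially the same route as the paper's own proof. Your $eb(\cdot)$ is precisely the paper's $|bd(\cdot)|$ (boundary contour length), your identity $eb(R)+eb(R^c)=2|bd_\mathrm{int}(R)|+eb(\Lambda_\mathcal{P})$ is the paper's $|bd(R)|+|bd(\bar R)| = |bd(\sigma)| + 2|bd_\mathrm{int}(R)|$, and the perimeter-to-edge-boundary conversions $eb(\Lambda_\mathcal{P}) = 2p(\sigma)+6$ and $eb(S) \geq 2p(S)$ are exactly the relations the paper invokes; the remaining algebra, with the free constant $\nu<2\sqrt{3}$ absorbing the additive error and the degenerate small-$\kappa$ cases handled by noting $\sqrt{\kappa}+\sqrt{1-\kappa}-\alpha < 0$ there, matches the paper's argument.
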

%The proof applies the isoperimetric inequalities in Lemma \ref{lem:isoperimetric_triangular_lower} and Lemma \ref{lem:isoperimetric_triangular_upper} to the regions $R$ and its complement $\bar{R}$, and uses the $\alpha$-compression of the boundary $\mathcal{P}$.
%and uses the bridging technique first introduced in \cite{miracle_clustering_2011} and adapted in \cite{cannon_local_2019}. 

\markred{
\begin{proof}
Applying the result of Lemma \ref{lem:isoperimetric_triangular_lower} to $R$ and $\bar{R}$, and denoting their perimeters by $p(R)$ and $p(\bar{R})$,  
\begin{align*}
p(R) &\geq 2\sqrt{3}\sqrt{\kappa n - (1/4)}-3 \;,\; p(\bar{R}) \geq 2\sqrt{3}\sqrt{(1-\kappa) n - (1/4)} -3 
\end{align*}
Since $\sigma$ is $\alpha$-compressed, its perimeter $p(\sigma) \leq \alpha p_\mathrm{min}(\sigma)$. Using Lemma \ref{lem:isoperimetric_triangular_upper}, we get:
\begin{equation*}
    -p(\sigma) \geq - 2\sqrt{3}\,\alpha \sqrt{n} 
\end{equation*}
Summing the above Equations, we get:
\begin{align*}
p(R) + p(\bar{R}) - p(\sigma) &\geq  2\sqrt{3}\sqrt{n} \left( \sqrt{\kappa - (4n)^{-1}} + \sqrt{(1-\kappa) - (4n)^{-1}} - \alpha \right) -6 \nonumber \\
\Rightarrow p(R) + p(\bar{R}) - p(\sigma) &\geq  \nu\sqrt{n}\left( \sqrt{\kappa} + \sqrt{(1-\kappa)}  - \alpha \right) \;,\; \nu < 2\sqrt{3}\,,\, n\;\mathrm{sufficiently}\,\mathrm{large}
\end{align*}
Note that $R\bigcup \bar{R} = \sigma$, and that the length of the internal boundary contour of $R$ is the same as the length of the internal boundary contour of $\bar{R}$: $\vert bd_\mathrm{int}(R)\vert = \vert bd_\mathrm{int}(\bar{R})\vert$. Furthermore, note that the lengths of the boundary contours of $R, \bar{R}$ and $\sigma$, denoted by $\vert bd(R)\vert, \vert bd(\bar{R})\vert$ and $\vert bd(\sigma)\vert$ satisfy:
\begin{align*}
\vert bd(R) \vert + \vert bd(\bar{R})\vert &= \vert bd(\sigma)\vert + 2\,\vert bd_\mathrm{int}(R)\vert \\ \vert bd(\sigma)\vert &= 2p(\sigma) + 6 \\ \vert bd(R) \vert + \vert bd(\bar{R})\vert &> 2(p(R) + p(\bar{R})) 
\end{align*}
where the last inequality comes from noting that $\vert bd(R) \vert + \vert bd(\bar{R})\vert=2(p(R)+p(\bar{R}))+6$ when $bd_\mathrm{int}(R)$ consists of no closed contours, and $\vert bd(R) \vert + \vert bd(\bar{R})\vert=2(p(R)+p(\bar{R}))$ when $bd_\mathrm{int}(R)$ consists only of closed contours.

Using the above relations, we get 
\begin{align*}
\vert bd(R) \vert + \vert bd(\bar{R})\vert - \vert bd(\sigma)\vert &\geq 2(p(R) + p(\bar{R}) - p(\sigma)) - 6 \nonumber \\
\Rightarrow \vert bd_\mathrm{int}(R) \vert &\geq (p(R) + p(\bar{R}) - p(\sigma)) - 3 \nonumber \\
\Rightarrow \vert bd_\mathrm{int}(R) \vert &\geq \nu\sqrt{n}\left( \sqrt{\kappa} + \sqrt{(1-\kappa)}  - \alpha \right) \;,\; \nu < 2\sqrt{3}\,,\, n\;\mathrm{sufficiently}\,\mathrm{large} \nonumber
\end{align*}
which proves the Lemma.
\end{proof}
}

\noindent
For the rest of this section, we assign particles the color $c_1$ if they are of orientation $\theta_p$, and the color $c_2$ otherwise.
%we assign particles of orientation $\theta_p$ the color $c_1$, and particles with other orientations the color $c_2$.
This lets us directly apply the bridging construction from \cite{cannon_local_2019}.

\begin{lemma}(\cite{cannon_local_2019}, Lemma 7.3)
\label{lem:bridging_sigma}
Fix $\delta \in (0,1/2)$. For each particle configuration $\sigma \in \Omega_\mathcal{P}$,
%and $\delta$ such that $0 <\delta < 1/2$,
there exists a function $\mathcal{R}_\delta:\Omega_\mathcal{P}\to 2^{\Omega_\mathcal{P}}$ giving a region $\mathcal{R}_\delta(\sigma)$ such that
%(i)
all particles on the boundary of $\mathcal{R}_\delta(\sigma)$ have the color $c_1$,
%(ii)
all particles on the boundary of its complement $\bar{\mathcal{R}}_\delta(\sigma)$
%where $\mathcal{R}_\delta(\sigma)\bigcup \bar{\mathcal{R}}_\delta(\sigma) = \sigma$
have the color $c_2$,
%(iii)
$\mathcal{R}_\delta(\sigma)$ contains at most $\delta$ fraction of particles with the color $c_2$, and
%(iv)
$\bar{\mathcal{R}}_\delta(\sigma)$ contains at most $\delta$ fraction of particles with the color $c_1$.
\end{lemma}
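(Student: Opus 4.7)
The plan is to construct $\mathcal{R}_\delta(\sigma)$ by adapting the bridging construction of \cite{miracle_clustering_2011,cannon_local_2019}, assigning each maximal monochromatic face of $\sigma$ either to $\mathcal{R}_\delta(\sigma)$ or to its complement so that both boundaries become monochromatic while the ``misplaced'' mass remains at most a $\delta$ fraction on each side.

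First I would identify the bichromatic edges of $\sigma$ (edges of $\Lambda_\mathcal{P}$ with endpoints of opposite color among $c_1, c_2$) and form the corresponding contours on $G_\varhexagon$. Together with $\mathcal{P}$, these contours partition $\Lambda_\mathcal{P}$ into maximal monochromatic faces. The naive choice of $\mathcal{R}_\delta(\sigma)$ as the union of $c_1$-faces satisfies the boundary condition on its own side but typically fails on the complement: a small $c_1$-pocket embedded in a larger $c_2$-face places $c_1$-particles on the boundary of $\bar{\mathcal{R}}_\delta(\sigma)$, and symmetrically for small $c_2$-pockets inside $c_1$-faces.

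Next, I would process the contours inductively from innermost to outermost. For each contour $\gamma$ whose enclosed region comprises at most a $\delta$ fraction of the particles of its enclosing face, I would absorb $\gamma$: if it encloses a $c_2$-pocket inside a $c_1$-face, the pocket is added to $\mathcal{R}_\delta(\sigma)$; if it encloses a $c_1$-pocket inside a $c_2$-face, the pocket is removed. Contours whose enclosed regions exceed the threshold are retained as true separating boundaries and, in the bridging language, each is joined to $\mathcal{P}$ by a short dual path whose total length is a constant fraction of the retained contour length. The bridges themselves are not needed for the existence statement, but they are essential in the Peierls-type estimates that follow.

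After processing, every retained contour separates a $c_1$-face from a $c_2$-face, so every boundary face of $\mathcal{R}_\delta(\sigma)$ has color $c_1$ and every boundary face of $\bar{\mathcal{R}}_\delta(\sigma)$ has color $c_2$. The fraction bounds then follow by summing the absorbed-pocket sizes across the disjoint enclosing faces. The main obstacle is making the induction close up correctly when contours are nested: absorbing an inner pocket changes the effective color of part of its enclosing face, which shifts the reference mass used at the next outer level. Handling this requires the careful bookkeeping worked out in \cite[Lemma 7.3]{cannon_local_2019}, where a monotone processing order and an absorption threshold tied explicitly to $\delta$ ensure that accumulated errors never exceed $\delta$ times the mass of either region.
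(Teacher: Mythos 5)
Your inside-out ``absorption threshold'' scheme is a genuinely different construction from the one the paper relies on, and as stated it has a concrete flaw: absorbing every pocket that is at most a $\delta$ fraction \emph{of its own enclosing face} does not bound the total misplaced mass by a $\delta$ fraction. A single $c_1$ face can contain many disjoint $c_2$ pockets, each individually under the threshold; their union can approach the entire face, so $\mathcal{R}_\delta(\sigma)$ would contain almost no $c_1$ particles. Summing ``across the disjoint enclosing faces,'' as you suggest, does not rescue this, because the bound is per pocket, not per face, and the nesting issue you flag (absorbed mass changing the reference denominator at the next level) compounds the problem rather than resolving it.

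The construction the paper actually invokes (Cannon et al., Lemma 7.3, mirrored in this paper's Lemma \ref{lem:aggr_bridge_system_construction}) works in the opposite direction and maintains a global invariant rather than a per-contour test. One starts by placing in $I$ every contour already touching the configuration boundary $\mathcal{P}$, then repeatedly sweeps: as long as some minimal bridged region $R$ has more than a $\delta$ fraction of unbridged particles, one finds a column of $R$ whose unbridged fraction exceeds $\delta$, adds dual edges (bridges) adjacent to the bridged particles of that column, and adds to $I$ every contour those bridges touch. Each iteration only increases the set of bridged particles and each new contour added to $I$ is ``paid for'' by the excess unbridged mass it encloses, which is what gives both termination and the bridge-length ratio $|B| \le \frac{1-\delta}{2\delta}|I|$. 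On termination, \emph{every} minimal bridged region has at most a $\delta$ fraction of unbridged particles, and $\mathcal{R}_\delta(\sigma)$ is taken to be the union of bridged regions of color $c_1$; the two fraction bounds in the lemma follow immediately from the invariant, and the two boundary conditions follow because region boundaries consist only of bridged contours, across which the bridged color changes. Your proposal neither maintains this invariant nor establishes a substitute for it, so the fraction bounds do not follow from what you wrote.
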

%The bridging construction presented in \cite{cannon_local_2019} is used here to define the region $\mathcal{R}_\delta(\sigma)$.

\noindent
We use the bridging construction from \cite{cannon_local_2019} to define the region $\mathcal{R}_\delta(\sigma)$ in Lemma \ref{lem:bridging_sigma}. 

\begin{lemma}
\label{lem:partition_sigma_R}
For any particle configuration $\sigma \in \Omega_\mathcal{P}$ with total number of particles $n$ and $\rho_p$ fraction of particles of color $c_1$, given any $\delta > 0$, the region $\mathcal{R}_\delta(\sigma)$ defined in Lemma \ref{lem:bridging_sigma} is such that the number of particles in $\mathcal{R}_\delta(\sigma)$, $n_{\mathcal{R}_\delta}$ satisfies: $(\rho_p - \delta)n/(1-\delta)\leq n_{\mathcal{R}_\delta} \leq (\rho_p n)/(1-\delta)$. 
\end{lemma}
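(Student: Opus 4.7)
The plan is to extract $n_{\mathcal{R}_\delta}$ directly from the two ``$\delta$-fraction'' guarantees on $\mathcal{R}_\delta(\sigma)$ and its complement via a simple double-counting of $c_1$ particles. Introduce $a$ for the number of $c_1$-colored particles inside $\mathcal{R}_\delta(\sigma)$, and write $n_{\mathcal{R}_\delta} = a + b$, where $b$ counts the $c_2$-colored particles inside $\mathcal{R}_\delta(\sigma)$. By Lemma~\ref{lem:bridging_sigma}, I would read off two inequalities:
\begin{align*}
b \leq \delta\, n_{\mathcal{R}_\delta}, \qquad \rho_p n - a \leq \delta\,(n - n_{\mathcal{R}_\delta}).
\end{align*}
The first one restates that at most a $\delta$-fraction of the particles in $\mathcal{R}_\delta(\sigma)$ have color $c_2$, giving $a \geq (1-\delta)\,n_{\mathcal{R}_\delta}$. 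The second restates that at most a $\delta$-fraction of the particles in $\bar{\mathcal{R}}_\delta(\sigma)$ have color $c_1$, using the fact that the total number of $c_1$ particles is $\rho_p n$, so that the number outside $\mathcal{R}_\delta(\sigma)$ is $\rho_p n - a$.

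For the upper bound on $n_{\mathcal{R}_\delta}$, I would combine $a \geq (1-\delta)\,n_{\mathcal{R}_\delta}$ with the trivial inequality $a \leq \rho_p n$, immediately yielding $n_{\mathcal{R}_\delta} \leq \rho_p n/(1-\delta)$. For the lower bound, I would rearrange the second displayed inequality to $a \geq \rho_p n - \delta n + \delta\, n_{\mathcal{R}_\delta}$, and combine with the trivial inequality $a \leq n_{\mathcal{R}_\delta}$ to obtain $(1-\delta)\,n_{\mathcal{R}_\delta} \geq (\rho_p - \delta)\, n$, i.e.\ $n_{\mathcal{R}_\delta} \geq (\rho_p - \delta)n/(1-\delta)$.

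There is no real obstacle here; the lemma is purely bookkeeping on the two guarantees of Lemma~\ref{lem:bridging_sigma}. The only subtle point is interpreting each ``$\delta$-fraction'' clause as a statement about the fraction of particles \emph{within} the given region (rather than a fraction of the global color class), which is forced by the need for both the upper and lower bound to come out with a $(1-\delta)$ denominator. Given this reading, the two inequalities above, together with the trivial bounds $a \leq \min(\rho_p n, n_{\mathcal{R}_\delta})$, complete the proof in a few lines.
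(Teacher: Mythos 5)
Your proposal is correct and is essentially the same argument as the paper's: you introduce the count of $c_1$-particles inside $\mathcal{R}_\delta(\sigma)$ (the paper's $n_{\mathcal{R}_\delta}^1$, your $a$), extract the two inequalities $a \geq (1-\delta)n_{\mathcal{R}_\delta}$ and $\rho_p n - a \leq \delta(n-n_{\mathcal{R}_\delta})$ from Lemma~\ref{lem:bridging_sigma}, and combine them with the trivial bounds $a \leq \min\{\rho_p n, n_{\mathcal{R}_\delta}\}$ to obtain both halves of the claimed range. Your reading of the ``$\delta$-fraction'' clauses as fractions within each region is also the intended one.
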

\noindent The proof of Lemma \ref{lem:partition_sigma_R} follows from noting that the particles in $\mathcal{R}_\delta(\sigma)$ and $\bar{\mathcal{R}}_\delta(\sigma)$ are predominantly of the colors $c_1$ and $c_2$ respectively, with an error fraction bounded by $\delta$, and enforcing that the total number of particles with the color $c_1$ is $\rho_p \,n$.

\markred{
\begin{proof}
The total number of particles with the label \emph{aligned} is $\rho_p n$ since the fraction of particles with orientation $\theta_p$ is $\rho_p n$. Moreover, the number of particles of color $c_1$ in $\mathcal{R}_\delta(\sigma)$, $n_{\mathcal{R}_\delta}^{1}$, and the number of particles of color $c_1$ in $\bar{\mathcal{R}}_\delta(\sigma)$, $n_{\bar{\mathcal{R}}_\delta}^{1}$, obey: 
\begin{equation*}
    \min\{ n_{\mathcal{R}_\delta}, \rho_p n\} \geq n_{\mathcal{R}_\delta}^{1} \geq (1-\delta)n_{\mathcal{R}_\delta}\,,\, n_{\bar{\mathcal{R}}_\delta}^{1} \leq \delta (n-n_{\mathcal{R}_\delta}) \Rightarrow n_{\mathcal{R}_\delta} \geq n_{\mathcal{R}_\delta}^{1} \geq \rho_p n - \delta (n-n_{\mathcal{R}_\delta}) 
\end{equation*}
where the first set of inequalities comes from noting that $n_{\mathcal{R}_\delta}^{1}$ must be smaller than the total number of particles with the color $c_1$, $\rho_p n$, as well as the total number of particles in $\mathcal{R}_\delta(\sigma)$, $n_{\mathcal{R}_\delta}$, and must be greater than or equal to the number of bridged particles in $\mathcal{R}_\delta(\sigma)$ which is greater than or equal to $(1-\delta)n_{\mathcal{R}_\delta}$. The second inequality comes from noting that the number of particles with color $c_1$ in $\bar{\mathcal{R}}_\delta(\sigma)$ is the number of unbridged particles in $\bar{\mathcal{R}}_\delta(\sigma)$, which can be at most a $\delta$ fraction of the particles in $\bar{\mathcal{R}}_\delta(\sigma)$.
Using the above inequalities, we get for $n_{\mathcal{R}_\delta}$:
\begin{align*}
& n_{\mathcal{R}_\delta} \geq \rho_p n - \delta (n-n_{\mathcal{R}_\delta}) \Rightarrow n_{\mathcal{R}_\delta} \geq \frac{\rho_p - \delta}{1-\delta}n  \\
& \rho_p n \geq n_{\mathcal{R}_\delta} (1-\delta) \Rightarrow n_{\mathcal{R}_\delta} \leq \frac{\rho_p}{1-\delta}n
\end{align*}
which proves the lemma.
\end{proof}
}

\begin{lemma}
\label{lem:critical_alpha}
For a connected hole-free $\alpha$-compressed configuration $\sigma \in \Omega_\mathcal{P}$ that is not $(1-\eta)$-aligned for some $\eta<1$, given any $\delta$ where $0<\delta < \min\{q^{-1},1-\eta\}$, 
the internal boundary contour length $\vert bd_\mathrm{int}(\mathcal{R}_\delta)\vert$ of the region $\mathcal{R}_\delta(\sigma)$ defined in Lemma \ref{lem:bridging_sigma} obeys the lower bound
$\vert bd_\mathrm{int}(\mathcal{R}_\delta) \vert \geq \nu\sqrt{n} (\alpha_c(\delta, \eta, q) - \alpha)$
%the region $\mathcal{R}_\delta(\sigma)$ defined in Lemma \ref{lem:bridging_sigma} is such that the length of its internal boundary contour 
%$\vert bd_\mathrm{int}(\mathcal{R}_\delta)\vert $ %obeys the lower bound
%\[\vert bd_\mathrm{int}(\mathcal{R}_\delta) \vert \geq \nu\sqrt{n} (\alpha_c(\delta, \eta, q) - \alpha)\,,\]
for any $\nu < 2\sqrt{3}$ and $n$ sufficiently large, where
%\marginpar{\tiny DR: ACK!}
\begin{align*}
\alpha_c(\delta, \eta, q) := \min\left\{ \sqrt{\frac{q^{-1}-\delta}{1-\delta}} + \sqrt{\frac{1-q^{-1}}{1-\delta}}\,,\, \sqrt{\frac{\eta}{1-\delta}}+\sqrt{\frac{1-(\eta + \delta)}{1-\delta}} \right\}.
\end{align*}
\end{lemma}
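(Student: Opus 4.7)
The plan is to combine Lemma~\ref{lem:partition_sigma_R} with Lemma~\ref{lem:min_cut_length} applied to the region $R = \mathcal{R}_\delta(\sigma)$, and then to minimize the geometric quantity $\sqrt{\kappa}+\sqrt{1-\kappa}$ over the range of $\kappa := n_{\mathcal{R}_\delta}/n$ that the hypothesis allows. Because $\sigma$ is not $(1-\eta)$-aligned, the fraction $\rho_p$ of particles of the most popular orientation satisfies $1/q \leq \rho_p < \eta$, and Lemma~\ref{lem:partition_sigma_R} then pins $\kappa$ down to
\[
\frac{\rho_p - \delta}{1-\delta} \;\leq\; \kappa \;\leq\; \frac{\rho_p}{1-\delta}.
\]
Since $\mathcal{R}_\delta(\sigma)$ contains $\kappa n$ particles and $\sigma$ is $\alpha$-compressed, Lemma~\ref{lem:min_cut_length} gives
\[
\vert bd_\mathrm{int}(\mathcal{R}_\delta)\vert \;\geq\; \nu\sqrt{n}\bigl(\sqrt{\kappa} + \sqrt{1-\kappa} - \alpha\bigr)
\]
for any $\nu < 2\sqrt{3}$ and $n$ sufficiently large, so it suffices to establish $\sqrt{\kappa} + \sqrt{1-\kappa} \geq \alpha_c(\delta, \eta, q)$ uniformly over admissible $\kappa$.

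Next I would use that $f(x) := \sqrt{x}+\sqrt{1-x}$ is strictly concave and symmetric about $1/2$ on $[0,1]$, so on any closed subinterval of $[0,1]$ its minimum is attained at an endpoint. As $\rho_p$ ranges over $[1/q, \eta)$, both endpoints of the $\kappa$-window $[(\rho_p-\delta)/(1-\delta),\,\rho_p/(1-\delta)]$ vary continuously with $\rho_p$, each window has positive width $\delta/(1-\delta)$ so consecutive windows overlap, and the union of all these windows is
\[
I \;:=\; \Bigl[\tfrac{q^{-1} - \delta}{1-\delta},\; \tfrac{\eta}{1-\delta}\Bigr);
\]
the hypothesis $\delta < \min\{q^{-1}, 1-\eta\}$ keeps both ends of $I$ strictly inside $(0,1)$. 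By the endpoint principle for concave functions, $\inf_{\kappa \in I} f(\kappa) = \min\{f((q^{-1}-\delta)/(1-\delta)),\, f(\eta/(1-\delta))\}$, and unpacking $f$ at these two arguments reproduces exactly the two quantities inside the $\min$ defining $\alpha_c(\delta,\eta,q)$. Substituting this lower bound into the estimate from Lemma~\ref{lem:min_cut_length} completes the proof.

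The one delicate bookkeeping point is that $\rho_p < \eta$ strictly, so $\kappa$ never actually attains the right endpoint $\eta/(1-\delta)$, yet $\alpha_c$ is defined using $f(\eta/(1-\delta))$. This is harmless because $f$ is continuous and the claim is a non-strict lower bound: when the right endpoint value is the smaller of the two, $f(\kappa) > f(\eta/(1-\delta))$ for every admissible $\kappa$, and the infimum equals $f(\eta/(1-\delta))$ by continuity. The remaining checks---that $\mathcal{R}_\delta(\sigma)$ is a valid particle-occupied region to feed into Lemma~\ref{lem:min_cut_length}, and that the same ``$\nu < 2\sqrt{3}$, $n$ sufficiently large'' regime transfers---are immediate from the hypotheses of that lemma; I do not anticipate a substantive obstacle beyond these routine items.
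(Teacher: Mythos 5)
Your proposal is correct and follows essentially the same route as the paper: combine Lemma~\ref{lem:partition_sigma_R} (giving the window for $n_{\mathcal{R}_\delta}/n$) with Lemma~\ref{lem:min_cut_length}, then use that $\kappa \mapsto \sqrt{\kappa}+\sqrt{1-\kappa}$ is concave and symmetric about $1/2$ so its minimum over the interval occurs at an endpoint, which unpacks to exactly the two terms in $\alpha_c$. The only difference is cosmetic: the paper plugs the bounds $\rho_p \in [q^{-1},\eta]$ directly into the window rather than arguing via a union of overlapping windows, and it treats the strict inequality $\rho_p<\eta$ loosely where you handle the limiting endpoint explicitly via continuity.
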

%\noindent The proof follows from Lemma \ref{lem:partition_sigma_R} and bounds on the minimum internal boundary length of a set $R \subseteq V(\Lambda_\mathcal{P})$ of $\kappan n$ 

%\noindent
%We make use of Lemmas \ref{lem:partition_sigma_R} and \ref{lem:min_cut_length} to show Lemma \ref{lem:critical_alpha}.
Lemma \ref{lem:critical_alpha} is a direct consequence of Lemmas \ref{lem:partition_sigma_R} and \ref{lem:min_cut_length}.
%\noindent The proof follows from using Lemmas \ref{lem:partition_sigma_R}, \ref{lem:min_cut_length}, and an analysis of the function $f(x) = \sqrt{x} + \sqrt{1-x}$.
Given an $\alpha$-compressed boundary~$\mathcal{P}$, let $S_\mathcal{P}^\eta \subseteq \Omega_\mathcal{P}$ be the set of $\alpha$-compressed configurations with boundary $\mathcal{P}$ that are not $(1-\eta)$-aligned for some $\eta<1$. For each configuration $\sigma \in S_\mathcal{P}^\eta$, let $\bar{\mathcal{R}}_\delta(\sigma)$ be the complement of the region $\mathcal{R}_\delta(\sigma)$ defined in Lemma \ref{lem:bridging_sigma}. 

\markred{
\begin{proof}
Since $\sigma$ is not $(1-\eta)$-aligned, the fraction $\rho_p$ of particles with the color $c_1$ obeys $\rho_p \geq q^{-1}\,,\, \rho_p \leq \eta$. By Lemma \ref{lem:partition_sigma_R}, the fraction of the number of particles in the region $\mathcal{R}_\delta$, $n_{\mathcal{R}_\delta}/n$ obeys: 
\[ \frac{q^{-1}-\delta}{1-\delta}  \leq \frac{n_{\mathcal{R}_\delta}}{n} \leq \frac{\eta}{1-\delta} \,.\]
By Lemma \ref{lem:min_cut_length}, $bd_\mathrm{int}(R)$ for a region $R$ which contains a fraction $\kappa$ of the total number of particles, $n$, is at least $\nu \sqrt{n}(\sqrt{\kappa} + \sqrt{1-\kappa} - \alpha)$. We note that the expression $\sqrt{\kappa} + \sqrt{1-\kappa}$, is largest when $\kappa = 1/2$, and monotonically decreases with increasing $\vert \kappa - 1/2\vert$. Hence, $\vert bd_\mathrm{int}(\mathcal{R}_\delta) \vert$ attains its lowest value when evaluated for the maximum and minimum values of $n_{\mathcal{R}_\delta}/n$, giving: 
\[ \frac{\vert bd_\mathrm{int}(\mathcal{R}_\delta) \vert}{\nu\sqrt{n}} \geq  \alpha_c(\delta, \eta, q) -\alpha  \]
which proves the Lemma.
\end{proof}
}

Let $\mathcal{P}^\mathrm{int}_{\bar{\mathcal{R}}_\delta}$ denote the walk on the edges of $G_\Delta$, each of whose endpoints is a particle in $\bar{\mathcal{R}}_\delta(\sigma)$ that is connected by an edge in $G_\Delta$ to a particle in $\mathcal{R}_\delta(\sigma)$. Let $\Theta^\mathrm{int}_{\bar{\mathcal{R}}_\delta}$ denote the set of orientations of particles that are incident to an edge in $\mathcal{P}^\mathrm{int}_{\bar{\mathcal{R}}_\delta}$, where the orientation of a particle appears as many times as the number of edges connecting that particle to a particle in $\mathcal{R}_\delta(\sigma)$. Note that $\vert\Theta^\mathrm{int}_{\bar{\mathcal{R}}_\delta}\vert = \vert bd_\mathrm{int}(\bar{\mathcal{R}}_\delta) \vert$. Let the orientation which appears the most number of times in the set $\Theta^\mathrm{int}_{\bar{\mathcal{R}}_\delta}$ be $2\pi\bar{\theta}_p/q$, where $\bar{\theta}_p\in\{0,1,\ldots,q-1 \}$. We consider a map $f_\eta:S_{\mathcal{P}}^\eta\to\Omega_{\mathcal{P}}$ which applies a cyclic shift to the orientations of all particles in $\bar{\mathcal{R}}_\delta(\sigma)$, so that under $f_\eta$, a particle orientation $\theta$ is mapped to $(\theta+(\theta_p-\bar{\theta}_p))\Mod{q}$. Note that this transformation maps the orientation $\bar{\theta}_p$ to   $\theta_p$.

\begin{lemma}\cite{cannon_local_2019}
\label{lem:num_preimages_f_eta}
For a configuration $\tau\in \mathrm{Im}(f_\eta(S_{\mathcal{P}}^\eta))$, the number of preimages $\sigma \in S_{\mathcal{P}}^\eta$ for which $\vert bd_\mathrm{int}(\mathcal{R}_\delta(\sigma))\vert =\ell$, where $\mathcal{R}_\delta(\sigma)$ is defined in Lemma \ref{lem:partition_sigma_R}, is at most  $q\,3^{\vert\mathcal{P}\vert} 4^{\frac{1+3\delta}{4\delta}\ell}$. 
\end{lemma}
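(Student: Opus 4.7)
The plan is to invert $f_\eta$ explicitly. A preimage $\sigma \in S_\mathcal{P}^\eta$ of a given $\tau \in \mathrm{Im}(f_\eta)$ is completely determined by the pair $(\bar{\theta}_p,\ \bar{\mathcal{R}}_\delta(\sigma))$: knowing $\bar{\theta}_p$ specifies the cyclic shift, and knowing the region $\bar{\mathcal{R}}_\delta(\sigma) \subseteq V(\Lambda_\mathcal{P})$ tells us which particles to apply its inverse to. Since $\bar{\theta}_p \in \{0,1,\ldots,q-1\}$, this contributes a factor of $q$. What remains is to show that the number of possible regions $\bar{\mathcal{R}}_\delta(\sigma)$ whose internal boundary contour has length $\ell$ is at most $3^{|\mathcal{P}|}\,4^{(1+3\delta)\ell/(4\delta)}$.

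To bound the number of such regions, I would encode each region by its internal boundary, drawn on the dual lattice $G_{\varhexagon}$, augmented by the bridging construction underlying Lemma \ref{lem:bridging_sigma}. Recall that this construction attaches every ``closed'' (unbridged) contour to the outer boundary $\mathcal{P}$ using extra dual edges called bridges, and produces a bridge system whose total bridge length $B$ is controlled by the contour length: specifically, one has $B \le \tfrac{1-\delta}{4\delta}\,\ell$, so the combined length of contours and bridges is at most $\tfrac{1+3\delta}{4\delta}\,\ell$. This is the quantitative content of the bridging construction in \cite{cannon_local_2019}, and follows from the fact that, by Lemma \ref{lem:bridging_sigma}, at most a $\delta$-fraction of the particles inside $\bar{\mathcal{R}}_\delta(\sigma)$ carry color $c_1$, which tightly limits the ``defect mass'' that bridges must reach.

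Given this encoding, counting splits cleanly into two steps. First, for each of the $|\mathcal{P}|$ edges along the outer boundary, decide whether a bridged component starts there and, if so, in which of two canonical local orientations; this gives at most three choices per boundary edge and a total factor of $3^{|\mathcal{P}|}$. Second, trace out the bridged contour system starting from these designated points: at each step the walker sits at a vertex of degree $3$ in $G_{\varhexagon}$, and we additionally record a single bit indicating whether the current edge is a bridge or a true contour edge, giving at most $4$ choices per traced step. Since the total length traced is at most $\tfrac{1+3\delta}{4\delta}\,\ell$, this yields at most $4^{(1+3\delta)\ell/(4\delta)}$ possibilities. Multiplying with the $q$ choices of shift gives the claimed $q\,3^{|\mathcal{P}|}\,4^{(1+3\delta)\ell/(4\delta)}$.

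The main obstacle is the bridging step, particularly the inequality $B \le \tfrac{1-\delta}{4\delta}\,\ell$ that underpins the exponent $\tfrac{1+3\delta}{4\delta}$. This requires a careful adaptation of the bridging argument of \cite{miracle_clustering_2011,cannon_local_2019}: one must pair each bridge edge with a ``defect'' particle of the minority color inside $\bar{\mathcal{R}}_\delta$, and use the $\delta$-fraction bound from Lemma \ref{lem:bridging_sigma} to convert the number of defects into a length bound involving $\ell$. Once this length bound is established, the two-stage encoding (starting points on $\mathcal{P}$ and degree-$3$ tracing with a bridge/contour bit) is routine and immediately delivers the stated count.
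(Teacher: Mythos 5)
Your proposal takes essentially the same route as the paper: the paper's proof is a one-line citation to Lemma~7.6 of \cite{cannon_local_2019} for the $3^{|\mathcal{P}|}\,4^{\frac{1+3\delta}{4\delta}\ell}$ count of bridge systems of a given bridged-contour length, plus the observation that recovering $\sigma$ from $\tau$ requires one of $q$ cyclic shifts applied to $\bar{\mathcal{R}}_\delta(\sigma)$. Your reconstruction---invert the shift ($q$ choices), then encode the internal boundary of $\bar{\mathcal{R}}_\delta(\sigma)$ via its bridge system (starting points on $\mathcal{P}$ contributing $3^{|\mathcal{P}|}$, a degree-$3$ dual-lattice traversal contributing roughly $4$ per edge, and total trace length $\frac{1+3\delta}{4\delta}\ell$ coming from the bridge-length bound $B\le\frac{1-\delta}{4\delta}\ell$)---is exactly the structure of that cited lemma.
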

%\begin{proof}
\noindent
The proof follows from Lemma~$7.6$ in \cite{cannon_local_2019} and by noting that once the internal boundary contour of $\mathcal{R}_\delta(\sigma)$ is known, one of $q$ cyclic shifts in $\bar{\mathcal{R}}_\delta(\sigma)$ recovers $\sigma$, given $\tau$.
%\end{proof}

%\vskip.1in
%\noindent\underbar{\bf The Potts Model:} \ 
%\subparagraph*{The Potts Model:}
In this section so far, our results were valid for both the Potts and the clock models. We now consider specifically the case of the Potts model with stationary distribution $\pi_\mathrm{Potts}$.  Using the definition of
$f_\eta$, we find the following.

\begin{lemma}
\label{lem:min_gain_f_eta}
For a configuration $\sigma \in S_{\mathcal{P}}^\eta$, let region $\mathcal{R}_\delta(\sigma)$ be defined as in Lemma \ref{lem:bridging_sigma} with $\vert bd_\mathrm{int}\vert = \ell$. For the new configuration $f_\eta(\sigma)$ under the map $f_\eta$, the ratio $w(\sigma)/w(f(\sigma))$ is at most $(1/\gamma)^{\ell/(q-1)}$. 
\end{lemma}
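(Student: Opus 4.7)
The plan is to show that $f_\eta$ decreases the heterogeneous-edge count by at least $\ell/(q-1)$, while leaving the perimeter unchanged. Since $w_\mathrm{Potts}(\sigma) = (\lambda\gamma)^{-p(\sigma)}\gamma^{-h(\sigma)}$, and $f_\eta$ only permutes orientations without relocating any particle, the boundary $\mathcal{P}$ and hence $p(\sigma)$ is preserved. Therefore $w(\sigma)/w(f_\eta(\sigma)) = \gamma^{h(f_\eta(\sigma))-h(\sigma)}$, and the bound $\gamma^{-\ell/(q-1)}$ reduces to the combinatorial claim $h(\sigma) - h(f_\eta(\sigma)) \geq \ell/(q-1)$.

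To prove that claim, I would first localize where heterogeneity can change. Since $f_\eta$ applies a single uniform cyclic shift to every particle in $\bar{\mathcal{R}}_\delta(\sigma)$ and fixes every particle in $\mathcal{R}_\delta(\sigma)$, any edge whose two endpoints lie entirely in $\mathcal{R}_\delta(\sigma)$ or entirely in $\bar{\mathcal{R}}_\delta(\sigma)$ retains its (in)homogeneity; only edges of the internal boundary $bd_\mathrm{int}(\mathcal{R}_\delta(\sigma))$ (equivalently, the edges counted by $\mathcal{P}^\mathrm{int}_{\bar{\mathcal{R}}_\delta}$) can be affected. By Lemma~\ref{lem:bridging_sigma}, every particle on the boundary of $\mathcal{R}_\delta(\sigma)$ has color $c_1$ (orientation $\theta_p$) and every particle on the boundary of $\bar{\mathcal{R}}_\delta(\sigma)$ has color $c_2$ (orientation in $\{0,\ldots,q-1\}\setminus\{\theta_p\}$). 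Thus all $\ell$ internal boundary edges are heterogeneous in $\sigma$, and each such edge either becomes homogeneous under $f_\eta$ or stays heterogeneous.

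Next I would pinpoint exactly which internal boundary edges become homogeneous. For an edge joining a $c_1$-particle in $\mathcal{R}_\delta(\sigma)$ (orientation $\theta_p$, unchanged) to a particle in $\bar{\mathcal{R}}_\delta(\sigma)$ with original orientation $\theta$, the $\bar{\mathcal{R}}_\delta$-endpoint becomes $(\theta + \theta_p - \bar{\theta}_p) \bmod q$, which equals $\theta_p$ precisely when $\theta = \bar{\theta}_p$. Consequently, the number of edges that become homogeneous equals the multiplicity of $\bar{\theta}_p$ in $\Theta^\mathrm{int}_{\bar{\mathcal{R}}_\delta}$.

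Finally I would apply pigeonhole: the multiset $\Theta^\mathrm{int}_{\bar{\mathcal{R}}_\delta}$ has size $\ell$ and its entries are drawn from the $q-1$ orientations different from $\theta_p$ (since the $\bar{\mathcal{R}}_\delta$-side endpoints are all color $c_2$). Since $\bar{\theta}_p$ is defined as the most frequent element of this multiset, it occurs at least $\lceil \ell/(q-1) \rceil \geq \ell/(q-1)$ times. Combining, $h(\sigma)-h(f_\eta(\sigma)) \geq \ell/(q-1)$, which yields the stated bound. There is no real obstacle here; the only care needed is the observation that the shift is a pure relabeling, so edges away from the internal boundary contribute zero to the difference, making the pigeonhole argument tight.
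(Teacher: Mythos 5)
Your proposal is correct and follows essentially the same approach as the paper's proof: both arguments localize the change in heterogeneous-edge count to the $\ell$ internal boundary edges, observe that $f_\eta$ homogenizes exactly those whose $\bar{\mathcal{R}}_\delta$-side endpoint has orientation $\bar{\theta}_p$, and apply pigeonhole over the $q-1$ non-$\theta_p$ orientations (since by Lemma~\ref{lem:bridging_sigma} the $\bar{\mathcal{R}}_\delta$-side boundary has color $c_2$) to conclude $\bar{\theta}_p$ accounts for at least $\ell/(q-1)$ of them. Your write-up is more explicit than the paper's about why edges away from $bd_\mathrm{int}$ and the perimeter are unaffected, but the key step is the same.
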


\markred{
\begin{proof}
By the definition of $f_\eta$, all particles on the boundary of $\bar{\mathcal{R}}_\delta(\sigma)$ with orientation $\bar{\theta}_p$ are mapped to particles with orientation $\theta_p$. Note that $\ell$ is also the number of edges connecting particles in $\bar{\mathcal{R}}_\delta(\sigma)$ to particles in $\mathcal{R}_\delta(\sigma)$. Since particles with orientation $\bar{\theta}_p$ contribute at least $1/(q-1)$ fraction of such edges, and are removed by the transformation $f_\eta$, the number of heterogeneous edges decreases by at least $\ell/(q-1)$.
\end{proof}
}

% \begin{lemma}\cite{cannon_local_2019}
% \label{lem:num_preimages_contours_bridges}
% The number of configurations with boundary $\mathcal{P}$ with region $R$, crossing contours of total length $x$ and bridged isolated contours of total length $y$, is at most $3^{\vert\mathcal{P}\vert} 4^{\frac{1+3\delta}{4\delta}(x+y)}$.
% \end{lemma}

%Lemma~\ref{lem:min_gain_f_eta} follows from the definition of $f_\eta$.
%
The proof of Theorem \ref{thm:con_align_thm} follows from an information theoretic argument similar to that in \cite{cannon_local_2019}, by showing that the minimum gain in the weight of a configuration under the map $f_\eta$ outweighs the maximum number of preimages of the map, and using Lemma \ref{lem:critical_alpha} to get a lower bound on the gain under $f_\eta$. A key component is ensuring that it is possible to choose the parameter $0<\delta<q^{-1}$, so that the conditions on $\alpha$ and $\gamma$ described in the theorem statement can be simultaneously satisfied.

\markred{
\begin{proof}[Proof of Theorem \ref{thm:con_align_thm}]
Let $S_{\mathcal{P}}^{\eta}$ denote the set of all configurations with the $\alpha$-compressed boundary $\mathcal{P}$, that are not $(1-\eta)$-aligned. Then
\[ \pi_\mathcal{P}(S_{\mathcal{P}}^{\eta}) = \sum_{\sigma \in S_{\mathcal{P}}^{\eta}} \pi_\mathcal{P}(\sigma) \leq \sum_{\tau \in \Omega_{\mathcal{P}}} \sum_{\sigma \in f^{-1}(\tau)} \pi_\mathcal{P}(\sigma) \leq 
\sum_{\tau \in \Omega_{\mathcal{P}}} \pi_\mathcal{P}(\tau) \left(\frac{\sum_{\sigma \in f^{-1}(\tau)} \pi_\mathcal{P}(\sigma)}{\pi_\mathcal{P}(\tau)}\right) \,.
\]
Using Lemma \ref{lem:num_preimages_f_eta} and Lemma \ref{lem:min_gain_f_eta},
\[ 
\frac{\sum_{\sigma \in f^{-1}(\tau)} \pi_\mathcal{P}(\sigma)}{\pi_\mathcal{P}(\tau)} \leq \sum_{\ell} q\,3^{\vert \mathcal{P}\vert}\,4^{ \frac{1+3\delta}{4\delta}\ell}\,\left(\frac{1}{\gamma} \right)^{\frac{\ell}{q-1}}
\] 
Since $1< \alpha < \alpha^\ast(\eta, q)$, and $\alpha_c(\delta, \eta, q)$ defined in Lemma \ref{lem:critical_alpha} is a continuous monotonically decreasing function of $\delta$ with $\alpha_c(0, \eta, q) = \alpha^\ast(\eta, q)$ and $\alpha_c(\delta^\ast, \eta, q) = 1$, there exists $0<\delta_0<\delta^\ast(\eta, q)$ such that $\alpha_c(\delta_0, \eta, q) = (\alpha + \alpha^\ast(\eta, q))/2$. 

Moreover, note that $\alpha_c(\delta,\eta, q)$ is a concave-downward function of $\delta$ since  $\partial^2\alpha_c(\delta,\eta, q)/\partial \delta^2 <0$  $\forall 0<\delta<\delta^\ast(\eta, q)$. Denoting $\alpha_c(\delta,\eta, q)$ by $\alpha_c(\delta)$, $\delta^\ast(\eta, q)$ by $\delta^\ast$ and $\alpha^\ast(\eta, q)$ by $\alpha^\ast$, this gives:
\[ \alpha_c((1-\lambda)\delta^\ast + \lambda \cdot 0 ) > (1-\lambda)\alpha_c(\delta^\ast) + \lambda \alpha_c(0) = (1-\lambda)\alpha_c(\delta^\ast) + \lambda \alpha^\ast \,,\; \forall 0<\lambda < 1 \,.\]
Setting $\lambda = \frac{\tfrac{1}{2}(\alpha^\ast + \alpha) - 1}{\alpha^\ast - 1}$, we get:
\begin{align}
 \alpha_c((1-\lambda)\delta^\ast) &> \frac{\alpha^\ast + \alpha}{2} = \alpha_c(\delta_0) \Rightarrow (1-\lambda)\delta^\ast < \delta_0 \Rightarrow = \frac{\alpha^\ast - \alpha}{2(\alpha^\ast - 1)}\delta^\ast  < \delta_0 \label{concave_delta_inequality}
\end{align}
where the second inequality follows from $\alpha_c(\delta)$ being monotonically decreasing, and we substitute the value of $\lambda$ to get the last inequality. 

% Since $\gamma,\alpha$ satisfy the conditions described in the Theorem statement, there exists $\delta^{\ast}>0$ such that $\delta^{\ast} <\min\{q^{-1},1-\eta\}$ and $\alpha < \alpha_c(\delta, \eta, q)\,,\,\gamma > \gamma_c(\alpha, \eta, \delta^\ast)$. 

By Lemma \ref{lem:critical_alpha}, $\ell\geq \nu \sqrt{n}(\alpha_c(\delta_0, \eta, q) - \alpha)$ for any $\nu<2\sqrt{3}$ and $n$ sufficiently large. Since $\vert \mathcal{P}\vert \leq \alpha p_\mathrm{min} < 2\alpha\sqrt{3}\sqrt{n}$ by Lemma \ref{lem:isoperimetric_triangular_upper}, $\vert \mathcal{P}\vert$ satisfies:
\begin{equation}
\vert \mathcal{P}\vert <  \frac{2\alpha\sqrt{3}\ell}{\nu(\alpha_c(\delta_0, \eta, q)-\alpha)} = \frac{4\alpha\sqrt{3}\ell}{\nu(\alpha^\ast-\alpha)} \label{mod_P_inequality}
\end{equation}

Using the inequalities in equations~(\ref{mod_P_inequality}), (\ref{concave_delta_inequality}), we get: 
\begin{align*}
\frac{\sum_{\sigma \in f^{-1}(\tau)} \pi_\mathcal{P}(\sigma)}{\pi_\mathcal{P}(\tau)} &\leq q\sum_{\ell=\nu\sqrt{n}(\alpha^\ast -\alpha)/2}^{\infty} 3^{4\alpha\sqrt{3}\ell/\left(\nu(\alpha^\ast - \alpha)\right)}\,4^{\frac{3}{4}+\frac{1}{4\delta_0}\ell}\,\left(\frac{1}{\gamma} \right)^{\frac{\ell}{q-1}} \\
&<  q\sum_{\ell=\nu\sqrt{n}(\alpha^\ast -\alpha)/2}^{\infty} \left(\frac{ 3^{4\alpha\sqrt{3}/\left(\nu(\alpha^\ast - \alpha)\right)}\,4^{\frac{3}{4}+\frac{\alpha^\ast - 1}{2(\alpha^\ast - \alpha)}}}{\gamma ^{\frac{1}{q-1}}}\right)^\ell \,.
\end{align*}

Since we consider $\gamma$ such that: 
\[\gamma > \left( 3^\frac{2\alpha}{\alpha^\ast(\eta, q) - \alpha}\,4^{\frac{3}{4}+\frac{\alpha^\ast(\eta, q) - 1}{2\delta^\ast(\eta, q)\left(\alpha^\ast(\eta, q) - \alpha\right)}} \right)^{q-1}\,,\] 
there exists $\nu<2\sqrt{3}$ such that 
\[ \frac{3^{2\alpha\sqrt{3}/(\nu(\alpha_c - \alpha))}\,4^{\frac{1+3\delta^\ast}{4\delta^\ast}}}{\gamma^{1/(q-1)}} < 1 \Rightarrow \,\frac{\sum_{\sigma \in f^{-1}(\tau)} \pi_\mathcal{P}(\sigma)}{\pi_\mathcal{P}(\tau)} < \zeta^{\sqrt{n}}\,,\] 
for some $\zeta<1$.
Noting that $\sum_{\tau\in\Omega_\mathcal{P}} \pi_\mathcal{P}(\tau) = 1$, the proof of the Theorem follows.
\end{proof}
}

%\vskip.1in
%\noindent\underbar{\bf The Clock Model}: \ 
\subparagraph*{The Clock Model:}
Lemma \ref{lem:min_gain_f_eta} and Theorem \ref{thm:con_align_thm} hold for the clock model with stationary distribution $\pi_\mathrm{clock}$, with $\gamma$ replaced by $\gamma^{1-\cos(2\pi/q)}$ in both. The proofs follow on similar lines as for the Potts model.

\subsection{Non-Alignment and Expansion in Connected SOPS}
An interesting artifact of the alignment algorithm is that when $\lambda, \gamma$ are small, the opposite properties are achieved, namely non-alignment (Theorem~\ref{thm:conn_non_alignment}) and expansion (Theorem~\ref{thm:conn_expansion}).

\begin{theorem}[Non-alignment in Connected SOPS]
\label{thm:conn_non_alignment}
For any $q \geq 2$ and $\epsilon \in (0,\frac{1}{q})$, when $\gamma > 0$ satisfies:
\begin{align*}
\gamma^3 < \left( 1 - \epsilon \frac{q}{q-1} \right)^{\frac{q-1}{q}-\epsilon} \left(1+\epsilon\,q \right)^{\frac{1}{q}+\epsilon} = 1+ \frac{\epsilon^2 q^2}{q-1} + O(\epsilon^3) \,,
\end{align*}
the probability that a configuration sampled from the stationary distribution of the Markov chain algorithm $\pi_\mathrm{Potts}$ is not $\epsilon$-\emph{non-aligned} is exponentially small,  for sufficiently large $n$.
\end{theorem}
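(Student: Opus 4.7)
The proof is a direct Peierls-style argument. The key numerical identity to spot at the outset is that the expression
$R(\epsilon,q) := (1+\epsilon q)^{1/q+\epsilon}(1-\epsilon q/(q-1))^{(q-1)/q-\epsilon}$
appearing on the RHS of the theorem's inequality equals exactly $\exp(D(1/q+\epsilon\,\|\,1/q))$, where $D(\cdot\,\|\,\cdot)$ is the binary Kullback-Leibler divergence. Consequently, Chernoff/Sanov gives
$\mathbb{P}(\mathrm{Bin}(n,1/q)>(1/q+\epsilon)n)\leq R(\epsilon,q)^{-n}$,
which controls the probability under a \emph{uniform} orientation assignment that some specific orientation dominates. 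The rest of the argument shows that the Gibbs weight $\gamma^{-h(\theta)}$ distorts this uniform bound by no more than $\gamma^{3n}$.

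\textbf{Reduction to a single shape.} Let $S_\epsilon\subseteq\Omega$ denote the set of configurations that are \emph{not} $\epsilon$-non-aligned, i.e., those for which some orientation $i$ satisfies $n_i(\sigma)>(1/q+\epsilon)n$. Decomposing $Z_\mathrm{Potts}=\sum_\mathcal{P}(\lambda\gamma)^{-|\mathcal{P}|}Z_\mathcal{P}$ with $Z_\mathcal{P}=\sum_\theta\gamma^{-h(\theta)}$, we see that $\pi_\mathrm{Potts}(S_\epsilon)$ is a $(\lambda\gamma)^{-|\mathcal{P}|}Z_\mathcal{P}$-weighted average of the conditional probabilities $\pi_{\mathrm{Potts},\mathcal{P}}(S_{\epsilon,\mathcal{P}})$, so
$\pi_\mathrm{Potts}(S_\epsilon)\leq\max_\mathcal{P}\pi_{\mathrm{Potts},\mathcal{P}}(S_{\epsilon,\mathcal{P}})$.
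The cyclic shift $i\mapsto i+1\bmod q$ preserves $\gamma^{-h(\theta)}$, so by symmetry and a union bound over the $q$ possible dominant orientations,
$\pi_{\mathrm{Potts},\mathcal{P}}(S_{\epsilon,\mathcal{P}})\leq q\cdot\pi_{\mathrm{Potts},\mathcal{P}}(n_0(\theta)>(1/q+\epsilon)n)$.

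\textbf{Crude min/max bound and Chernoff.} On a fixed shape $\mathcal{P}$ with $n$ occupied particles and $E_\mathcal{P}:=|E(\Lambda_\mathcal{P})|$ internal edges, we have $0\leq h(\theta)\leq E_\mathcal{P}$, so $\gamma^{-h(\theta)}\in[\gamma^{-E_\mathcal{P}},1]$ for $\gamma\geq 1$. Hence
$\pi_{\mathrm{Potts},\mathcal{P}}\bigl(n_0(\theta)>(1/q+\epsilon)n\bigr) \leq \frac{|A_\mathcal{P}|\cdot 1}{q^n\cdot\gamma^{-E_\mathcal{P}}} = \frac{|A_\mathcal{P}|}{q^n}\,\gamma^{E_\mathcal{P}}$,
where $A_\mathcal{P}=\{\theta\in[q]^n : n_0(\theta)>(1/q+\epsilon)n\}$. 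By the identity above, $|A_\mathcal{P}|/q^n\leq R(\epsilon,q)^{-n}$. Since each particle has at most $6$ neighbors on the triangular lattice, a degree sum gives $E_\mathcal{P}=3n-p(\sigma)/2\leq 3n$, so we obtain
$\pi_\mathrm{Potts}(S_\epsilon)\leq q\,\bigl(\gamma^3/R(\epsilon,q)\bigr)^n$,
which is exponentially small precisely when $\gamma^3<R(\epsilon,q)$.

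\textbf{Main technical point.} The only nontrivial step is ensuring that the $\gamma^{E_\mathcal{P}}$ factor from the crude min/max bound combined with the $E_\mathcal{P}\leq 3n$ lattice estimate is what produces the cubic exponent $\gamma^3$ in the final condition; matching this to the KL-divergence expression $R(\epsilon,q)$ is the identity that drives the proof. The bound is crude---one loses a factor as large as $\gamma^{3n}$---but it is adequate in the disordered regime ($\gamma$ near $1$) that the theorem targets. A tighter treatment (e.g., via the FK/random-cluster representation or a high-temperature expansion) would be needed to loosen the range of admissible $\gamma$, but is unnecessary here.
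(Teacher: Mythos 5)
Your core strategy matches the paper's: fix a boundary $\mathcal{P}$, replace the Gibbs factor $\gamma^{-h(\theta)}$ by its extreme values at a multiplicative cost of at most $\gamma^{3n+O(1)}$, and then bound the resulting uniform probability of a skewed orientation count. Your observation that $R(\epsilon,q)=\exp\bigl(D(1/q+\epsilon\,\|\,1/q)\bigr)$ is exactly the paper's function $h(\epsilon)$, and invoking Chernoff--Hoeffding directly is a cleaner packaging of what the paper does by hand with Stirling's approximation. That reformulation is a genuine improvement in readability and explains where the formula comes from.

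However, there is a gap in your definition of $S_\epsilon$. You define it as the set where some orientation $i$ satisfies $n_i>(1/q+\epsilon)n$, but a configuration fails to be $\epsilon$-non-aligned whenever \emph{any} orientation count is outside the interval $[(1/q-\epsilon)n,(1/q+\epsilon)n]$, in particular when some orientation is \emph{under}-represented. For $q=2$ the two events coincide (one below $1/2-\epsilon$ forces the other above $1/2+\epsilon$), but for $q\geq 3$ they do not: e.g.\ with $q=3$, a count vector proportional to $(1/3-\epsilon,\,1/3+\epsilon/2,\,1/3+\epsilon/2)$ is not $\epsilon$-non-aligned yet no orientation exceeds $(1/3+\epsilon)n$. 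You therefore also need to control $\mathbb{P}\bigl(\mathrm{Bin}(n,1/q)<(1/q-\epsilon)n\bigr)\leq \exp\bigl(-nD(1/q-\epsilon\,\|\,1/q)\bigr)$ and check that the \emph{same} threshold $\gamma^3<R(\epsilon,q)$ suffices. This requires the inequality $D(1/q-\epsilon\,\|\,1/q)\geq D(1/q+\epsilon\,\|\,1/q)$, which holds precisely when $1/q\leq 1/2$, i.e.\ $q\geq 2$ --- and this is exactly the nontrivial monotonicity claim ($h(\epsilon)\leq h(-\epsilon)$) that the paper's proof of Lemma~\ref{lem:non_alignment_P} spends the bulk of its calculus establishing via the auxiliary function $g(x)=h(-x)/h(x)$. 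Once you add that step, your argument is complete; as written, the union bound over the $q$ possible dominant orientations does not reach all of $S_\epsilon$. (A minor shared caveat: both your min/max estimate and the paper's ``lowest possible weight $\gamma^{-(3n-3)}$'' implicitly use $\gamma\geq 1$; the disordered regime targeted here is $\gamma$ near or slightly above $1$, so this is harmless, but worth flagging since the theorem is stated for $\gamma>0$.)
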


%\subparagraph*{Non-alignment in Connected SOPS}
For $\epsilon > 0$, we say a configuration is $\epsilon$-\emph{non-aligned} if the fraction of particles of each orientation is within an $\epsilon$-neighborhood of $q^{-1}$.
Let $S_\mathcal{P}^{\epsilon}$ denote the set of configurations which have boundary $\mathcal{P}$ and are not $\epsilon$-\emph{non-aligned}, and let $S^\epsilon$ be the set of configurations that are not $\epsilon$-\emph{non-aligned}. 
We show that when $\gamma$ is sufficiently close to $1$, the probability that a configuration drawn from the stationary distribution of the Markov chain algorithm $(\pi_\mathrm{Potts})$ is not $\epsilon$-\emph{non-aligned} is exponentially small.

To prove Theorem~\ref{thm:conn_non_alignment}, we first show Lemma~\ref{lem:non_alignment_P}, which applies to configurations corresponding to a specific choice of boundary $\mathcal{P}$. This is as the probabilities of configurations drawn from this restricted subset are independent of the choice of the parameter $\lambda$. Because of this, we observe that our non-alignment result (Theorem~\ref{thm:conn_non_alignment}) does not rely on our choice of $\lambda$ in the distribution $\pi_\mathrm{Potts}$.

% $\pi_\mathcal{P}(S_\mathcal{P}^{\epsilon})$ is exponentially small, where $\pi_\mathcal{P}$ denotes the stationary probability distribution conditioned on the boundary of the configuration being $\mathcal{P}$.
%\inlinecomm{define $\pi_\mathcal{P}$ in the preliminaries.}
\begin{lemma}
\label{lem:non_alignment_P}
For any $q \geq 2$ and $\epsilon \in (0,\frac{1}{q})$, for any given boundary $\mathcal{P}$ of a particle configuration with $n$ particles, if $\gamma > 0$ satisfies
\[ \gamma^3 < \left( 1 - \epsilon \frac{q}{q-1} \right)^{\frac{q-1}{q}-\epsilon} \left(1+\epsilon\,q \right)^{\frac{1}{q}+\epsilon}
= 1+ \frac{\epsilon^2 q^2}{q-1} + O(\epsilon^3),\]
the probability that a configuration sampled from the stationary distribution $\pi_\mathcal{P}$ is not $\epsilon$-\emph{non-aligned} is exponentially small.
\end{lemma}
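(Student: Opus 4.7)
The plan is to observe that conditioning on a fixed boundary $\mathcal{P}$ makes the factor $(\lambda\gamma)^{-\abs{\mathcal{P}}}$ constant across all configurations in $\Omega_\mathcal{P}$, so it cancels from $\pi_\mathcal{P}$: we have $\pi_\mathcal{P}(\sigma) = \gamma^{-h(\sigma)}/Z_\mathcal{P}$ with $Z_\mathcal{P} = \sum_{\sigma' \in \Omega_\mathcal{P}}\gamma^{-h(\sigma')}$, and in particular $\lambda$ plays no role, which is why the theorem statement is independent of $\lambda$. Since $0 \le h(\sigma) \le E_\mathcal{P}$ and the induced subgraph $\Lambda_\mathcal{P}$ on the triangular lattice satisfies $E_\mathcal{P} \le 3n$, I would first use the crude pointwise weight bound
\[
\pi_\mathcal{P}(S_\mathcal{P}^\epsilon) \;\le\; \frac{\abs{S_\mathcal{P}^\epsilon}}{\abs{\Omega_\mathcal{P}}}\cdot \max(\gamma,\gamma^{-1})^{E_\mathcal{P}},
\]
which reduces the problem to a counting estimate on the uniform distribution over the $q^n$ orientation assignments.

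For the counting step, I would apply the method of types to the orientation-count vector $(n_0,\ldots,n_{q-1})$. A configuration lies in $S_\mathcal{P}^\epsilon$ iff some $n_\theta/n$ lies outside $(1/q-\epsilon,1/q+\epsilon)$, so by a union bound over the $q$ orientations and the two tail directions, $\abs{S_\mathcal{P}^\epsilon}/q^n$ is governed by the most balanced bad type, which by convexity of KL is the one with a single coordinate at $1/q+\epsilon$ (the binding side; the $1/q-\epsilon$ side gives a strictly larger rate for $q\ge 2$, by a third-derivative computation of Bernoulli KL at $1/q$) and the remaining $q-1$ coordinates equal to $(1-1/q-\epsilon)/(q-1)$. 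Stirling's formula applied to this multinomial coefficient yields $\abs{S_\mathcal{P}^\epsilon}/q^n \le q\,\mathrm{poly}(n)\,e^{-nI}$, where
\[
I \;=\; \bigl(\tfrac{1}{q}+\epsilon\bigr)\log(1+\epsilon q) + \bigl(\tfrac{q-1}{q}-\epsilon\bigr)\log\!\bigl(1 - \tfrac{\epsilon q}{q-1}\bigr)
\]
is precisely the logarithm of the RHS of the lemma's hypothesis on $\gamma^3$. In the ferromagnetic range $\gamma\ge 1$, combining gives $\pi_\mathcal{P}(S_\mathcal{P}^\epsilon) \le q\,\mathrm{poly}(n)\,(\gamma^3 e^{-I})^n$, which decays exponentially exactly when $\gamma^3 < e^I$.

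The remaining case $\gamma<1$ makes the hypothesis automatic but degrades the crude weight bound to $\gamma^{-3n}$, so it requires a separate argument. Here I would exploit that $\gamma^{-h}$ is non-decreasing in $h$ when $\gamma\le 1$, which biases $\pi_\mathcal{P}$ toward heterogeneous-edge-rich (hence non-aligned) configurations relative to the uniform distribution $\mu_\mathcal{P}$. Making this precise via an FKG-style monotonicity on the Potts partition function shows that $\gamma \mapsto \pi_{\mathcal{P},\gamma}(S_\mathcal{P}^\epsilon)$ is non-decreasing, so for $\gamma\le 1$ we have $\pi_\mathcal{P}(S_\mathcal{P}^\epsilon)\le \mu_\mathcal{P}(S_\mathcal{P}^\epsilon)$, which the same method-of-types bound above makes exponentially small. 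An elementary alternative to FKG groups configurations by orientation-count vector and bounds the restricted Potts partition function per type.

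The main obstacle will be matching the exact algebraic form of the lemma's threshold rather than only a loose $1+c\epsilon^2$ expansion: this requires keeping the Bernoulli/multinomial KL in closed form through Stirling with explicit polynomial correction terms and checking that the $+\epsilon$ tail is binding for all $q\ge 2$ and $\epsilon\in(0,1/q)$. A secondary subtlety is ensuring the polynomial prefactors from Stirling and the $3n$ upper bound on $E_\mathcal{P}$ do not erode the strict inequality $\gamma^3<e^I$; since the large-deviations decay is genuinely exponential in $n$ while these prefactors are polynomial, any strict inequality suffices for large enough $n$, which is precisely what the lemma asserts.
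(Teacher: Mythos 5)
Your strategy for the $\gamma \ge 1$ regime is essentially the paper's proof: cancel the constant factor $(\lambda\gamma)^{-\abs{\mathcal{P}}}$ so that only $\gamma^{-h(\sigma)}$ varies, apply crude pointwise weight bounds using $0 \le h(\sigma) \le 3n-3$, count type classes via Stirling, and observe that the $1/q + \epsilon$ tail is the binding one. The paper uses $\binom{n}{\delta n}(q-1)^{(1-\delta)n}$ in place of your multinomial method of types and sums over integer counts rather than taking a union bound, but the content is the same.

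You have also correctly flagged a real subtlety that the paper glosses over: the paper bounds $w(\Omega_\mathcal{P}^\delta)$ above by treating $1$ as the largest value of $\gamma^{-h}$ and bounds $w(\Omega_\mathcal{P})$ below by treating $\gamma^{-(3n-3)}$ as the smallest, both of which require $\gamma \ge 1$; for $\gamma < 1$ the extremes reverse, the denominator lower bound fails, and the argument as written does not apply, while the lemma is stated for all $\gamma > 0$ below the threshold. The paper never addresses this case.

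That said, your proposed repairs are not sound. The claim that $\gamma \mapsto \pi_{\mathcal{P},\gamma}(S_\mathcal{P}^\epsilon)$ is non-decreasing amounts (via $\frac{d}{d\gamma}\log\pi(S) = -\gamma^{-1}\big(E_{\pi}[h \mid S] - E_\pi[h]\big)$) to showing $E_\pi[h \mid S_\mathcal{P}^\epsilon] \le E_\pi[h]$ for every $\gamma$ and $\mathcal{P}$, which FKG does not deliver: the $q$-state Potts model for $q \ge 3$ has no monotone ordering on the spin space under which FKG holds, so ``FKG-style monotonicity on the Potts partition function'' is not available here, and the random-cluster FKG does not transparently translate into the statement you want about orientation-count marginals. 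The ``elementary alternative'' of bounding the restricted partition function per type is named but not substantiated; the minimum achievable $h$ over configurations of a given type depends on the geometry of $\Lambda_\mathcal{P}$ in a nontrivial way, so it is not clear the per-type comparison closes. So you have identified a genuine gap in the paper's proof, but your proposed fix does not resolve it.
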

The proof follows from using Stirling's approximation \cite{robbins_remark_1955} to estimate the number of configurations that are not $\epsilon$-\emph{non-aligned}, and using rough lower and upper bounds on the weight of configurations in $\Omega_\mathcal{P}$.

\begin{proof}
%Let $S_\mathcal{P}^{\epsilon}$ denote the set of  configurations which are not $\epsilon$-\emph{non-aligned}, i.e., for which
For any configuration in $S_\mathcal{P}^{\epsilon}$,
the fraction of particles oriented along at least one direction is $0<\delta<1$, where $\vert \delta - q^{-1} \vert \geq \epsilon$. 
Let $\Omega_\mathcal{P}^{\delta}$ denote the set of configurations with boundary $\mathcal{P}$ where the fraction of particles oriented along at least one direction is exactly $\delta$.
%Consider the set of configurations with perimeter $\mathcal{P}$ where the fraction of particles oriented along at least one direction is exactly $\delta$ be denoted by $\Omega_\mathcal{P}^{\delta}$.
The probability of the set of configurations $\Omega_\mathcal{P}^{\delta}$ then satisfies the following upper bound:
\[
\pi_\mathcal{P}(\Omega_\mathcal{P}^{\delta}) = \frac{w(\Omega_\mathcal{P}^{\delta})}{w(\Omega_\mathcal{P})} < \frac{q \binom{n}{\delta n}\,(q-1)^{(1-\delta)\,n}}{q^n\,\gamma^{-(3n-3)}}\,,
\]
where we have estimated an upper bound for  $w(\Omega_\mathcal{P}^{\delta})$ by choosing an orientation for the $\delta n$ particles in $q$ ways, assigning the other $q-1$ orientations to the rest of the $(1-\delta)n$ particles in $(q-1)^{(1-\delta)n}$ ways, and assigning the highest weight of $1$ to all configurations in $\Omega_\mathcal{P}^{\delta}$, and estimated a lower bound for $w(\Omega_\mathcal{P})$ by assigning one of $q$ orientations to each particle in $q^n$ ways, and assigned each configuration the lowest possible weight which is $\gamma^{-(3n-3)}$. 

Using Stirling's approximation \cite{robbins_remark_1955}, we can show that the following holds:
\[ \binom{n}{\delta n} \leq \frac{1}{\sqrt{2\pi n\delta(1-\delta)}}\frac{1}{(1-\delta)^{(1-\delta) n} (\delta)^{\delta n}}\,.\]

Substituting, we get for  $\pi_\mathcal{P}(\Omega_\mathcal{P}^{\delta})$, 
\begin{align*}
 \pi_\mathcal{P}(\Omega_\mathcal{P}^{\delta}) &\leq 
\frac{q\,(q-1)^{(1-\delta)n}}{\gamma^3\,q^n\,\gamma^{-3n}\,(1-\delta)^{(1-\delta)\,n}(\delta)^{\delta n}\sqrt{2\pi\,n\,\delta(1-\delta)}} \\
&= \frac{q}{\gamma^3\sqrt{2\pi n\delta(1-\delta)}} \frac{\gamma^{3n}}{\left( \frac{q(1-\delta)}{q-1}\right)^{(1-\delta)n}(q\delta)^{\delta n} } \\
&= \frac{q}{\gamma^3\sqrt{2\pi n\delta(1-\delta)}} \left( \frac{\gamma^{3}}{\left( \frac{q(1-\delta)}{q-1}\right)^{(1-\delta)}(q\delta)^{\delta} } \right)^n
\end{align*}

We show that for sufficiently large values of $n$, this upper bound for $\pi_\mathcal{P}(\Omega_\mathcal{P}^{\delta})$, for $\delta \in [0,q^{-1} - \epsilon]\cup[q^{-1}+\epsilon,1]$ is maximized at $\delta = q^{-1}+\epsilon$. 
To do this, it suffices to show that the expression $\left(\frac{q(1-\delta)}{q-1}\right)^{1-\delta}(q\delta)^{\delta}$ is minimized at $\delta = q^{-1} + \epsilon$.
We denote
\begin{align*}
h(x) &:= \left(\frac{1-(q^{-1}+x)}{1-q^{-1}}\right)^{1-(q^{-1}+x)}\left(q(q^{-1}+x)\right)^{q^{-1}+x} \\
&= \left(1-\frac{x}{1-q^{-1}}\right)^{1-q^{-1}-x}\left(1+\frac{x}{q^{-1}}\right)^{q^{-1}+x} \,,
\end{align*}
so that $h(\delta-q^{-1}) = \left(\frac{q(1-\delta)}{q-1}\right)^{1-\delta}(q\delta)^{\delta}$.
By showing that 
\begin{align*}
h'(x) = \left(1-\frac{x}{1-q^{-1}}\right)^{1-q^{-1}-x}\left(1+\frac{x}{q^{-1}}\right)^{q^{-1}+x} \log \frac{1 + x/q^{-1}}{1 - x/(1-q^{-1})} ,
\end{align*}
we have $h'(x) > 0$ for $x < 0$ and $h'(x) > 0$ for $x > 0$, which implies that the minimum value of $h(x)$ in the domain $[-q^{-1},-\epsilon]\cup[\epsilon,1-q^{-1}]$ is either $h(-\epsilon)$ or $h(\epsilon)$. To show that $h(\epsilon) \leq h(-\epsilon)$, we define
\begin{align*}
g(x) &:= h(-x)/h(x)
= \left(\frac{1-q^{-1}+x}{1-q^{-1}-x}\right)^{1-q^{-1}}
\left(\frac{q^{-1}-x}{q^{-1}+x}\right)^{q^{-1}}
\left(\frac{1-\frac{x^2}{(1-q^{-1})^2}}{1-\frac{x^2}{(q^{-1})^2}}\right)^{x} \,.
\end{align*}
Differentiating $g(x)$, we find that 
\begin{align*}
g'(x) &= \left(\frac{1-q^{-1}+x}{1-q^{-1}-x}\right)^{1-q^{-1}}
\left(\frac{q^{-1}-x}{q^{-1}+x}\right)^{q^{-1}}
\left(\frac{1-\frac{x^2}{(1-q^{-1})^2}}{1-\frac{x^2}{(q^{-1})^2}}\right)^{x} 
\log \left(\frac{1-\frac{x^2}{(1-q^{-1})^2}}{1-\frac{x^2}{(q^{-1})^2}}\right) \\
&= g(x) \log \left(\frac{1-\frac{x^2}{(1-q^{-1})^2}}{1-\frac{x^2}{(q^{-1})^2}}\right)
\end{align*}
Assuming $q \geq 2$, we have $q^{-1} \leq 1 - q^{-1}$, so $g'(x)$ is positive whenever $g(x) > 0$. Because $g(0) = 1$, if there exists an $x > 0$ where $g(x) \leq 0$, by setting $x^* := \inf\{x > 0 : g(x) \leq 0\}$, we would obtain a contradiction as this would imply the existence of a value $x \in (0,x^*)$ where $g'(x) < 0$ but $h(x) > 0$. Thus $g'(x)$ is positive for all $x > 0$, which implies that $g(x) \geq 1$ for all $x > 0$, and in particular $g(\epsilon) \geq 1$, so $h(\epsilon) \leq h(-\epsilon)$.

Thus, the probability of the set of configurations $S_\mathcal{P}^{\epsilon}$ can be estimated as:
\begin{align*}
\pi_\mathcal{P}(S_\mathcal{P}^{\epsilon})
&\leq \sum_{m\in[n],|q^{-1}-\frac{m}{n}| \geq \epsilon} \pi_\mathcal{P}(\Omega_\mathcal{P}^{m/n}) \\
    &\leq n\,\pi_\mathcal{P}(\Omega_\mathcal{P}^{q^{-1}+\epsilon})
    \\
    &\leq n  \,\frac{q}{\gamma^3\sqrt{2\pi n(q^{-1}+\epsilon)(1-q^{-1}-\epsilon)}} \left( \frac{\gamma^{3}}{\left( 1-\frac{\epsilon q}{q-1}\right)^{(\frac{q-1}{q}-\epsilon)}(1+\epsilon\,q)^{\epsilon+q^{-1}} } \right)^n < \zeta^n \,,
\end{align*}
for sufficiently large values of $n$ for some $\zeta < 1$, when $\gamma$ satisfies:
\begin{align*}
\gamma^3 < \left( 1 - \epsilon \frac{q}{q-1} \right)^{\frac{q-1}{q}-\epsilon} \left(1+\epsilon\,q \right)^{\frac{1}{q}+\epsilon}  \,.
\end{align*}

To obtain the the asymptotic expansion of the right-hand side of the above inequality, we begin by simplifying the expression as follows.
\begin{align*}
\left( 1 - \epsilon \frac{q}{q-1} \right)^{\frac{q-1}{q}-\epsilon} \left(1+\epsilon\,q \right)^{\frac{1}{q}+\epsilon} &= \left( 1 - \epsilon \frac{q}{q-1} \right)^{1-\frac{1}{q}-\epsilon} \left(1+\epsilon\,q \right)^{\frac{1}{q}+\epsilon} \\
&= \left( 1 - \epsilon \frac{q}{q-1} \right) \left( \frac{1+\epsilon\,q}{1 - \epsilon \frac{q}{q-1}}\right)^{\frac{1}{q}+\epsilon}.
\end{align*}
Using $(1-x)^{-1} = 1+x+x^2 + O(x^3),\,0<x<1$, we get:
\begin{align*}
& \left( 1 - \epsilon \frac{q}{q-1} \right) \left( \frac{1+\epsilon\,q}{1 - \epsilon \frac{q}{q-1}}\right)^{\frac{1}{q}+\epsilon} \\
&= \left( 1 - \epsilon \frac{q}{q-1} \right) \left( (1+\epsilon\,q)\left(1 + \epsilon \frac{q}{q-1} + \epsilon^2 \frac{q^2}{(q-1)^2} + O(\epsilon^3) \right)\right)^{\frac{1}{q}+\epsilon} \\
&= \left( 1 - \epsilon \frac{q}{q-1} \right) \left( 1+\epsilon \frac{q^2}{q-1} + \epsilon^2 \frac{q^3}{(q-1)^2} + O(\epsilon^3) \right)^{\frac{1}{q}+\epsilon} \\
&=  \left( 1 - \epsilon \frac{q}{q-1} \right) \left( 1+\left(\frac{1}{q}+\epsilon\right)\left(\epsilon \frac{q^2}{q-1} + \epsilon^2 \frac{q^3}{(q-1)^2} \right) + O(\epsilon^3)\right)\\
&= 1 + \frac{\epsilon^2 q^2}{q-1} + O(\epsilon^3)
\end{align*}
\end{proof}

This allows us to prove Theorem~\ref{thm:conn_non_alignment}.

\begin{proof}[Proof of Theorem~\ref{thm:conn_non_alignment}]
Let $\zeta<1$ be such that $\pi_\mathcal{P}(S_\mathcal{P}^{\epsilon}) < \zeta^n$. Then we have for $\pi_\mathrm{Potts}(S^\epsilon)$,
\begin{align*}
\pi_\mathrm{Potts}(S^\epsilon) &= \sum_{\mathcal{P}} \pi_\mathrm{Potts}(S_\mathcal{P}^{\epsilon}) = \sum_{\mathcal{P}} \pi_\mathrm{Potts}(\Omega_\mathcal{P}) \,\pi_\mathcal{P}(S_\mathcal{P}^{\epsilon}) \leq \sum_{\mathcal{P}} \pi_\mathrm{Potts}(\Omega_\mathcal{P}) \zeta^n \leq \zeta^n \,.
\end{align*}
\end{proof}
The result also holds for the clock model with $\gamma$ replaced with $\gamma^2$.
 %We show that the probability of $S_\beta$ is exponentially small, when $\lambda, \gamma$ are sufficiently close to $1$. 

\begin{theorem}[Expansion in Connected SOPS]
\label{thm:conn_expansion}
We say a configuration $\sigma$ is $\beta$-expanded when its perimeter $p(\sigma)$ is greater than $\beta\,p_\mathrm{max}$, where $0<\beta<1$.
For constants $\lambda, \gamma > 0, c_1 = 2.17, c_2 = 2+\sqrt{2}$ such that $\lambda \,\gamma^{5/2} < c_1$, and for any $\beta$ such that
\[0< \beta < \frac{\log c_1 - \log\lambda - \frac{5}{2}\log\gamma}{\log c_2 - \log\lambda-\log\gamma}\,,\] the probability that a configuration drawn from the stationary distribution $\pi$ is not $\beta$-expanded is exponentially small.
\end{theorem}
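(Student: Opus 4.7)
The plan is to use a Peierls-style comparison: upper-bound the total weight of non-$\beta$-expanded configurations and lower-bound the partition function $Z$ by a single maximally expanded reference configuration, then show that the ratio decays exponentially in $n$ whenever $\beta$ lies in the stated range. This is essentially the dual of the compression argument in Lemma~\ref{lem:compression_potts}: there large $\lambda\gamma$ is used to suppress high-perimeter configurations, while here small $\lambda\gamma$ is used to suppress low-perimeter configurations.

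For the upper bound on $w(\bar{S}_\beta) := \sum_{\sigma : p(\sigma) \leq \beta\,p_{\mathrm{max}}} w(\sigma)$, I will group configurations by their perimeter $k$. The number of closed boundary walks of length $k$ on $G_\Delta$ is at most $N \cdot c_2^k$ by the self-avoiding walk bound of~\cite{duminil-copin_connective_2012, cannon_markov_2016}, where $c_2 = 2+\sqrt{2}$. For each fixed boundary $\mathcal{P}$ of length $k$, Euler's formula on the triangular lattice gives the number of interior edges of $\Lambda_\mathcal{P}$ as $E_{\mathrm{in}}(\Lambda_\mathcal{P}) = 3n-3-k$, so $h(\sigma) \leq 3n-3-k$; bounding $\sum_{\sigma \in \Omega_\mathcal{P}} \gamma^{-h(\sigma)}$ (the induced Potts partition function on $\Lambda_\mathcal{P}$) produces a clean upper bound on $w(\Omega_\mathcal{P})$ in terms of $\lambda, \gamma, k, n$. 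Summing the resulting geometric series in $k$ yields a bound dominated by the largest term at $k = \beta\, p_{\mathrm{max}}$.

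For the lower bound on $Z$, I will exhibit a single highly-expanded reference configuration $\sigma^\ast$ with $p(\sigma^\ast) = p_{\mathrm{max}}$. Since $p_{\mathrm{max}}$ on the triangular lattice is linear in $n$ (attained by tree-like configurations, for which $p_{\mathrm{max}} \approx 2n$ and $E_{\mathrm{in}} \approx n$), the weight $w(\sigma^\ast) = (\lambda\gamma)^{-p_{\mathrm{max}}}\gamma^{-h(\sigma^\ast)}$ is exponentially large in $n$ whenever $\lambda\gamma$ is small. The orientations in $\sigma^\ast$ are chosen to maximize this weight (monochromatic when $\gamma \geq 1$, or alternating to maximize $h$ when $\gamma < 1$). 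Dividing the upper bound by $w(\sigma^\ast)$ and taking logarithms, the per-unit-of-$p_{\mathrm{max}}$ exponent is of the form $\beta \log c_2 + (1-\beta)\log(\lambda\gamma)$ plus corrections from the $\gamma^{-h}$ factor; requiring this to be negative and solving for $\beta$ recovers the stated inequality $\beta < \log(c_1/(\lambda\gamma^{5/2})) / \log(c_2/(\lambda\gamma))$.

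The main obstacle is the careful bookkeeping of the $\gamma^{-h}$ contributions, which interact nontrivially with $(\lambda\gamma)^{-p}$ because $h$ depends on the full orientation assignment rather than just the perimeter. The exponent $5/2$ of $\gamma$ in the numerator arises from simultaneously tracking $p$ and $3n-3-p$ in the weights together with the relation $p_{\mathrm{max}} \sim 2n$, so that the $\gamma^{-h}$ terms contribute an extra factor proportional to $\gamma^{-3p_{\mathrm{max}}/2}$ relative to the $(\lambda\gamma)^{-p}$ bookkeeping. Obtaining the explicit constant $c_1 = 2.17$ will require tight casework on whether $\gamma < 1$ or $\gamma \geq 1$ and an optimal choice of the reference configuration $\sigma^\ast$ in each regime.
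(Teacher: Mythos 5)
Your overall strategy---group configurations by perimeter $k$, bound the number of perimeter-$k$ boundaries by the self-avoiding-walk count $c_2^k$, and compare against a lower bound on $Z$ coming from the maximally expanded configurations---is the right one, and your identification of $3n-3 = \tfrac{3}{2}p_\mathrm{max}$ as the source of the exponent $5/2$ is correct. However, there is a genuine gap in the lower bound for $Z$: you propose to lower bound $Z$ by the weight of a \emph{single} reference configuration $\sigma^\ast$ of maximal perimeter. This is not enough to obtain the constant $c_1 = 2.17$, and it is also not enough to cancel the $q^n$ factor that appears in your numerator from summing over orientation assignments.

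The constant $c_1 = 2.17$ is not an artifact of choosing the best orientation pattern (monochromatic vs.\ alternating) or of casework on $\gamma \lessgtr 1$; it is a pure \emph{entropy} constant. The paper lower bounds $Z$ by the total weight $\sum_{\mathcal{P}\,:\,|\mathcal{P}|=p_\mathrm{max}} w(\Omega_\mathcal{P})$, and then invokes a counting lower bound from~\cite{cannon_markov_2016}: the number of connected, hole-free shapes of maximal perimeter $p_\mathrm{max} = 2n-2$ on the triangular lattice is at least $2.17^{p_\mathrm{max}}/0.13$. Each such shape then contributes at least $(\lambda\gamma)^{-p_\mathrm{max}} q^n \gamma^{-(3n-3)}$ (summing over all $q^n$ orientation assignments with the crude worst-case weight per assignment). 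Both factors matter: the $q^n$ cancels the $q^n$ in the numerator (which your single-configuration bound cannot do), and the $2.17^{p_\mathrm{max}}$ is precisely what lifts the admissibility threshold from $\lambda\gamma^{5/2} < 1$ to $\lambda\gamma^{5/2} < 2.17$. With only one reference configuration you would obtain a valid but strictly weaker theorem with $c_1 = 1$, and the stated condition $\beta < \log(c_1/(\lambda\gamma^{5/2}))/\log(c_2/(\lambda\gamma))$ would be unreachable whenever $1 \le \lambda\gamma^{5/2} < 2.17$. Separately, your plan to carefully track $\sum_{\sigma\in\Omega_\mathcal{P}}\gamma^{-h(\sigma)}$ in the upper bound is more work than needed; the paper simply uses the crude bounds $q^n\gamma^{-(3n-3)} \le \sum_\sigma \gamma^{-h(\sigma)} \le q^n$, which already suffice after the entropy factor $2.17^{p_\mathrm{max}}$ is in place.
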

%\subparagraph*{Expansion in Connected SOPS:}
This definition of expansion corresponds with the definition in \cite{cannon_markov_2016}.
%We say a configuration $\sigma$ is $\beta$-expanded when its perimeter $p(\sigma)$ is greater than $\beta\,p_\mathrm{max}$, where $0<\beta<1$.
Denote by $S_\beta$ the set of configurations that are not $\beta$-expanded.
We can get rough upper and lower bounds for the weight of configurations in $\Omega_\mathcal{P}$ by estimating the number of ways of getting a fixed perimeter using the bounds in \cite{duminil-copin_connective_2012,cannon_markov_2016}.

\begin{proof}
The probability of the configurations in $S_\beta$ can be estimated as follows:
\[\pi(S_\beta) \leq \frac{\sum_{k=p_\mathrm{min}}^{\beta\,p_\mathrm{max}} \sum_{\mathcal{P}:\vert\mathcal{P}\vert = k} w(\Omega_\mathcal{P})}{\sum_{\mathcal{P}:\vert\mathcal{P}\vert = p_\mathrm{max}} w(\Omega_\mathcal{P})} \]
The weight $w(\Omega_\mathcal{P})$ of configurations with given perimeter $\mathcal{P}$ obeys: $(\lambda\,\gamma)^{-\vert \mathcal{P}\vert} q^n\,\gamma^{-(3n-3)} \leq w(\Omega_\mathcal{P} \leq (\lambda\,\gamma)^{-\vert \mathcal{P}\vert} q^n )$
Moreover, from \cite{cannon_markov_2016}, the number of configurations with perimeter $\mathcal{P}:\vert \mathcal{P}\vert = k$ is at most $\nu^k$ for any $\nu>2+\sqrt{2}$, and the number of configurations with perimeter $\mathcal{P}:\vert \mathcal{P}\vert = p_\mathrm{max}$ is at least $2.17^{p_\mathrm{max}}/(0.13)$. Moreover, note that $p_\mathrm{max}=2n-2= 2\,(3n-3)/3$ Substituting, we get:
\begin{align*}
    \pi(S_\beta) &\leq \frac{(0.13)^{-1}\sum_{k=p_\mathrm{min}}^{\beta\,p_\mathrm{max}} (\lambda\,\gamma)^{-k}\,\nu^k\, q^n}{(\lambda\,\gamma)^{-p_\mathrm{max}} (2.17)^{p_\mathrm{max}}\,\gamma^{-(3n-3)}\,q^n} =  \frac{(0.13)^{-1}\sum_{k=p_\mathrm{min}}^{\beta\,p_\mathrm{max}} (\lambda\,\gamma)^{-k}\,\nu^k}{(\lambda\,\gamma)^{-p_\mathrm{max}} (2.17)^{p_\mathrm{max}}\,\gamma^{-3\,p_\mathrm{max}/2}} \\
    &= (0.13)^{-1}\sum_{k=p_\mathrm{min}}^{\beta\,p_\mathrm{max}} \left(\frac{\nu}{\lambda\,\gamma}\right)^k \left( \frac{\lambda\,\gamma^{5/2}}{2.17}\right)^{p_\mathrm{max}} .
\end{align*}
Since $k/\beta \leq p_\mathrm{x}$ and $\lambda\,\gamma^{5/2}<2.17$, we get:
\begin{align*}
\pi(S_\beta) &\leq (0.13)^{-1}\sum_{k=p_\mathrm{min}}^{\beta\,p_\mathrm{max}} \left(\frac{\nu}{\lambda\,\gamma}\right)^k \left( \frac{\lambda\,\gamma^{5/2}}{2.17}\right)^{k/\beta} \\
&= (0.13)^{-1}\sum_{k=p_\mathrm{min}}^{\beta\,p_\mathrm{max}} \left(\frac{\nu}{\lambda\,\gamma} \left( \frac{\lambda\,\gamma^{5/2}}{2.17}\right)^{1/\beta}\right)^k  < \zeta^{\sqrt{n}}\,,
\end{align*}
with $\zeta<1$, when the following condition is satisfied: 
\begin{align*}
 \frac{\nu}{\lambda\,\gamma} \left( \frac{\lambda\,\gamma^{5/2}}{2.17}\right)^{1/\beta}<1  &\iff \frac{2+\sqrt{2}}{\lambda\,\gamma} \left( \frac{\lambda\,\gamma^{5/2}}{2.17}\right)^{1/\beta} < 1 \\
&\iff \log(2+\sqrt{2})-\log\lambda-\log\gamma + \frac{\log\lambda + \frac{5}{2}\log\gamma - \log 2.17}{\beta} < 0 \\
&\iff \beta < \frac{\log 2.17 - \log\lambda - \frac{5}{2}\log\gamma}{\log(2+\sqrt{2})-\log\lambda-\log\gamma}\,. 
\end{align*}
The above condition is satisfied by the statement of the theorem. 
\end{proof}

The same theorem holds for the clock model, with $\gamma^{5/2}$ replaced by $\gamma^4$ in the theorem statement, and the proof follows on similar lines.

\section{Aggregation and Alignment in General SOPS}
\label{sec:general}
%\inlinecomm{WRITE AN INTRODUCTION HERE!}

In general SOPS, occupying any selection of $n$ out of the $N$ possible sites of $G_\Delta$ is a valid configuration. Hence, we apply the same Metropolis-Hastings Markov chain as the connected SOPS model, with the exception that any move into an unoccupied location is considered valid regardless of connectivity effects.
In this disconnected setting, particles exist on a lattice region with toroidal boundary conditions. We assume the particles occupy a constant fraction $\rho$ of the lattice. Specifically, we define a $\rho \in (0,\frac{1}{3})$ so that $n = \rho N$. 
The set of possible configurations is denoted $\OmegaAggrRho$.

%We begin by defining some of the terms required to achieve our goals.
%
Similar to before, \emph{boundary contour} $bd(R)$ of a region $R \subseteq V(\TorusLattice)$ refers to the set of dual edges on $\TorusDual$ corresponding to edges between sites in $R$ and $V(\TorusLattice) \setminus R$. The \emph{boundary length} of $R$ is $|bd(R)|$.
Let $bd_{\min}(k)$ denote the minimum boundary length of a region of $k$ sites in $V(\TorusLattice)$.
We restrict $\rho$ to be less than $\frac{1}{3}$ as cases with so many particles (filled sites) that minimum boundary length configurations wrap around the torus $\TorusLattice$ is not instructive for our purposes (a precise explanation for this restriction is in the full version of the paper).

We show that in this general SOPS model, both alignment and aggregation can be achieved with high probability using only local movements. Alignment is defined in Section~\ref{sec:alignment}, and aggregation is defined as follows:
\begin{definition}[Aggregation]
\label{defn:aggregation}
For $\alpha > 1$, $\delta>0$ we say a configuration of $n$ particles is $\alpha,\delta$-aggregated if there exists a region $\mathcal{R}$ such that
\begin{enumerate}
\item The number of empty sites within $\mathcal{R}$ is at most $\delta|\mathcal{R}|$.
\item The number of particles outside of $\mathcal{R}$ is at most $\delta(N-|\mathcal{R}|)$
\item The boundary length of $\mathcal{R}$ is at most $\alpha \cdot bd_{\min}(n)$.
\end{enumerate}
\end{definition}

Note that changes in the perimeter of the configuration cannot be locally computed if the set of particles is disconnected. So instead, we make use of the boundary contour length to define our Hamiltonian. 
More precisely, we consider the following Potts Hamiltonian, another variant of the site-diluted Potts Hamiltonian \cite{chayes_aggregation_1995}, on $\TorusLattice$:
\[ \tilde{H}_\mathrm{Potts}(\sigma) = -J\sum_{\langle i,j\rangle} \left[ n_i n_j\,\left(\delta_{\theta_i, \theta_j}-1\right) + \left(n_i (n_j-1) + n_j (n_i - 1) \right)\right] \,,\]
where the sum is over all pairs of adjacent sites: $\langle i,j \rangle$ i.e., sites connected by a single lattice edge in $\TorusLattice$, $n_i \in \{0,1\}$ indicates whether site $i$ is occupied or not, $\theta_i$ indicates the orientation of the particle on site $i$, and $J$ is a positive constant. We only consider configurations $\sigma$ in $\OmegaAggrRho$ i.e., where the total number of particles is equal to $n$. 

The probability of a configuration $\tilde{\pi}_\mathrm{Potts}(\sigma)$ is given by the Boltzmann distribution which can be expressed in terms of the parameter $\lambda = \exp(\beta J)$ as:
\begin{equation}
    \piFinite(\sigma) = \frac{\wFinite(\sigma)}{\ZFinite}\,,\; \wFinite(\sigma) = \lambda^{-a(\sigma) -h(\sigma)}\,,\;  \ZFinite = \sum_{\sigma' \in \OmegaAggrRho} \wFinite(\sigma') \label{potts_distribution_unconnected}
\end{equation}
where $\lambda > 0$, $h(\sigma)$ is the number of heterogeneous edges in the configuration $\sigma$, and $a(\sigma)$ is the number of edges between occupied and unoccupied sites in $\TorusLattice$.

We prove the following theorem that establishes aggregation and alignment for appropriate settings of the parameters.

\begin{theorem}
\label{theorem:aggregationwithalignment}
Fix $\rho < \frac{1}{3}$ and assume that there will always be exactly $\rho N$ filled sites on the lattice.
For any $\delta > 0$ and $\alpha > 1$, there exists a $\lambda_0 = \lambda_0(q,\rho,\alpha,\delta)$ such that for all $\lambda > \lambda_0$,
with probability $1-\zeta^{\sqrt{N}}$ for some constant $\zeta = \zeta(q, \rho, \alpha, \delta, \lambda) < 1$, %such that with probability at least $1-\zeta^{\sqrt{N}}$,
there exists a region $\mathcal{R} \subseteq V(\TorusLattice)$, where
\begin{enumerate}
%\item The number of unfilled sites in $\mathcal{R}$ is at most $\delta |\mathcal{R}|$
\item There is an orientation $\theta \in \{0,1,\ldots,q-1\}$ where the number of filled sites with orientation $\theta$ in $\mathcal{R}$ is at least $(1-\delta)|\mathcal{R}|$.
\item The number of filled sites not in $\mathcal{R}$ is at most $\delta (N-|\mathcal{R}|)$
\item The boundary length of $\mathcal{R}$ is at most $\alpha \cdot bd_{\min}(\rho N)$.
\end{enumerate}
\end{theorem}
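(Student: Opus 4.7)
The plan is to combine an aggregation argument (showing that the occupied region has near-minimum boundary contour with high probability) with an alignment argument (showing that within the aggregated region, one orientation dominates). Because $\tilde{H}_{\mathrm{Potts}}$ couples both the boundary energy $a(\sigma)$ and the heterogeneous-edge energy $h(\sigma)$ through the single parameter $\lambda$, both contributions can be controlled simultaneously once $\lambda$ is taken sufficiently large. The weight factors as $\wFinite(\sigma) = \lambda^{-a(\sigma)} \cdot \lambda^{-h(\sigma)}$, so the two desired properties cooperate rather than compete.

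First I would establish aggregation. To lower-bound $\ZFinite$, I would exhibit a canonical near-minimum-boundary monochromatic configuration, contributing weight at least $q \lambda^{-bd_{\min}(\rho N)}$. To upper-bound the weight of configurations whose occupied region has boundary length $k > \alpha \cdot bd_{\min}(\rho N)$, I would count the number of ``boundary contour networks'' of total length $k$ on the dual lattice $\TorusDual$: through any fixed vertex there are at most $C^k$ such networks by the same bounded-degree subgraph count used in Lemma~\ref{lem:kotecky_preiss_potts}, so the total over starting vertices is at most $N \cdot C^k$. The orientation degrees of freedom inside the occupied region are absorbed by applying the cluster expansion of Lemma~\ref{lem:volume_surface_potts} to the polymer partition function over that region, producing only an $e^{O(k)}$ surface correction that can be folded into $C$. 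For $\lambda$ larger than this effective $C$, summing over $k \geq \alpha \cdot bd_{\min}(\rho N) = \Theta(\sqrt{N})$ yields an exponentially small aggregation-failure probability.

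Next, conditioning on an aggregated occupied region $\mathcal{R}_\sigma$, the alignment argument proceeds in parallel with the connected case. Restricting to configurations whose boundary particles share a common orientation costs a factor $q \cdot 2^{|\partial \mathcal{R}_\sigma|}\cdot\lambda/(\lambda - 3q)$ via the analog of Lemma~\ref{lem:monochromatic}, and the polymer bijection of Lemma~\ref{lem:polymer_bijection_potts} plus cluster expansion then controls the orientation partition function inside $\mathcal{R}_\sigma$ up to a surface error of order $|\partial \mathcal{R}_\sigma| = O(\sqrt{N})$. The main obstacle is the bridging construction. In the connected setting the enclosing boundary was a single closed contour to which all heterogeneous contours could be bridged (Lemma~\ref{lem:bridging_sigma}); here $\mathcal{R}_\sigma$ may consist of many components and its boundary forms an interconnected network rather than a single contour. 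The plan is to bridge each heterogeneous contour to its nearest anchor on this network and to show that the total bridge length is bounded by a constant fraction of the bridged contour length, so that the number of bridge systems of bridged contour length $\ell$ remains at most $C^\ell$. Proving this ratio bound is the delicate geometric step: heterogeneous contours can lie deep inside large aggregated components, and the argument must exploit the fact that the aggregated boundary of total length $O(\sqrt{N})$ is sufficiently ``spread out'' to keep every interior contour close to some anchor; this adapts the arguments of \cite{miracle_clustering_2011, cannon_local_2019} to the possibly fragmented anchor network.

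With the generalized bridging in hand, I would mimic the information-theoretic argument of Theorem~\ref{thm:con_align_thm}: define a map $f$ that cyclically shifts orientations inside each minority region bounded by bridged contours so as to remove at least $\ell/(q-1)$ heterogeneous edges, yielding weight gain at least $\lambda^{\ell/(q-1)}$, while the number of preimages of any image configuration is bounded by $q \cdot C^\ell$ using the bridge-system count. An isoperimetric argument analogous to Lemma~\ref{lem:critical_alpha}, applied now to the partition of $\mathcal{R}_\sigma$ into a dominant-orientation region and its complement and relying on the generalized Lemma~\ref{lem:min_cut_length} for aggregated regions in the torus, guarantees $\ell = \Omega(\sqrt{N})$ whenever $(1-\delta)$-alignment fails. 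Choosing $\lambda_0 = \lambda_0(q,\rho,\alpha,\delta)$ large enough that $q \cdot C^{1 + 1/(q-1)} \lambda_0^{-1/(q-1)} < 1$ then makes the alignment-failure probability conditional on aggregation exponentially small in $\sqrt{N}$, and a union bound over the aggregation and alignment events gives the $\zeta^{\sqrt{N}}$ bound asserted in the theorem.
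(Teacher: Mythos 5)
Your proposal splits the argument into two stages---aggregation of the occupied region, then alignment conditional on aggregation---and correctly identifies the bridging construction in stage two as the delicate step. The paper avoids this difficulty entirely by a reformulation you don't use: it treats vacant sites as carrying an extra ``spin'' $-1$, so that the interface between occupied and unoccupied sites becomes just another heterogeneous contour. There is then a \emph{single} $\delta$-bridge system handling all contours at once, anchored not to the occupied-region boundary but to the fixed wraparound edge set $E_{\mathrm{wrap}}$ of the torus (which is always connected and has length $\Theta(\sqrt{N})$). One Peierls argument shows the total bridged contour length $|I|$ is at most $\alpha\cdot bd_{\min}(\rho N)$ with high probability; from this single bound, property~3 is immediate (take $\mathcal{R}=\{v:\Theta(v)\neq -1\}$), property~2 follows from the defect bound on minimal bridged regions, and property~1 follows by a deterministic isoperimetric deduction on the sizes $n_i$ of the $\Theta$-regions (showing $\sum_i\sqrt{n_i/(\rho N)}<1/\sqrt{1-\epsilon}$ forces one $n_i$ to dominate), not by a second Peierls argument.

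Two concrete gaps in your approach. First, the plan to ``bridge each heterogeneous contour to its nearest anchor'' will not satisfy $|B|\le\frac{1-\delta}{2\delta}|I|$: a short contour deep inside a large aggregated component can be at distance $\Theta(\sqrt{N})$ from every anchor, so a nearest-anchor bridge for it alone would dominate the length budget. Bridge systems (in the paper and in \cite{cannon_local_2019,miracle_clustering_2011}) deliberately leave such interior contours \emph{unbridged} as tolerated defects; the construction is iterative (column-by-column), adding bridges only when a minimal bridged region exceeds the $\delta$-defect threshold, and the length ratio is maintained by a local accounting, not a global distance bound. Second, your Peierls argument for stage two would need to respect the fixed particle count $\rho N$, but shifting orientations in a minority region does not change particle count only if you first separate the vacancy boundary from the orientation contours, which is exactly what your two-stage decomposition is trying to do; the paper instead changes the vacancy pattern when it removes $I$, and has to restore the count via the ``banking'' map $f^2$, a device your proposal omits. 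In short, your stage-one argument (counting boundary networks plus cluster expansion for the interior) is plausible in outline, but the proposed stage-two bridging is not viable as stated, and the paper's unified $(q+1)$-state formulation is the mechanism that dissolves both problems.
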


%\noindent Due to space limitations, we relegate the main details of the proofs to Appendix~\ref{appendix_aggregation}. 

%\section{Details for Aggregation and Alignment in General SOPS}\label{appendix_aggregation}

Recall that $bd_{\min}(k)$ denotes the minimum boundary length of a region of $k$ sites in $V(\TorusLattice)$.
$bd_{\min}(k)$ grows in the following manner:
\begin{lemma}
\label{lem:bdmin_growth}
Let $c$ be a constant. 
\begin{enumerate}
\item If $0 < c < 1/3$, we have
$bd_{\min}(cN) = 4\sqrt{3c}\sqrt{N} + O_N(1)$.
\item If $1/3 \leq c \leq 2/3$, we have $bd_{\min}(cN) = 4\sqrt{N} + O_N(1)$.
\item Finally, if $c > 2/3$, we have $bd_{\min}(cN) = bd_{\min}((1-c)N) = 4\sqrt{3(1-c)}\sqrt{N} + O_N(1)$.
\end{enumerate}
\end{lemma}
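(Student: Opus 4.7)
The plan is to handle the three regimes of $c$ separately, exploiting the symmetry $|bd(R)| = |bd(V(\TorusLattice) \setminus R)|$ which immediately gives $bd_{\min}(cN) = bd_{\min}((1-c)N)$; this both establishes the first equation of part 3 and reduces the second equation of part 3 to part 1. In each remaining regime I prove matching upper and lower bounds with $O_N(1)$ error.

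For the \emph{upper bounds} I exhibit explicit constructions. For part 1 ($c < 1/3$) I take a region approximately equal to a regular hexagon of side length $\ell = \lceil\sqrt{cN/3}\rceil$, which contains $3\ell^2 - 3\ell + 1$ sites and has boundary $12\ell - 6 = 4\sqrt{3cN} + O_N(1)$; to reach exactly $cN$ particles, I fill in a partial outer shell of $O(\sqrt{N})$ sites, ordered so that edge-positions (which contribute $0$ to the boundary) are filled before corner-positions (which contribute $+2$). Since a full shell has only six corners, the intermediate boundary stays within $12\ell + O(1)$ throughout. The constraint $c < 1/3$ ensures the hexagon fits in $\TorusLattice$ without wrapping. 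For part 2 ($1/3 \leq c \leq 2/3$) I take a horizontal strip of $w = \lfloor c\sqrt{N}\rfloor$ complete rows plus a contiguous partial row of $cN - w\sqrt{N}$ sites; a direct calculation shows that any complete strip has boundary exactly $4\sqrt{N}$ (the $2\sqrt{N}$ non-horizontal edges exiting the top row and the $2\sqrt{N}$ exiting the bottom row), and the partial row contributes only two additional end-edges, giving $4\sqrt{N} + O(1)$.

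For the \emph{lower bound in part 1}, I argue the minimizer may be taken connected and hole-free: merging disconnected components by translation only weakly decreases $|bd|$, and filling a hole by relocating an outlying particle strictly decreases $|bd|$. For such connected hole-free $R$, the identity $|bd(R)| = 2p(R) + 6$ established earlier in the paper (in the derivation preceding Lemma~\ref{lem:min_cut_length}) combined with Lemma~\ref{lem:isoperimetric_triangular_lower} yields $|bd(R)| \geq 4\sqrt{3}\sqrt{cN - 1/4} = 4\sqrt{3c}\sqrt{N} + O_N(1)$. For the \emph{lower bound in part 2} I classify $R$ by whether its boundary $bd(R) \subset E(\TorusDual)$ contains any non-contractible cycle. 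If every boundary component is contractible in $\TorusDual$, then $R$ is a disjoint union of topological disks possibly with contractible holes, and the flat isoperimetric inequality of Lemma~\ref{lem:isoperimetric_triangular_lower} applied componentwise gives $|bd(R)| \geq 4\sqrt{3c}\sqrt{N} - O_N(1) \geq 4\sqrt{N} - O_N(1)$ since $c \geq 1/3$. Otherwise $bd(R)$ contains a non-contractible cycle; since $bd(R)$ is null-homologous on the torus as the boundary of the $2$-chain $R$, at least two non-contractible components must appear with homology classes summing to zero in $H_1(\mathbb{T}^2;\mathbb{Z})$, and each non-contractible cycle in $\TorusDual$ has length at least $2\sqrt{N}$ (realized by a single-row strip's top or bottom boundary), giving $|bd(R)| \geq 4\sqrt{N}$.

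The main obstacle is justifying the $2\sqrt{N}$ lower bound on the length of non-contractible cycles in the hexagonal dual $\TorusDual$. I would prove this by a winding/projection argument: each edge of $\TorusDual$ lies in one of three directions at $120^{\circ}$ angles to one another, so its projection onto any given fundamental torus direction has magnitude at most $1/2$ of a lattice step. Consequently any dual cycle with nonzero winding number in a torus direction of period $\sqrt{N}$ must accumulate the required displacement using edges that project nontrivially onto that direction, forcing at least $2\sqrt{N}$ edges. The remainder of the argument, consisting of the contractibility case-split, the connectedness/hole-free reduction, and the hexagonal and strip constructions, is comparatively routine.
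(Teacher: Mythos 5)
Your proof takes essentially the same route as the paper — flat isoperimetric bounds for non-winding regions, combined with a lower bound on the boundary length of regions that wind around the torus — but fills in substantial detail that the paper omits. The paper's proof simply \emph{asserts} that a winding region must have boundary length at least $4\sqrt{N}$ and does not supply a proof; your homology argument (the boundary $bd(R)$ is null-homologous, so any non-contractible component must be matched by another with opposite class) together with the projection bound (each dual edge of the hexagonal lattice projects onto a principal primal direction by at most half a primal step, so a cycle winding once needs at least $2\sqrt{N}$ edges) is a genuine contribution beyond what the paper writes down. Similarly, the paper states cases 2 and 3 without justification, while you provide the explicit strip and complement-of-hexagon constructions and the matching lower bounds. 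Your use of the symmetry $|bd(R)| = |bd(V(\TorusLattice) \setminus R)|$ to reduce case 3 to case 1 is exactly what the paper implicitly invokes by writing $bd_{\min}(cN) = bd_{\min}((1-c)N)$.

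Two small points worth tightening. First, in your lower bound for case 1 you reduce to a connected, hole-free minimizer and then invoke $|bd(R)| = 2p(R) + 6$ together with Lemma~\ref{lem:isoperimetric_triangular_lower}; but this identity and lemma apply only to regions that do \emph{not} wind around the torus (a single row of $\sqrt{N}$ sites is connected, hole-free, and winds, yet has $|bd| = 4\sqrt{N}$, inconsistent with the identity). You handle winding explicitly in case 2 but not in case 1; you should add the one-line observation that any winding region has $|bd| \geq 4\sqrt{N} > 4\sqrt{3c}\sqrt{N}$ when $c < 1/3$, so the minimizer can be assumed non-winding before the reduction. Second, your projection claim "its projection onto any given fundamental torus direction has magnitude at most $1/2$ of a lattice step" holds for the three principal axes ($\vec{v}_1$, $\vec{v}_2$, $\vec{v}_1 + \vec{v}_2$) because dual edges are perpendicular to primal edges, but for a homology class like $(1,-1)$ you should project onto the direction of $a\vec{v}_1 + b\vec{v}_2$ itself and check that the resulting bound is still $\geq 2\sqrt{N}$ (it is — you get $3\sqrt{N}$ there); as written the argument is slightly loose about which direction you project onto for which class. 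Neither gap is fatal, but both deserve an explicit sentence.
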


\begin{proof}
For values of $k$ that are small enough that minimal boundary length regions enclosing $k$ particles do not wind around the lattice $\TorusLattice$, we may apply the the isoperimetric lemmas (Lemmas \ref{lem:isoperimetric_triangular_lower} and \ref{lem:isoperimetric_triangular_upper}) and the fact that the boundary length of a connected, hole-free region of perimeter $p$ is $2p+6$ to deduce that $4\sqrt{3}\sqrt{k-1/4} \leq bd_{\min}(k) \leq 4\sqrt{3}\sqrt{k}+6$. As a region that winds around $\TorusLattice$ must have a boundary length of at least $4\sqrt{N}$, we can say that $bd_{\min}(cN) = 4\sqrt{3c}\sqrt{N} + O_N(1)$ for $0 < c < 1/3$.
Note that for $1/3 \leq c \leq 2/3$, we have $bd_{\min}(cN) = 4\sqrt{N} + O_N(1)$ and for $c > 2/3$, we have $bd_{\min}(cN) = bd_{\min}((1-c)N) = 4\sqrt{3(1-c)}\sqrt{N} + O_N(1)$.
\end{proof}

%We show that alignment under aggregation can be achieved in the one parameter setting described in Section \ref{section:potts_clock_models}.
We can treat the general SOPS problem as a $q+1$-state Potts model on $\TorusLattice$ with $q+1$ orientations $\{-1,0,1,\ldots,q-1\}$ in which the number of sites assigned $-1$ is fixed to be exactly $(1-\rho)N$, where $N = |V(\TorusLattice)|$.
In other words, sites of the lattice are no longer filled or unfilled, but are instead assigned one of $q+1$ 
 orientations with the special spin $-1$ assigned to unoccupied lattice sites. 
 %For the rest of this section, 
 %what we previously called non-occupied sites will simply be treated as particles of orientation $-1$, and 
 We refer to any edge between particles of differing orientations as ``heterogeneous edges,'' including those assigned the special orientation $-1$.

%Instead of a two-part proof using the cluster expansion to first show a low-perimeter region, then applying a Peierls argument to show that this low-perimeter region allows us to achieve alignment, 
We again use a Peierls argument to show that for suffiently large $\lambda,$ the configuration will compress and one of the $q$ orientations will dominate, with high probability. This proof is an adaptation of the bridging argument used for separation in~\cite{cannon_local_2019-1, cannon_local_2019} and thus follows their arguments very closely.
The following sections build up to a proof of Theorem~\ref{theorem:aggregationwithalignment}.
%We now state the main theorem for this section:
%Lemma (4.1, 4.2): The minimum perimeter of a region containing $n$ particles in the infinite triangular lattice $G_\Delta$ is at least $2\sqrt{3}\sqrt{n - \frac{1}{4}} - 3$ and at most $2\sqrt{3}\sqrt{n}$

%\subsection{Construction of a Bridge System}
%\label{apx:bridgesystemaggregation}
%\subsection{Approach for Proving Theorem~\ref{theorem:aggregationwithalignment}}

%We are now prepared to prove Theorem~\ref{theorem:aggregationwithalignment}.
We observe that the result of  Theorem~\ref{theorem:aggregationwithalignment} will imply both alignment and aggregation (for some values of $\alpha$ and $\delta$) as given in Definitions~\ref{defn:alignment} and \ref{defn:aggregation}.
The key component of our proof is the construction of a $\delta$-bridge system ($\delta \in (0,1)$ is a positive constant) for each configuration in $\OmegaAggrRho$.
Recall that a bridge system is a connected network of the long contours of a configuration $\sigma$, that is used to ``remove'' long contours in the Peierls argument to show that they are unlikely. It will also be used to define the region $\mathcal{R}$ required for Theorem~\ref{theorem:aggregationwithalignment}.

%\markred{As the precise definition of a bridge system has many subtleties, the details will be relegated to Appendix~(REMOVED).%\ref{apx:aggrbridgedefn}.}

%\subsection{Definition of a Bridge System for General SOPS}
%\label{apx:aggrbridgedefn}

Let $\Ewrap$ be the set of edges on $\TorusDual$ corresponding to the edges on $\TorusLattice$ that wrap around the torus. Thus $|\Ewrap| = 2\sqrt{N}-1$.
In a setting with more than three possible orientations, regions of differing orientations are divided up by networks of contours rather than closed walks separating two different orientations. We call these contour networks \emph{complex contours}.
Formally, a complex contour refers to a connected subgraph of $G_{\varhexagon}$ of minimum degree at least $2$.
For a given configuration $\sigma \in \OmegaAggrRho$, the set of edges $\mathcal{C}$ on $\TorusDual$ corresponding to its heterogeneous edges will be a union of complex contours. The complex contours of $\sigma$ thus refers to the edge sets of connected components of the subgraph induced by $\mathcal{C}$ in $\TorusDual$.

%\subsection{Bridge Systems}
%% ---- The bridging system can touch the borders.

We now define a bridge system $(B,I,\Theta)$ where the set $I$ represents the complex contours in the bridge system, $B$ represents the bridges used to connect these complex contours, and $\Theta$ is a mapping that assigns an orientation to each of the components formed after removing  the edges of $\TorusLattice$ corresponding to the edges in $I$.
%partitions $V(\TorusLattice)$ into components. The mapping $\Theta$ assigns each of these components an orientation.

\begin{definition}[Bridge Systems]
Fix $\delta > 0$. Consider a tuple $(B, I, \Theta)$, where $B$ and $I$ are subsets of $E(\TorusDual)$ and $\Theta: V(\TorusLattice) \to \{-1,0,1,\ldots,q-1\}$ is a function assigning each vertex an orientation or the value $-1$ (which we will use to represent vacant sites).
We say $(B, I, \Theta)$ is a $\delta$-bridge system if:
\begin{enumerate}
\item The subgraph induced in $\TorusDual$ by $I$ has no vertex of degree less than $2$. Practically, $I$ represents a union of complex contours that subdivides $\TorusLattice$ into regions.
\item The subgraph induced in $\TorusDual$ by $B \cup I \cup \Ewrap$ is connected and has no vertex of degree less than $2$.
\item $B \cap I = \emptyset$ and $|B| \leq \frac{1-\delta}{2\delta} |I|$
\item For any two neighboring sites $u,v \in \TorusLattice$, $\Theta(u) = \Theta(v)$ if and only if the dual edge corresponding to $\{u,v\}$ is not in $I$.
\end{enumerate}
\end{definition}

%\comm{NITPICK: The definitions of ``bridged'' must still apply even when $(B,I,\Theta)$ is not yet a valid bridge system}

Consider a set of edges $I$, that is a union of the edge sets of complex contours.
%Let $(B,I,\Theta)$ be a bridge system, 
Let $\sigma$ be a configuration in $\OmegaAggrRho$.
We say a complex contour $C$ of $\sigma$ is \emph{bridged} (by $I$) if $C \subseteq I$. 
%Consider a region $R \subseteq V(\TorusLattice)$ with a connected induced subgraph $\TorusLattice[R]$.
%We say a face $F$ of a configuration $\sigma$ is bridged if its boundary contour $bd(F)$ is a subset $I$.
We say a site $v$ is bridged (by $I$) if there is a path over $\TorusLattice$ using only sites of the same orientation (including $-1$) in $\sigma$ as $v$ to a site incident to an edge in $I$.
Consider a region $R \subseteq V(\TorusLattice)$ that is connected as an induced subgraph of $\TorusLattice$.
We call $R$ a \emph{bridged region} if $bd(R) \subseteq I$ and a \emph{minimal bridged region} if there is no bridged region $R'$ where $R' \subseteq R$.
Notably, the edge set $I$ partitions $V(\TorusLattice)$ into minimal bridged regions.

\begin{definition}[Bridge System for a Configuration]
Fix $\delta > 0$ and a configuration $\sigma \in \OmegaAggrRho$.
We say a tuple $(B, I, \Theta)$ is a $\delta$-bridge system for a configuration $\sigma$ If
\begin{enumerate}
\item Each minimal bridged region $R$ by $(B,I,\Theta)$ contains at most $\delta |R|$ unbridged particles.
\item No complex contour $C$ of $\sigma$ meets any edge in $B\cup I\cup \Ewrap$. Formally, the edge-induced subgraphs $\TorusLattice[C]$ and $\TorusLattice[B\cup I\cup \Ewrap]$ do not share any vertices.
\item For each minimal bridged region $R$, $\Theta(v)$ must have the same value for every site $v \in R$ and this value $\Theta(v)$ must correspond to the orientation in $\sigma$ of some bridged particle in $R$.
\end{enumerate}
%We say a tuple $(B, I, \Theta)$ is a $\delta$-bridge system for a configuration if each bridged face $F$ of $\sigma$ contains at most $\delta |F|$ unbridged particles.
\end{definition}

\begin{definition}[Orientation of a Minimal Bridged Region]
\label{defn:regionorientation}
Given a $\delta$-bridge system $(B,I,\Theta)$ for a configuration $\sigma \in \OmegaAggrRho$. We can associate with each minimal bridged region $R$ of $I$ an orientation $y_R \in \{-1,0,1,\ldots,q-1\}$.

To determine $y_R$, we denote by $R^*$ the set of sites $v \in R$ with a path over $\TorusLattice$ using only sites of the same orientation in $\sigma$ as $v$ to a site incident to an edge in $bd(R)$. We note that $bd(R) \subseteq I$ and the edges $B \cup I \cup \Ewrap$ connect the components of $bd(R)$ in $\TorusDual$. This implies that every vertex in $R^*$ must have the same orientation in $\sigma$, as any contour $C$ between regions of differing orientations in $R^*$ must intersect $B \cup I \cup \Ewrap$, implying that $C$ also must be included in the set $I$, allowing us to subdivide $R$, contradicting its minimality.
The orientation $y_R$ of $R$ is thus defined to be the common orientation of the sites of $R^*$.
\end{definition}
Thus, for each minimal bridged region $R$ with orientation $y_R$, we must have $\Theta(v) = y_R$ for all $v \in R$.
The proofs of the Lemmas will be given in the long version of the paper. %section~Appendix~\ref{apx:aggregationproofs}.

Our next step is to associate with each $\sigma \in \OmegaAggrRho$ a $\delta$-bridge system.
%------------------
% -- CONSTRUCTION OF BRIDGE SYSTEM
\begin{lemma}
\label{lem:aggr_bridge_system_construction}
For each $\sigma \in \OmegaAggrRho$ and $\delta \in (0,1)$, we can construct a $\delta$-bridge system $\mathcal{B}_\delta(\sigma) = (B_\delta(\sigma), I_\delta(\sigma), \Theta_\delta(\sigma))$, which is a $\delta$-bridge system of $\sigma$.
\end{lemma}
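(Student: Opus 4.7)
The construction is iterative, following the spirit of the bridging argument in \cite{cannon_local_2019} but generalized to the multicolor, complex-contour setting with the extra special orientation $-1$ for vacant sites. Start with $I := \emptyset$ and $B := \emptyset$, so that the only minimal bridged region is all of $V(\TorusLattice)$ and every particle is unbridged. At each step, if there is a minimal bridged region $R$ containing strictly more than $\delta |R|$ unbridged particles, select a complex contour $C$ of $\sigma$ lying in the interior of $R$ together with a shortest bridge path $P \subseteq E(\TorusDual) \setminus (I \cup B)$ connecting $C$ to the current graph $B \cup I \cup \Ewrap$, and update $I \leftarrow I \cup C$ and $B \leftarrow B \cup P$. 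Repeat until every minimal bridged region has at most a $\delta$ fraction of unbridged sites.

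The main technical content is selecting $C$ and $P$ so that $|P| \le \tfrac{1-\delta}{2\delta}|C|$ at every step; summing then gives the global bound in axiom~3 of the bridge system. The plan is to pick $C$ to be a ``deepest'' unbridged contour in $R$, namely one that separates the maximum mass of unbridged same-orientation sites from $\partial R$. Because $R$ contains more than $\delta|R|$ unbridged sites of some orientation $y$, the complex contour enclosing (or surrounding) that unbridged subregion $U_y \subseteq R$ must, by the isoperimetric inequalities of Lemma~\ref{lem:isoperimetric_triangular_lower} and Lemma~\ref{lem:isoperimetric_triangular_upper}, have length $\Omega(\sqrt{|U_y|})$, while its distance from the current $B \cup I \cup \Ewrap$ is controlled by the diameter of $R$, which is itself $O(\sqrt{|R|})$. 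The depth-versus-diameter tradeoff, tuned by the $\delta$ threshold, lets us guarantee the desired ratio $|P|/|C|$; this is the step that requires the most care and is the main obstacle.

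Termination is straightforward: each iteration strictly decreases the number of unbridged particles, because every site in $R$ adjacent to an edge of the newly added contour $C$ becomes bridged (it lies on $C \subseteq I$), so the process halts in at most $n = \rho N$ steps. After termination, axioms~1 and 4 of a bridge system hold because $I$ is a union of complex contours of $\sigma$ (hence minimum degree at least $2$ in $\TorusDual$) and because $B$ together with $I$ and $\Ewrap$ remains connected by induction, each added bridge path meeting the existing structure at one endpoint and the new contour at the other, while $\Ewrap$ provides a common ``anchor'' that keeps the whole graph connected with min degree $\ge 2$.

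Finally, define $\Theta_\delta(\sigma)$ by Definition~\ref{defn:regionorientation}: on each final minimal bridged region $R$, set $\Theta_\delta(\sigma)(v) = y_R$ for every $v \in R$, where $y_R \in \{-1, 0, \ldots, q-1\}$ is the common orientation of the bridged sites of $R$. By the argument in Definition~\ref{defn:regionorientation}, $y_R$ is well defined precisely because no complex contour of $\sigma$ crosses $bd(R) \subseteq I$ without itself being in $I$, so the unbridged contours of $\sigma$ lie strictly inside minimal bridged regions and axiom~2 of a $\delta$-bridge system for $\sigma$ holds. Axiom~1 holds by the stopping criterion, axiom~3 holds by definition of $\Theta_\delta(\sigma)$, and the global ratio $|B_\delta(\sigma)| \le \tfrac{1-\delta}{2\delta}|I_\delta(\sigma)|$ holds by the per-step inequality established above, completing the construction.
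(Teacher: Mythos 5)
Your iterative framework (grow $I$ and $B$ from empty, stop when every minimal bridged region has at most a $\delta$ fraction of unbridged sites) is the right outer loop, and your treatment of $\Theta$ via Definition~\ref{defn:regionorientation} and of axioms~1, 2, and~4 is fine. The gap is in the step you yourself flag as ``the main obstacle'': you cannot in general bridge a \emph{single} complex contour $C$ with a shortest path $P$ and hope to obtain a per-step bound $|P| \le \frac{1-\delta}{2\delta}|C|$. A small contour---say a hexagon of length $6$ enclosing one stray particle---sitting far from the current $B \cup I \cup \Ewrap$ forces $|P|$ arbitrarily large while $|C|$ stays constant, so the per-contour ratio is unbounded. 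Choosing the ``deepest'' contour only makes this worse. The isoperimetric sketch also does not rescue it: the unbridged mass $U_y$ typically splits into many small scattered components, each surrounded by a tiny contour, so there is no single contour of length $\Omega(\sqrt{|U_y|})$ to bridge; and the diameter of a minimal bridged region $R$ is not $O(\sqrt{|R|})$ in general, since $R$ can be a long thin strip in $\TorusLattice$.

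The paper avoids all of this by amortizing over a \emph{column}: if $R$ has more than $\delta|R|$ unbridged sites, some column $R_x$ has more than $\delta|R_x|$ unbridged sites. For each already-bridged site in $R_x$ it adds the two dual edges into the next column to $B$ (at most $2(1-\delta)|R_x|$ bridge edges), and then adds to $I$ every complex contour touching those new edges. Each contiguous block $Y$ of unbridged sites along $R_x$ is surrounded by an unbridged contour $C_Y$ of length at least $4|Y|$, so at least $4\delta|R_x|$ edges enter $I$. The ratio $\frac{2(1-\delta)|R_x|}{4\delta|R_x|} = \frac{1-\delta}{2\delta}$ comes out exactly, per column and hence globally, with no isoperimetric input and no control on individual contour sizes. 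You need some version of this amortization; the contour-by-contour shortest-path approach cannot be made to work as stated.
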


\input{diagram_bridging}

\begin{proof}
We initialize $B$ and $I$ to be empty sets and 
%We build a $\delta$-bridge system $\mathcal{B}_\delta = (B,I,\Theta)$ as follows.
start by including all complex contours connected to an edge in $\Ewrap$ over $\TorusDual$ in $I$.
Take note that whenever any edge of a complex contour $C$ is included in $I$, we will always also add the entirety of $C$ in $I$.
The complex contours included in $I$ subdivides $V(\TorusLattice)$ into bridged regions.
%This bridges all maximal faces of $\sigma$.

%XXXCHECK: WE CAN DO THIS FOR ANY DELTA in (0,1)?

As long as there exists a minimal bridged region $R$ where more than $\delta|R|$ of its particles are unbridged, there must be a column of particles $R_x := (\{x\} \times \{1,2,\ldots,\sqrt{N}-1\}) \cap R$ (for some $x \in \{1,2,\ldots,\sqrt{N}-1\}$) containing more than $\delta|R_x|$ unbridged particles.

For each particle $(x,y)$ in $R_x$, if $(x,y)$ is already bridged, we add the dual edges corresponding to the edges to the column to the right (to $(x+1,y)$ and $(x+1,y+1)$) to $B$, if they are not already in $I$ or $B$.
Next, we will add to $I$ every complex contour that touches any of the newly added edges in $B$. Refer to Figure \ref{fig:bridging} for an example.
As for why this maintains the property that $|B| \leq \frac{1-\delta}{2\delta}|I|$, we observe that each contiguous block $Y$ of unbridged particles in $R_y$ must be surrounded by some complex contour $C_Y$ (meaning $C_X$ must be connected in $\TorusDual$), which we would add to $I$. This complex contour $C_Y$ cannot be currently bridged else the block $Y$ would already be bridged, which implies that $C_Y$ will not touch the boundary of the region, nor will it wrap around the edges of the torus. Thus, adding the edges of $C_Y$ to $I$ will add at least $4|Y|$ adges to $I$, so doing this for all continguous blocks $Y$ will add at least $4\delta|R_x|$ edges to $I$, while adding at most $2(1-\delta)|R_x|$ edges to $B$.
%This maintains the property that $|B| \leq \frac{1-\delta}{2\delta}|I|$.
In addition, as every edge added to $I$ will be adjacent to at least one other edge (in $I$, $B$ or $\Ewrap$) on each end, we maintain the property that the graph induced by $I \cup B \cup \Ewrap$ is connected and has no vertex of degree less than $2$.

As we can repeat this process as long as more than $\delta|R|$ particles in any minimal bridged region $R$ are unbridged and each repetition of this process can only cause more unbridged particles to become bridged, we will eventually attain a pair $(B,I)$ where every minimal bridged region $R$ will have at most $\delta|R|$ unbridged particles. Finally, we remove all edges from $B$ that are also in $I$ so that $B \cap I = \emptyset$. 

To construct $\Theta$, we use the minimal bridged regions obtained by partitioning $V(\TorusLattice)$ using $I$. Using the method described in Definition \ref{defn:regionorientation}, we assign each minimal bridged region $R$ an orientation $y_R$, which will be the orientation of every bridged particle in $R$. We thus set $\Theta(v) := y_R$ for each $v \in R$. Thus, $(B,I,\Theta)$ satisfies the requirements for a $\delta$-bridge system of $\sigma$.
\end{proof}
%
%------------------
Without reference to any specific configuration in $\OmegaAggrRho$, we use the connectedness requirement of bridge systems to compute an upper bound on the number of bridge systems that is exponential on $|I|$. This is important as the Peierls argument ``removes'' the heterogeneous edges in $I$, which gives an improvement in weight of a similar order of growth.
\begin{lemma}
\label{lem:num_bridge_systems}
The number of $\delta$-bridge systems $(B,I,\Theta)$ where $|I| = \ell$ is at most $7 \cdot 6^{2\sqrt{N}-1} \cdot (3(q+1))^{\frac{1+\delta}{2\delta}\ell}$.
%$8^{2\sqrt{N}-1}(4(q+1))^{\ell(1 + \frac{1-\delta}{2\delta})}$
\end{lemma}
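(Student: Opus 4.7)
The plan is to enumerate $\delta$-bridge systems $(B, I, \Theta)$ with $|I|=\ell$ by choosing the three components in sequence: the combined edge set $E^* := B \cup I \cup \Ewrap$ in $\TorusDual$, the partition of $E^*$ into the $B$, $I$, and $\Ewrap$ parts, and the orientation map $\Theta$.

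For the edge set $E^*$, I would exploit the fact that it is a connected subgraph of $\TorusDual$ with minimum degree $2$, containing the fixed set $\Ewrap$ of $2\sqrt{N}-1$ edges. Since $\TorusDual$ is $3$-regular (as the dual of the triangular lattice), connected subgraphs containing $\Ewrap$ with $k$ additional edges can be enumerated by a canonical DFS-style traversal that begins on $\Ewrap$ and records, at each step, which of the at-most-$3$ adjacent edges to visit next. Combined with an at-most-$2^{|B|+|I|}$ bookkeeping factor for which non-$\Ewrap$ edges are assigned to $B$ versus $I$, a careful execution bounds the number of edge-set configurations by $7 \cdot 6^{|\Ewrap|} \cdot 3^{|B|+|I|}$, with the factor of $6$ per wrap-edge absorbing both the local traversal direction and the local $B$-vs-$I$ label, and the constant $7$ absorbing initial startup choices.

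For the orientation $\Theta$, the key observation is that $I$ partitions $V(\TorusLattice)$ into minimal bridged regions, and on each such region $\Theta$ is constant with a value in $\{-1, 0, 1, \ldots, q-1\}$. Since deleting the $|I|$ primal edges dual to $I$ can increase the number of components of $\TorusLattice$ by at most $|I|$, there are at most $|I|+1 \leq \ell+1$ regions, which yields at most $(q+1)^{\ell+1}$ valid orientation assignments. Multiplying the edge-set and orientation bounds and invoking $|B|+|I| \leq \frac{1+\delta}{2\delta}\ell$ (which follows from $|B| \leq \frac{1-\delta}{2\delta}|I|$ in the bridge-system definition), the total count is at most $7 \cdot 6^{2\sqrt{N}-1} \cdot \left(3(q+1)\right)^{(1+\delta)\ell/(2\delta)}$ after absorbing the extra $(q+1)^{\ell+1}$ factor into the exponential term.

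The main technical obstacle is arranging the walk-based enumeration so that each off-$\Ewrap$ edge contributes a factor of exactly $3$, rather than the easier but weaker $4$ or $9$ that naive walk-counting yields, and so that the $B$-vs-$I$ partition factor combines cleanly with the orientation factor to form the base $3(q+1)$ per unit of $|B|+|I|$. This requires a canonical encoding in which each new edge is discovered exactly once, exploiting the $3$-regularity of $\TorusDual$ and the minimum-degree-$2$ condition on $E^*$ to avoid the double-counting that would arise from a generic Eulerian walk. Once the walk bound is established with the right base, the remaining bookkeeping to reach the stated constants is routine.
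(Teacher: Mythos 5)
Your DFS-based enumeration of the connected subgraph $B \cup I \cup \Ewrap$ and the region-count bound for $\Theta$ are both sound ideas, and the overall decomposition (edge set, $B$-vs-$I$ partition, orientation assignment) is a natural way to approach the count. But the crucial step — getting the per-edge cost down to $3(q+1)$ over $|B|+|I|$ — is exactly what you flag as ``the main technical obstacle,'' and you do not resolve it. That step is the substance of this lemma, not routine bookkeeping.

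Specifically, if you count the $B$-vs-$I$ partition with a separate $2^{|B|+|I|}$ factor and the orientation with a separate $(q+1)^{\ell+1}$ factor, the honest total is of order $7 \cdot 3^{|B\cup I \cup \Ewrap|}\cdot 2^{|\Ewrap|}\cdot 2^{|B|+|I|}\cdot (q+1)^{\ell+1} \leq 7\cdot 6^{|\Ewrap|}\cdot 6^{|B|+|I|}\cdot (q+1)^{\ell+1}$, which carries an extra $2^{|B|+|I|}$ over the target. Your intermediate claim of ``$7\cdot 6^{|\Ewrap|}\cdot 3^{|B|+|I|}$'' for the edge set \emph{plus} the $B$-vs-$I$ labeling is not consistent with the computation you describe: the factor $2^{|B|+|I|}$ for the labeling cannot be absorbed into a $6^{|\Ewrap|}$ term, since it scales with $|B|+|I|$, not $|\Ewrap|$. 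The paper avoids this entirely by never charging separately for the $B$-vs-$I$ split. For each edge of $B\cup I$ dual to a primal edge $\{u,v\}$, it records the difference $\Theta(u)-\Theta(v) \bmod (q+1)$, which lies in $\{0,\ldots,q\}$. By the definition of a bridge system, this difference is $0$ exactly when the edge is in $B$, so the single factor $(q+1)^{|B|+|I|}$ encodes both the partition and enough information to recover $\Theta$. That is the missing ingredient here; without it, your approach produces a bound with base $6(q+1)$ rather than $3(q+1)$, which is genuinely weaker and does not establish the stated inequality for all admissible $(|B|,|I|,\delta)$.
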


% -- PROOF BY DEFINITION. NO CONSTRUCTION
\begin{proof}
As the subgraph of $\TorusDual$ induced by $B \cup I \cup \Ewrap$ is connected, we can apply a depth-first search on this subgraph to identify all of these edges. To reconstruct $B \cup I \cup \Ewrap$, we start from a pre-determined vertex incident to an edge in $\Ewrap$. From this vertex, as $\TorusDual$ is a hexagonal lattice, there are three directions one can initially move in, which gives us $2^3-1 = 7$ possibilities, as either one, two or all three paths can be taken. From then on, each time we traverse an edge, as our induced subgraph has no vertex of degree less than $2$, there are $3$ possibilities on the next branch - either the path goes only left, only right, or branches both to the left and to the right. We keep track of each time the path branches (including on the initial vertex) in the same fashion as a depth-first search and each time we hit a dead end in the path we are currently following (which happens if the possibilities forward we have chosen only lead into edges that have already been traversed), we backtrack to the most recent branch and continue from there. In total, this gives us an upper bound of $7 \cdot 3^{|B \cup I \cup \Ewrap|}$ possibilities.

Next, we note that edges in $\Ewrap$ may also be part of $B$ or $I$ (but not both, as $B \cap I = \emptyset$). This means that if we want to determine the exact set of edges $B \cup I$,
for each edge in $\Ewrap$, we need to additionally determine if it is a part of $B \cup I$, which gives us $2^{|\Ewrap|}$ possibilities.

Finally, each edge in $B \cup I$ corresponds to an edge $\{u,v\}$ on $\TorusLattice$. Thus $\Theta$ can be reconstructed by encoding the difference between $\Theta(u)$ and $\Theta(v)$ across each edge, which can be done in most $(q+1)^{|B|+|I|}$ ways. As this difference is $0$ if and only if the edge is in $B$ rather than $I$, this also allows us to identify if the edge belongs to $B$ or $I$.
To wrap up, as $|B| \leq \frac{1-\delta}{2\delta}|I| = \frac{1-\delta}{2\delta}\ell$ and $|\Ewrap| = 2\sqrt{N} - 1$, the number of $\delta$-bridge systems is at most
%$$7 \cdot 3^{|B| + |I| + 2\sqrt{N} - 1} \cdot 2^{2\sqrt{N}-1} \cdot (q+1)^{|B|+|I|}$$
$7 \cdot 6^{2\sqrt{N}-1} \cdot (3(q+1))^{\frac{1+\delta}{2\delta}\ell}$.
\end{proof}

%\subsection{Peierls Argument - Bridging and Banking}

Assuming $\delta \in (0,\rho)$, we define $\OmegaAggrRho_\ell := \{\sigma \in \OmegaAggrRho : |I_\delta(\sigma)| = \ell\}$,
where $I_\delta(\sigma)$ is comes from the $\delta$-bridge system constructed for $\sigma$.
Also, let $\OmegaAggr{\leq \delta N}$ be the the set of configurations over $\TorusLattice$ where at least $(1-\delta)N$ sites have orientation $-1$ (this corresponds to empty sites in our model). Note that $\OmegaAggr{\leq \delta N} \not\subseteq \OmegaAggrRho$.
%The definitions of $f^1_\ell$ and $f^2$ are given in the full version of the paper.% Appendix (Definition~\ref{defn:aggr_peierlsmappings}).

%\begin{definition}[Constructing the mappings used by the Peierls argument]
%\label{defn:aggr_peierlsmappings}

\subparagraph*{Constructing the mappings used by the Peierls argument.} For the Peierls argument, we define two functions, $f^1_\ell : \OmegaAggrRho_\ell \to \OmegaAggr{\leq \delta N}$ and $f^2 : \OmegaAggr{\leq \delta N} \to \OmegaAggrRho$.
The function $f^1_\ell$ is used to erase the heterogeneous edges in $I$, creating a configuration of significantly higher weight, though not one with $\rho N$ particles. To fix this, a second function, $f^2$ is used to restore the number of particles back to $\rho N$. This way, $f^2 \circ f^1_\ell$ maps each $\sigma$ in $\OmegaAggrRho_\ell$ to a valid configuration with exactly $\rho N$ filled sites.
%For the Peierls argument, we define two functions, $f^1_\ell : \OmegaAggrRho_\ell \to \OmegaAggr{\leq \delta N}$ and $f^2 : \OmegaAggr{\leq \delta N} \to \OmegaAggrRho$, so that $f^2 \circ f^1$ maps each $\sigma$ in $\OmegaAggrRho_\ell$ to a valid configuration with exactly $\rho N$ filled sites.

%------------------
%For each $\ell \in \mathbb{N}$, we define the function $f^1_\ell : \OmegaAggrRho_\ell \to \OmegaAggr{\leq \delta N}$. 
We first define the function $f^1$ for each integer $\ell \geq 0$.
For $\sigma \in \OmegaAggrRho_\ell$, we consider its $\delta$-bridge system $\mathcal{B}_\delta(\sigma) = (B_\delta(\sigma), I_\delta(\sigma), \Theta_\delta(\sigma))$.
To define $f^1(\sigma)$, we look at each site $u \in \TorusLattice$, we consider its orientation $\sigma(u)$ in $\sigma$ and its orientation $\Theta(u)$ in the bridge system. Its orientation $f^1(\sigma)(u)$ in $f^1(\sigma)$ is then defined as
$$f^1(\sigma)(u) := (\sigma(u) - \Theta(u)) \Mod{(q+1)} - 1.$$
As $\sigma(u) = \Theta(u)$ for each bridged site $u \in \TorusLattice$, this has the effect of converting every bridged particle to orientation $-1$.
%--- Show inverse image size by noting that this transformation depends only on the bridge system
At most $\delta N$ sites are unbridged by $\mathcal{B}_\delta(\sigma)$, which implies that $f^1(\sigma) \in \OmegaAggr{\leq \delta N}$.

%------------------
We now define the function $f^2$ from $\OmegaAggr{\leq \delta N}$ to $\OmegaAggrRho$, with a ``banking'' argument, to restore the number of filled sites to exactly $\rho N$.
For any integer $m \geq 1$ and central location $v \in V(\TorusLattice)$, we can construct a region $S_m(v) = \{s_1(v),s_2(v),\ldots,s_m(v)\} \subseteq V(\TorusLattice)$ site by site, by selecting $s_1(v) = v$, then building a spiral outward from $v$ with $s_2(v),\ldots,s_m(v)$. Note that for any $m \leq \rho N$, $S_m(v)$ will have the lowest boundary length out of any region of $m$ particles.

Fix $\sigma \in \OmegaAggr{\leq \delta N}$ and fix $m := \ceil{\frac{\rho}{1-\delta}N}$. We pick a central location $u \in V(\TorusLattice)$ such that the region $S_m(u)$ has less than $\delta m$ sites that are not of orientation $-1$. This is possible because if we consider the region $S_m(v)$ for each $v \in V(\TorusLattice)$, every site in $V(\TorusLattice)$ will be in exactly $m$ of these regions and as there are less than $\delta N$ sites not of orientation $-1$, the total number of sites not of orientation $-1$ across all of these regions is less than $\delta nm$, implying that there is a region $S_m(v)$ with less than $\delta m$ sites not of orientation $-1$.

Constructing the function $f^2 : \OmegaAggr{\leq \delta N} \to \OmegaAggrRho$ requires us to restore the number of sites not of orientation $-1$ to exactly $\rho N$.
To do this, we apply the map $i \mapsto (i+1) \Mod{(q+1)} - 1$ to the orientations of the sites $s_1(u), s_2(u),\ldots s_m(u)$ in sequence.
Applying this map to the orientation of a site increases the number of sites of orientation $-1$ by $1$ if the site was not originally of orientation $-1$ in $\sigma$ and decreases the number of sites of orientation $-1$ by $1$ if it was.
We stop when the number of sites of orientation $-1$ is exactly $\rho N$, giving us a configuration $f^2(\sigma) \in \OmegaAggrRho$.
%This maps a site to orientation $-1$ if and only if it was not in orientation $-1$ before.
As we assumed that $\delta < \rho$ and there are $k \leq \delta m$ of these sites that are originally not of orientation $-1$ in $\sigma$, we need to go through at most $2k + (\rho N - k) \leq \delta m + \rho N$ sites in this sequence before achiving our goal. This is always possible as $\delta m + \rho N \leq m$, which is true if and only if $m \geq \frac{\rho}{1-\delta}N$.
%\inlinecomm{XXXXCHECK: WHY DELTA*M AND NOT 2DELTA*M ABOVE. SHOW TIGHTER ARGUMENT}
%\end{definition}

As the bridge system with just a polynomial amount of additional information can be used to reconstruct $\sigma$ from $f^2 \circ f^1_\ell$, our upper bound on the number of bridge systems can be used to upper bound $|(f^2 \circ f^1_\ell)^{-1}(\tau)|$ for any $\tau$ in the image of $f^2 \circ f^1_\ell$. This allows us to prove the following Lemma:

%------------------
\begin{lemma}
\label{lemma:alphaaggregation}
%There exists a $\zeta < 1$ such that $\pi(\OmegaAggrRho_\ell) \leq \zeta^{-\ell}$ for all $\ell \geq VALUE$.
Fix $\rho < \frac{1}{3}$, any $\alpha > 1$, $\delta \in (0, \min\{\rho, 1-\frac{1}{\alpha^2}\})$ and $\lambda > \lambda_0(q,\rho,\alpha,\delta)$ sufficiently large, where:
$$\lambda_0(q,\rho,\alpha,\delta) := \left((3(q+1))^{\alpha\frac{1+\delta}{2\delta}} 36^{\frac{1}{4\sqrt{3\rho}}}\right)^{\frac{1}{\alpha - \frac{1}{\sqrt{1-\delta}}}}.$$
Denote by $\OmegaAggrRho_{\geq \alpha \cdot bd_{\min}(\rho N)}$ the set of configurations $\sigma$ where $|I_\delta(\sigma)| \geq \alpha \cdot bd_{\min}(\rho N)$, where $bd_{\min}(k)$ is the minimum possible boundary length of a region of $k \in \mathbb{N}$ particles.
Then there exists a constant $\zeta = \zeta(q,\rho,\alpha,\delta,\lambda) < 1$ such that $\piFinite(\OmegaAggrRho_{\geq \alpha \cdot bd_{\min}(\rho N)}) < \zeta^{\sqrt{N}}$ for all sufficiently large values of $N$.
%There exists a $\zeta < 1$ such that $\sum_{\\pi(\OmegaAggrRho_\ell) \leq \zeta^{-\ell}$ for all $\ell \geq VALUE$.
\end{lemma}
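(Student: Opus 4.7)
The plan is to run a Peierls-type argument using the pair $(f^1_\ell, f^2)$ already constructed above: for each $\sigma \in \OmegaAggrRho_\ell$ with $\ell \geq \alpha\,bd_{\min}(\rho N)$, set $\tau := f^2(f^1_\ell(\sigma)) \in \OmegaAggrRho$ and compare stationary weights. A standard manipulation using $\sum_{\tau}\piFinite(\tau) \leq 1$ yields
\begin{equation*}
\piFinite\!\left(\OmegaAggrRho_{\geq \alpha\,bd_{\min}(\rho N)}\right) \ \leq\ \sum_{\ell \geq \alpha\,bd_{\min}(\rho N)} \ \max_{\tau}\,|(f^2\circ f^1_\ell)^{-1}(\tau)| \cdot \max_{\sigma}\frac{\wFinite(\sigma)}{\wFinite(f^2(f^1_\ell(\sigma)))},
\end{equation*}
reducing the task to bounding (a) the preimage counts and (b) the worst-case weight ratios uniformly in $\sigma$.

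For (a), inverting $f^2$ costs only the center $u \in V(\TorusLattice)$ (at most $N$ choices) and the stopping index $k' \leq m := \lceil \rho N/(1-\delta)\rceil$ (at most $m$ choices), while inverting $f^1_\ell$ needs the bridge system $(B,I,\Theta)$, bounded by Lemma~\ref{lem:num_bridge_systems}. For (b), viewed as the $(q+1)$-state Potts reformulation on $\TorusLattice$, $a+h$ simply counts edges whose two endpoints carry different labels in $\{-1,0,\ldots,q-1\}$. The map $f^1_\ell$ shifts each minimal bridged region so that every bridged site becomes $-1$, and on every edge of $I$ both endpoints are bridged; moreover intra-region edges see equal shifts, and any heterogeneous edge not in $I$ lies strictly inside a single region (since contours outside $I$ cannot meet $B\cup I\cup\Ewrap$). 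Hence $f^1_\ell$ converts exactly the $\ell$ edges of $I$ from heterogeneous to homogeneous and preserves all others, so $a+h$ drops by exactly $\ell$. The map $f^2$ then applies a single $+1$ cyclic shift to the spiral prefix $S_{k'}(u) \subseteq S_m(u)$; since the shift is constant on $S_{k'}(u)$, only edges crossing $\partial S_{k'}(u)$ can change status, so $a+h$ increases by at most $|\partial S_{k'}(u)| \leq bd_{\min}(m) + O(1)$. Combining, $\wFinite(\sigma)/\wFinite(f^2(f^1_\ell(\sigma))) \leq \lambda^{-(\ell - bd_{\min}(m)-O(1))}$.

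Plugging (a) and (b) into the display above gives
\begin{equation*}
\piFinite(\OmegaAggrRho_{\geq \alpha L}) \ \leq\ \mathrm{poly}(N)\cdot 6^{2\sqrt{N}}\cdot \lambda^{bd_{\min}(m)} \sum_{\ell \geq \alpha L} \left(\frac{(3(q+1))^{(1+\delta)/(2\delta)}}{\lambda}\right)^{\ell},
\end{equation*}
where $L:=bd_{\min}(\rho N)$. The hypothesis $\lambda > \lambda_0$ certainly implies $\lambda > (3(q+1))^{(1+\delta)/(2\delta)}$, so the geometric series converges and is dominated by its first term at $\ell = \lceil\alpha L\rceil$. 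Using Lemma~\ref{lem:bdmin_growth} (handling the regimes $\rho/(1-\delta) < 1/3$ and $\rho/(1-\delta) \geq 1/3$ separately, and noting that in the latter case $3\rho \geq 1-\delta$ forces the cap $4\sqrt{N}$ still to satisfy the common bound), one obtains $bd_{\min}(m) \leq L/\sqrt{1-\delta} + O(1)$. Substituting $L = 4\sqrt{3\rho}\,\sqrt{N}\,(1+o(1))$ and taking $\sqrt{N}$-th roots reduces the target inequality $\piFinite(\OmegaAggrRho_{\geq \alpha L}) < \zeta^{\sqrt{N}}$ to
\begin{equation*}
\lambda^{\,\alpha - 1/\sqrt{1-\delta}} \ >\ (3(q+1))^{\alpha(1+\delta)/(2\delta)}\cdot 36^{\,1/(4\sqrt{3\rho})},
\end{equation*}
which is exactly the hypothesis $\lambda > \lambda_0$; the constraint $\delta < 1 - 1/\alpha^2$ makes the exponent $\alpha - 1/\sqrt{1-\delta}$ strictly positive, yielding a valid $\zeta = \zeta(q,\rho,\alpha,\delta,\lambda) < 1$. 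The main technical obstacle I anticipate is verifying the uniform bound $|\partial S_{k'}(u)| \leq bd_{\min}(m) + O(1)$ for \emph{every} spiral prefix $k' \leq m$ rather than just for $k' = m$; this needs a direct monotonicity argument about how the boundary of a spiral grows, and replaces the ``single perimeter'' bookkeeping available in the connected SOPS setting.
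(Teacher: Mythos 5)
Your proposal mirrors the paper's proof essentially step for step: the same Peierls argument via $f^2 \circ f^1_\ell$, the same preimage count using Lemma~\ref{lem:num_bridge_systems} plus the polynomial overhead from the center $u$ and stopping index $k'$, the same observation that $f^1_\ell$ decreases $a+h$ by exactly $\ell$, the same geometric series, and the same final algebra reducing to $\lambda > \lambda_0$. The obstacle you flag at the end is not a genuine gap but is worth being precise about: since every spiral prefix $S_{k'}(u)$ is a hexagonal disk, Lemmas~\ref{lem:isoperimetric_triangular_lower} and \ref{lem:isoperimetric_triangular_upper} give $|\partial S_{k'}(u)| = 4\sqrt{3k'} + O(1)$, which is monotone in $k'$, so the uniform bound $|\partial S_{k'}(u)| \le 4\sqrt{3m} + O(1) = \frac{4\sqrt{3\rho}}{\sqrt{1-\delta}}\sqrt{N} + O_N(1)$ holds for all $k' \le m$, and this is precisely the quantity that enters the paper's final exponential estimate — irrespective of whether $m$ is large enough that $bd_{\min}(m)$ saturates at $4\sqrt{N}$, so no case split on the value of $\rho/(1-\delta)$ is actually required at that step.
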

%-- SHOW ENERGY GAIN AND PEIERLS ARGUMENT, ADD NOTE THAT ZETA IS INDEPENDENT OF ELL
\begin{proof}
For this proof, we define the weight $w(\sigma)$ of a configuration $\sigma \in \bigcup_{k \geq 0}\OmegaAggr{k}$ as
$$w(\sigma) := \lambda^{-a(\sigma) - h(\sigma)}$$
where $a(\sigma)$ represents the number of edges between filled and unfilled sites in $\TorusLattice$. Note that for $\sigma \in \OmegaAggrRho$, we have $\piFinite(\sigma) = w(\sigma)/\ZFinite$.
We apply a Peierls argument using $f^2 \circ f^1: \OmegaAggrRho_\ell \to \OmegaAggrRho$, using the functions $f^1$ and $f^2$ defined before.
%
% -- $f^1$: fan in
As $\sigma \in \OmegaAggrRho_\ell$ can be reconstructed from $f^1(\sigma)$ using its bridge system and there at most $7 \cdot 6^{2\sqrt{N}-1} \cdot (3(q+1))^{\frac{1+\delta}{2\delta}\ell}$ bridge systems of length $\ell$, for any $\tau \in f^1(\OmegaAggrRho_\ell)$, we have $|(f^1)^{-1}(\tau)| \leq 7 \cdot 6^{2\sqrt{N}-1} \cdot (3(q+1))^{\frac{1+\delta}{2\delta}\ell}$.
%
% -- $f^1$: energy
As going from $\sigma$ to $f^1(\sigma)$ removes exactly the contours in $I_\delta(\sigma)$, we have $w(\sigma) = \lambda^{-\ell}w(f^1(\sigma))$. % CHECK DIRECTION
%
% -- $f^2$: fan in
$\sigma \in \OmegaAggr{\leq \delta N}$ can be reconstructed from $f^2(\sigma)$ using the sequence $S_m(u), u \in V(\TorusLattice)$ and the stopping point for the application of the map, which can be easily reversed to reconstruct $\sigma$. Thus, for any $\tau \in f^2(\OmegaAggr{\leq \delta N})$, we have $|(f^2)^{-1}(\tau)| \leq |V(\TorusLattice)| \cdot m = N\ceil{\frac{\rho}{1-\delta}}$.
%
% -- $f^2$: energy
In the function $f^2$, the map $i \mapsto (i+1) \Mod{(q+1)} - 1$, when applied to two adjacent sites $u$ and $v$, will not change the contribution by the edge $\{u,v\}$ to the weight of the configuration. As this map is applied to a region of at most $m := \ceil{\frac{\rho}{1-\delta}N}$ particles of minimal boundary length, we must have $w(\sigma) \leq \lambda^{bd_{\min}(m)}w(f^2(\sigma))$.
%, where $p_{\min}(m) = p_{\min}(\ceil{\frac{\rho}{1-\delta}N})$ is the lowest possible perimeter for a region of $m$ sites.
%
%$$\pi(f^2\circ f^1(\sigma)) \geq \pi(f^1(\sigma)) \geq \pi_\delta(\sigma)$$
%
Thus we have:
\begin{align*}
w(\OmegaAggrRho_\ell)
= \sum_{\sigma \in \OmegaAggrRho_\ell} \pi_\delta(\sigma) 
&\leq \sum_{\sigma \in \OmegaAggrRho_\ell} \lambda^{bd_{\min}(\ceil{\frac{\rho}{1-\delta}N}) - \ell} \pi(f^2 \circ f^1(\sigma)) \\
&= \sum_{\tau \in \OmegaAggrRho} \sum_{\sigma \in (f^2\circ f^1)^{-1}(\tau)} \lambda^{bd_{\min}(\ceil{\frac{\rho}{1-\delta}N}) - \ell} \pi(\tau) \\
&= \sum_{\tau \in \OmegaAggrRho} |(f^2\circ f^1)^{-1}(\tau)| \lambda^{bd_{\min}(\ceil{\frac{\rho}{1-\delta}N}) - \ell} \pi(\tau) \\
&\leq 7 \cdot 6^{2\sqrt{N}-1} \cdot (3(q+1))^{\frac{1+\delta}{2\delta}\ell} \cdot N\ceil{\frac{\rho}{1-\delta}} \cdot \lambda^{bd_{\min}(\ceil{\frac{\rho}{1-\delta}N}) - \ell} \ZFinite
\end{align*}

Which implies that for any $\alpha > 1$ and a sufficiently small $\delta > 0$, denoting $C_{q,\delta} := (3(q+1))^{\frac{1+\delta}{2\delta}}$ and assuming $\lambda > C_{q,\delta}$, we have
% if $\lambda > X$, we have
\begin{align*}
&\piFinite(\OmegaAggrRho_{\ell \geq \alpha \cdot bd_{\min}(\rho N)}) 
 = \sum_{\ell \geq \alpha \cdot bd_{\min}(\rho N)} \piFinite(\OmegaAggrRho_\ell) \\
&\leq 7 \cdot 6^{2\sqrt{N}-1}
\cdot N\ceil{\frac{\rho}{1-\delta}}
\lambda^{bd_{\min}(\ceil{\frac{\rho}{1-\delta}N})}
\sum_{\ell \geq \alpha \cdot bd_{\min}(\rho N)}
\left(\frac{C_{q,\delta}}{\lambda}\right)^{\ell} \\
&= \mathrm{poly}(N)
36^{\sqrt{N}}
\lambda^{bd_{\min}(\ceil{\frac{\rho}{1-\delta}N})}
\left(\frac{C_{q,\delta}}{\lambda}\right)^{\alpha \cdot bd_{\min}(\rho N)} \\
%
%&= \mathrm{poly}(N)
%\exp \left\{\sqrt{N}\left(
%\log 36 +
%4\sqrt{\frac{3\rho}{1-\delta}} \log \lambda +
%4\alpha \sqrt{3\rho} (\log C_{q,\delta} - \log \lambda)
%\right) + O_N(1)\right\} \\
%
&= \mathrm{poly}(N)
\exp \left\{\sqrt{N}\left(
\log 36 +
\alpha \cdot 4\sqrt{3\rho} \log C_{q,\delta}
- (4\sqrt{3\rho}\log \lambda) (\alpha - \frac{1}{\sqrt{1-\delta}})
\right) + O_N(1)\right\}
\end{align*}

Where the last equality uses the fact that $bd_{\min}(cN) = 4\sqrt{3c}\sqrt{N} + O_N(1)$ for any constant $c \in (0,1/3)$ (Lemma~\ref{lem:bdmin_growth}).
Therefore, as $\delta < 1 - \frac{1}{\alpha^2}$ implies that $\alpha - \frac{1}{\sqrt{1-\delta}} > 0$,  as $\lambda$ had been set sufficiently large as stated in the Lemma,
%, specifically: $$\lambda > \left((3(q+1))^{\alpha\frac{1+\delta}{2\delta}} 36^{\frac{1}{4\sqrt{3\rho}}}\right)^{\frac{1}{\alpha - \frac{1}{\sqrt{1-\delta}}}}$$,
there exists a constant $\zeta = \zeta(q,\rho,\alpha,\zeta,\lambda) < 1$ such that $\piFinite(\OmegaAggrRho_{\ell \geq \alpha \cdot bd_{\min}(\rho N)}) < \zeta^{\sqrt{N}}$ for all sufficiently large values of $N$.
\end{proof}

%As the bridge system with just a polynomial amount of additional information can be used to reconstruct $\sigma$ from $f^2 \circ f^1_\ell$, our upper bound on the number of bridge systems can be used to upper bound $|(f^2 \circ f^1_\ell)^{-1}(\tau)|$ for any $\tau$ in the image of $f^2 \circ f^1_\ell$. This allows us to prove the following Lemma:

The use of Lemma~\ref{lemma:alphaaggregation} along with some results on the minimum possible boundary lengths of regions of $k$ particles allows us to show that there will exist a low perimeter region dominated by a single color, allowing us to prove Theorem~\ref{theorem:aggregationwithalignment}.

\begin{proof}[Proof of Theorem \ref{theorem:aggregationwithalignment}]
%  -- SHOW MAJORITY FROM PERIMETER
% \rho, q, \alpha, \delta as given in problem.
Without loss of generality we assume that $\delta < 1$.
We start by defining a few auxiliary variables. We set $\epsilon := 1 - \sqrt{1-\delta} \in (0,1)$ and $x := \frac{1}{\sqrt{1-\epsilon}} - 1 \in (0,1)$. We then define $\alpha' := \min\{\alpha, 1 + \frac{1}{3}x\} > 1$ and pick a sufficiently small $\delta' > 0$ such that
$$\delta' < \min\left\{1 - \sqrt{1-\delta}, 1-\frac{1}{\alpha'^2}, \rho\left(1-(1-\frac{1}{3}x)^2\right)\right\}.$$
Note that setting $\delta'$ this way also ensures the following properties, that we will need later, are true: $\delta' < \rho < 1/3$ and $\delta' \leq \delta$. The latter property is true because $1 - \delta' \geq \sqrt{1-\delta} \geq 1-\delta$.

If we pick $\sigma \in \OmegaAggrRho$ at random according to the distribution $\piFinite$ and consider its $\delta'$-bridge system $\mathcal{B}_{\delta'}(\sigma) = (B_{\delta'}(\sigma), I_{\delta'}(\sigma), \Theta_{\delta'}(\sigma))$,
by Lemma \ref{lemma:alphaaggregation}, there exists a $\lambda_0 = \lambda_0(q,\rho,\alpha',\delta')$ and a $\zeta = \zeta(q, \rho, \alpha', \delta', \lambda) < 1$ such that for all $\lambda > \lambda_0$, with probability at least $1 - \zeta^{\sqrt{N}}$, we have
%Denote by $\mathcal{R}$ the region enclosed by $I_\delta(\sigma)$ --- WHICH SIDE? COMPLETMENET??
$|I_{\delta'}(\sigma)| < \alpha \cdot bd_{\min}(\rho N)$.

%-- The only possibility!!!
The edge set $I_{\delta'}(\sigma)$ is a collection of complex contours, dividing $\TorusLattice$ into regions and the assignment $\Theta$ denotes the ``main'' color of each region.
Thus, we define the region $\mathcal{R} := \{v \in V(\TorusLattice) : \Theta_{\delta'}(\sigma)(v) \neq -1\}$. This region is of low perimeter because
$$|bd(\mathcal{R})| \leq |I_\delta(\sigma)| \leq \alpha' \cdot bd_{\min}(\rho N) \leq \alpha \cdot bd_{\min}(\rho N).$$ 
Also, as $V(\TorusLattice)\setminus R$ has at most $\delta'(N-|\mathcal{R}|) \leq \delta(N-|\mathcal{R}|)$ unbridged particles, the number of unfilled sites not in $\mathcal{R}$ is at most $\delta(N-|\mathcal{R}|)$. This gives us the second and third properties required by the Theorem. The remainder of the proof will show that the first property is true.

For $i \in \{-1,0,1,\ldots,q-1\}$, denote by $n_i$ the number of sites $v \in V(\TorusLattice)$ where $\Theta(v) = i$.
%, and $n_i^*$ the number of sites $v \in V(\TorusLattice)$ with orientation $i$ in $\sigma$.
As the fraction of unbridged particles in each minimal bridged region of $I_{\delta'}(\sigma)$ is at most $\delta$,
%the following must hold for any $i \in \{-1,0,1,\ldots,q-1\}$:
%$$(1 - \delta')n_i \leq n_i^* \leq n_i + (N-n_i)\delta'.$$ XXX DELETE IF UNUSED
denoting $n_{\geq 0} := \sum_{i=0}^{q-1} n_i$ and taking note that there are exactly $\rho N$ sites not of orientation $-1$ in $\sigma$, we have
$$(1 - \delta')n_{\geq 0} \leq \rho N \leq n_{\geq 0} + (N-n_{\geq 0})\delta',$$
which we can rewrite as
$$N\left(\frac{\rho-\delta}{1-\delta'}\right) \leq n_{\geq 0} \leq N\left(\frac{\rho}{1-\delta'}\right).$$

Denoting by $bd_{\min}(k)$ the minimum possible boundary length of a region of $k$ particles in $\TorusLattice$, we must have
\begin{align}
\label{eqn:bdmineqn}
\frac{1}{2}\sum_{i=-1}^{q-1} bd_{\min}(n_i) \leq |I_\delta(\sigma)| \leq \alpha' \cdot bd_{\min}(\rho N).
\end{align}
%@@@@@@@@@@@@@@@@@@@@@@@@@@@@@@@@@@@@@@@
As $\delta',\rho < 1/3$, we have $n_{\geq 0} \leq N\frac{\rho}{1-\delta'} < \frac{N}{2}$.
Thus by Lemma~\ref{lem:bdmin_growth}, as $n_{\geq 0} \geq N\frac{\rho - \delta'}{1 - \delta'}$, $\frac{\rho - \delta'}{1 - \delta'} < \rho < \frac{1}{3}$ and $bd_{\min}(n_{-1}) = bd_{\min}(N-n_{-1}) = bd_{\min}(n_{\geq 0})$, we have:
\begin{align*}
bd_{\min}(n_{-1}) \geq \begin{cases}
4\sqrt{3\frac{n_{\geq 0}}{N}N} + O_N(1) &\text{if $n_{\geq 0} < \frac{N}{3}$} \\
4\sqrt{N} + O_N(1) &\text{otherwise}
\end{cases}
\geq 4\sqrt{3N} \sqrt{\frac{\rho - \delta'}{1 - \delta'}} + O_N(1).
\end{align*}
Applying this to Equation \ref{eqn:bdmineqn} we get
$$\sum_{i=0}^{q-1} \sqrt{n_i} \leq 2\alpha' \sqrt{\rho N} - \sqrt{N\frac{\rho - \delta'}{1 - \delta'}} + O_N(1),$$
which we can rewrite as
$$\sum_{i=0}^{q-1} \sqrt{\frac{n_i}{\rho N}}  \leq \left(2\alpha' - \sqrt{\frac{1-\delta'/\rho}{1-\delta'}}\right) + O\left(\frac{1}{\sqrt{N}}\right).$$

To show that one of the orientation takes the majority, we make use of the following claim: Suppose $0 \leq y_i \leq 1$ for $i \in \{1,2,\ldots,k\}$, $\sum_{i=1}^k y_i = 1$ and $\sum_{i=1}^k \sqrt{y_i} \leq \frac{1}{\sqrt{1-\epsilon}}$. Then there exists an $i \in \{1,2,\ldots,k\}$ where $y_i \geq 1-\epsilon$.

To prove this claim, without loss of generality we assume that $\max_i y_i = y_1$, so $\sqrt{y_i} \leq \sqrt{y_1}$ for all $i$. Thus, $\sum_{i=1}^k \sqrt{y_i} = \sum_{i=1}^k y_i/\sqrt{y_i} \geq \sum_{i=1}^k y_i/\sqrt{y_1} = 1/\sqrt{y_1}$. This implies that $1/\sqrt{x_1} \leq 1/\sqrt{1-\epsilon}$, or $x_1 \geq 1 - \epsilon$.
Note that a more precise bound for the minimum value of the largest $x_i$ can be obtained with more work, which we will omit for the sake of keeping the proof simple.

To make use of this claim, we show that $2\alpha' - \sqrt{\frac{1-\delta'/\rho}{1-\delta'}} < \frac{1}{\sqrt{1-\epsilon}}$. Our specific choices of $x, \alpha'$ and $\delta'$ at the start of the proof were to achieve exactly this. $\delta' < \rho(1-(1-\frac{1}{3}x)^2)$ implies that $1 - \frac{\delta'}{\rho} > (1-\frac{1}{3}x)^2$, which gives us
$$\frac{\sqrt{1-\delta'/\rho}}{1-\delta'} > \sqrt{1-\delta'/\rho} > 1 - \frac{1}{3}x.$$
As $\alpha' \leq 1 + \frac{1}{3}x$ and $x := \frac{1}{1-\epsilon}-1$, we can conclude that
$$2\alpha' - \sqrt{\frac{1-\delta'/\rho}{1-\delta'}} < 2(1+\frac{1}{3}x) - (1-\frac{1}{3}x) = 1+x = \frac{1}{\sqrt{1-\epsilon}}.$$
Thus by the above claim, for all sufficiently large $N$, there must exist a $\theta \in \{0,1,\ldots,q-1\}$ where $n_\theta \geq (1-\epsilon) \rho N$.
As $|\mathcal{R}| = n_{\geq 0} \leq N\frac{\rho}{1-\delta'}$ and as we had set $\epsilon := 1 - \sqrt{1-\delta}$ and $\delta' < 1-\sqrt{1-\delta}$, we have
\begin{align*}
n_i=\theta \geq (1-\epsilon)\rho N 
&\geq (1-\epsilon)(1-\delta')|\mathcal{R}| \\
&\geq (1-\delta)|\mathcal{R}|
\end{align*}

\end{proof}

\section{Non-Alignment and Dispersion in General SOPS}

\newcommand{\OmegaAligned}{\widetilde{\Omega}^{\alpha,\delta}}

\begin{theorem}[Non-Alignment in General SOPS]
\label{thm:gen_non_alignment}
For any $q \geq 2$ and $\epsilon \in (0,\frac{1}{q})$, when $\lambda > 0$ satisfies:
\begin{align*}
\lambda^6 < \left( 1 - \epsilon \frac{q}{q-1} \right)^{\frac{q-1}{q}-\epsilon} \left(1+\epsilon\,q \right)^{\frac{1}{q}+\epsilon} = 1+ \frac{\epsilon^2 q^2}{q-1} + O(\epsilon^3) \,,
\end{align*}
the probability that a configuration sampled from the stationary distribution of the Markov chain algorithm $\piFinite$ is not $\epsilon$-\emph{non-aligned} is exponentially small,  for sufficiently large $n$.
\end{theorem}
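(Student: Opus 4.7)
The plan is to adapt the argument from Lemma~\ref{lem:non_alignment_P} (the connected non-alignment lemma) to the general-SOPS weight $\wFinite(\sigma) = \lambda^{-a(\sigma) - h(\sigma)}$, in which both the particle--vacuum boundary $a(\sigma)$ and the heterogeneous edges $h(\sigma)$ contribute to the energy. The exponent $6$ on $\lambda$ in the hypothesis, rather than the $3$ appearing in the connected setting, will come from bounding $a(\sigma) + h(\sigma)$ uniformly over all $6n$ edges incident to the $n$ particles, rather than first conditioning on the occupied set and handling only its internal edges.

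For $\delta \in (0,1)$ let $\OmegaAggr{\rho N,\delta}$ denote the set of configurations in which some orientation is held by exactly $\delta n$ of the $n = \rho N$ particles, and let $\tilde{S}^{\epsilon} := \bigcup_{\delta : |\delta - 1/q| \geq \epsilon} \OmegaAggr{\rho N,\delta}$. First I would obtain the crude counting bound $|\OmegaAggr{\rho N,\delta}| \leq \binom{N}{n}\, q\, \binom{n}{\delta n}\,(q-1)^{(1-\delta)n}$ by choosing in turn the $n$ occupied sites, the distinguished orientation, which particles carry it, and the orientations of the remaining particles. Restricting to $\lambda \geq 1$ (the case $\lambda < 1$ is essentially vacuous, as then maximally heterogeneous, nearly balanced configurations dominate the measure), every configuration satisfies $\wFinite(\sigma) \leq \lambda^{0} = 1$, so $\wFinite(\OmegaAggr{\rho N,\delta}) \leq |\OmegaAggr{\rho N,\delta}|$. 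For the matching lower bound on the partition function, the key observation is that every edge contributing to $a(\sigma) + h(\sigma)$ is incident to at least one particle, and each of the $n$ particles has exactly $6$ neighbors in $G_\Delta$, so $a(\sigma) + h(\sigma) \leq 6n$ and every configuration has weight at least $\lambda^{-6n}$. Since $|\OmegaAggrRho| = \binom{N}{n} q^n$, this gives $\ZFinite \geq \binom{N}{n} q^n \lambda^{-6n}$, and in the ratio the $\binom{N}{n}$ factors cancel to yield
\begin{equation*}
\piFinite(\OmegaAggr{\rho N,\delta}) \;\leq\; \frac{q \binom{n}{\delta n}(q-1)^{(1-\delta)n}}{q^n}\, \lambda^{6n}.
\end{equation*}

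Applying Stirling's approximation to $\binom{n}{\delta n}$ exactly as in the proof of Lemma~\ref{lem:non_alignment_P}, the right-hand side rewrites (up to a $1/\mathrm{poly}(n)$ factor) as $\bigl[\lambda^{6}/\bigl((q\delta)^{\delta}(q(1-\delta)/(q-1))^{1-\delta}\bigr)\bigr]^{n}$. The same monotonicity plus log-convexity analysis already carried out in Lemma~\ref{lem:non_alignment_P} shows that on the domain $\delta \in [0, q^{-1}-\epsilon] \cup [q^{-1}+\epsilon, 1]$ the denominator $(q\delta)^\delta(q(1-\delta)/(q-1))^{1-\delta}$ is minimized at $\delta = q^{-1}+\epsilon$, and that minimum equals precisely the right-hand side of the theorem's hypothesis. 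Under the hypothesis the bracketed base is therefore strictly less than $1$, giving exponential decay in $n$ for each $\delta$; a union bound over the at most $n+1$ admissible values $\delta = m/n$ with $|m/n - 1/q| \geq \epsilon$ absorbs the polynomial factor and yields $\piFinite(\tilde{S}^{\epsilon}) \leq \zeta^{n}$ for some $\zeta < 1$. The asymptotic expansion $1 + \epsilon^{2}q^{2}/(q-1) + O(\epsilon^{3})$ is identical to the one computed at the end of Lemma~\ref{lem:non_alignment_P} and can be quoted verbatim. Nothing in this argument is genuinely subtle; the only thing to get right is recognizing that $a$ and $h$ must be bounded together (not just $h$), which is what converts the exponent $3$ of the connected setting into the $6$ of the statement above.
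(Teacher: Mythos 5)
Your proposal matches the paper's own proof almost line for line: both bound the numerator by the raw configuration count, bound $\ZFinite$ from below via $a(\sigma)+h(\sigma)\leq 6n$, cancel the $\binom{N}{n}$ factors, and then invoke the Stirling and monotonicity analysis already done in Lemma~\ref{lem:non_alignment_P}. The only difference is that you explicitly call out the restriction to $\lambda\geq 1$ (which the paper uses implicitly when bounding individual weights by $1$ and $\lambda^{-6n}$), so this is essentially the paper's argument with the key adaptation---replacing the $3n$ edge bound of the connected case by $6n$---correctly identified.
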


\newcommand{\Snnaligned}{S^\epsilon}
\newcommand{\Snnadelta}{S^\epsilon_{\theta,\delta}}
\begin{proof}
%A configuration $\sigma \in \OmegaAggrRho$ can be defined by selecting $\rho N$ distinct sites of $\TorusLattice$ to place particles, then giving each of these particles one of $q$ orientations. 

Denote by $\Snnaligned$ the set of configurations of $\OmegaAggrRho$ that are not $\epsilon$-non-aligned.
For any configuration in $\Snnaligned$, there is a direction $\theta$ where the fraction $\delta$ of particles along orientation $\theta$ is such that $|\delta - 1/q| \geq \epsilon$. Let $\Snnadelta$ denote the set of configurations where the fraction of particles along some direction is exactly $\delta$.

By counting the number of configurations in $\Snnadelta$, similar to the proof of Lemma~\ref{lem:non_alignment_P} but with a lowest possible weight of a configuration being $\lambda^{-6n}$, we have
\begin{align*}
\piFinite{P}(\Snnaligned) = \frac{\wFinite(\Snnaligned)}{\wFinite(\OmegaAggrRho)} < \frac{{N \choose n}\cdot q {n \choose \delta n} (q-1)^{(1-\delta)\,n}}{{N \choose n} \cdot q^n\,\lambda^{-6n}}\,.
\end{align*}
And similarly with an equality obtained from the Stirling's approximation, we have
\begin{align*}
 \piFinite(\Snnaligned)
&\leq \frac{q}{\sqrt{2\pi n\delta(1-\delta)}} \left( \frac{\lambda^6}{\left( \frac{q(1-\delta)}{q-1}\right)^{(1-\delta)}(q\delta)^{\delta} } \right)^n\,.
\end{align*}
Thus, following the same argument as the proof of Lemma~\ref{lem:non_alignment_P}, as long as $\lambda^6 < 1 + \frac{\epsilon^2 q^2}{q-1} + O(\epsilon^3)$, there exists a $\zeta < 1$ such that $\piFinite(\Snnaligned) \leq \zeta^n$ for sufficiently large $n$.
\end{proof}

\begin{theorem}[Dispersion in General SOPS]
\label{thm:gen_dispersion}
Fix $\rho < \frac{1}{3}$ and assume that there will always be exactly $\rho N$ filled sites on the lattice. For sufficiently small values of $\delta > 0$, there exists a $\lambda_s = \lambda_s(\rho,\delta)$ such that for all $\lambda < \lambda_s$, for any constant number of orientations $q$ and for any $\alpha > 1$, with probability $\zeta_s^N$ for some $\zeta_s = \zeta_s(\rho,\delta) < 1$,
there exists no region $\mathcal{R} \subseteq V(\TorusLattice)$ that simultaneously satisfies the three following properties:
%described in Theorem \ref{theorem:aggregationwithalignment}.
\begin{enumerate}
\item The number of unfilled sites in $\mathcal{R}$ is at most $\delta|\mathcal{R}|$.
\item The number of filled sites not in $\mathcal{R}$ is at most $\delta (N-|\mathcal{R}|)$
\item The boundary length of $\mathcal{R}$ is at most $\alpha \cdot bd_{\min}(\rho N)$.
\end{enumerate}
\end{theorem}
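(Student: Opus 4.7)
The plan is to apply a Peierls-style union bound that mirrors the aggregation proof of Lemma~\ref{lemma:alphaaggregation} but with the roles of large and small $a+h$ reversed: since $\lambda<1$, the weight $\wFinite(\sigma)=\lambda^{-a(\sigma)-h(\sigma)}$ is strictly increasing in $a+h$, so aggregated configurations carry exponentially less weight than typical dispersed ones despite being combinatorially fewer. Let $\OmegaAligned \subseteq \OmegaAggrRho$ denote the set of configurations admitting a region $\mathcal{R}$ satisfying all three listed properties; the goal is to show $\piFinite(\OmegaAligned) \le \zeta_s^N$.

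First I would enumerate the candidate regions and bad fillings. Each $bd(\mathcal{R})$ is a union of closed complex contours in $\TorusDual$ of total length at most $\alpha\cdot bd_{\min}(\rho N)=O(\sqrt{N})$, so by the depth-first enumeration used in Lemma~\ref{lem:num_bridge_systems} (with an extra $2^{|\Ewrap|}$ factor for wrap-around edges), the number of such $\mathcal{R}$ is at most $c_1^{\sqrt{N}}$ for a constant $c_1=c_1(\alpha,\rho)$. For each fixed $\mathcal{R}$ of size $m$, which properties 1 and 2 force into $[(\rho-\delta)N/(1-\delta),\,\rho N/(1-\delta)]$, the filled set of $\sigma$ coincides with $\mathcal{R}$ except on at most $\delta N$ ``defect'' sites, so a standard binary-entropy estimate on $\sum_{k\le\delta m}\binom{m}{k}\binom{N-m}{\rho N-m+k}$ bounds the number of admissible fillings by $2^{N g(\delta,\rho)}\mathrm{poly}(N)$ with $g(\delta,\rho)\to 0$ as $\delta\to 0$; orientations then contribute a further factor of $q^{\rho N}$.

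Each bad configuration satisfies $a(\sigma)\le \alpha\cdot bd_{\min}(\rho N)+6\delta N$ by conditions 1--2 and $h(\sigma)\le 3\rho N$ trivially, so $\wFinite(\sigma)\le \lambda^{-\alpha\cdot bd_{\min}(\rho N)-(6\delta+3\rho)N}$. For the denominator, interpreting $\ZFinite/|\OmegaAggrRho|$ as an expectation over the uniform distribution on $\OmegaAggrRho$ and applying Jensen's inequality to the convex map $x\mapsto\lambda^{-x}$ (valid for $\lambda\in(0,1)$) gives
\[
\ZFinite \;\ge\; |\OmegaAggrRho|\cdot \lambda^{-\mathbb{E}_{\mathrm{unif}}[a+h]} \;=\; \binom{N}{\rho N}q^{\rho N}\,\lambda^{-c_\rho N-O(1)},
\]
where a per-edge expectation against a uniform $\rho N$-subset yields $c_\rho:=3[2\rho(1-\rho)+\rho^2(1-1/q)]$. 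Combining all pieces, letting the $q^{\rho N}$ factors cancel, and using $\binom{N}{\rho N}\ge 2^{NH(\rho)}/\mathrm{poly}(N)$ for the binary entropy $H$,
\[
\piFinite(\OmegaAligned) \;\le\; c_1^{\sqrt{N}}\mathrm{poly}(N)\cdot 2^{-N(H(\rho)-g(\delta,\rho))}\cdot \lambda^{(c_\rho-3\rho-6\delta)N-O(\sqrt{N})}.
\]

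A short computation gives $c_\rho-3\rho=3\rho(1-\rho(1+1/q))$, and for $\rho<1/3$ and $q\ge 2$ one has $\rho(1+1/q)\le 1/2$, so $c_\rho-3\rho-6\delta\ge 3\rho/2-6\delta>0$ whenever $\delta<\rho/4$. Taking $\delta$ small enough to additionally guarantee $H(\rho)>g(\delta,\rho)$ and then any $\lambda_s(\rho,\delta)\in(0,1)$ makes both exponential factors strictly less than $1$, yielding $\piFinite(\OmegaAligned)\le \zeta_s^N$ with $\zeta_s<1$ depending only on $\rho,\delta,\lambda$ and uniform in $q\ge 2$ and $\alpha>1$. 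The main technical obstacle is the bookkeeping around the two simultaneous exponential gains: one must verify that no hidden $q$- or $\alpha$-dependence leaks into the subexponential $c_1^{\sqrt{N}}$ prefactor in a way that would force $\lambda_s$ to vanish, and that the Jensen step produces exactly the constant $c_\rho$ so that the strict inequality $\rho(1+1/q)<1$ (implied by $\rho<1/3$ for $q\ge 2$) is precisely what drives the exponent on $\lambda$ to be positive.
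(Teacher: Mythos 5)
Your proposal is correct and takes a genuinely different route from the paper's. The paper's argument is purely combinatorial: it sandwiches every configuration's weight between $\lambda^{-6\rho N}$ and $1$ (a bound that, as written, is only in the right direction for $\lambda \geq 1$; for $\lambda < 1$ the inequalities flip and one instead pays $\lambda^{-6\rho N}$ in the numerator), bounds $|\OmegaAligned|$ by $\mathrm{poly}(N)\cdot N^{O(\sqrt N)}(e/\delta)^{\delta N}q^{\rho N}$ against $|\OmegaAggrRho|\geq (1/\rho)^{\rho N}q^{\rho N}$, and absorbs the weight distortion into the threshold $\lambda_s$. The mechanism that drives the decay is solely $(e/\delta)^\delta < (1/\rho)^\rho$ for small $\delta$; the resulting $\lambda_s$ can exceed $1$. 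Your argument instead opens up the weight factor: you control $a(\sigma)+h(\sigma)\leq \alpha\,bd_{\min}(\rho N)+6\delta N+3\rho N$ on the bad set (using monotonicity of $\lambda^{-x}$ for $\lambda<1$), bound $\ZFinite$ from below via Jensen over the uniform measure on $\OmegaAggrRho$ to get $\ZFinite\geq |\OmegaAggrRho|\,\lambda^{-c_\rho N-O(1)}$, and observe that $c_\rho-3\rho=3\rho\bigl(1-\rho(1+1/q)\bigr)\geq 3\rho/2>0$ uniformly for $\rho<1/3$, $q\geq 2$. This gives a second, genuinely new source of exponential decay in $\lambda$, which the paper's argument does not extract. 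The trade-off is that the convexity-based reasoning commits you to $\lambda_s<1$, which is a strictly narrower (though still valid) threshold than the paper obtains.

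Two small corrections. First, your enumeration of candidate regions as $c_1^{\sqrt N}$ via the depth-first argument of Lemma~\ref{lem:num_bridge_systems} is overly optimistic: $bd(\mathcal{R})$ for an arbitrary region $\mathcal{R}\subseteq V(\TorusLattice)$ is generally disconnected and need not touch $\Ewrap$, so the DFS requires an additional choice of starting vertex per contour component, of which there can be up to $\Theta(\sqrt N)$; the honest bound is $N^{O(\sqrt N)}=(3N)^{\alpha\,bd_{\min}(\rho N)}$, exactly as the paper uses. This is harmless since it is still $o(c^N)$ for every $c>1$, but as stated the $c_1^{\sqrt N}$ claim does not follow from that lemma. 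Second, Jensen applies to $x\mapsto\lambda^{-x}$ for \emph{all} $\lambda>0$, not just $\lambda\in(0,1)$; the restriction to $\lambda<1$ enters only through the direction of the bound $\wFinite(\sigma)\leq \lambda^{-(a+h)_{\max}}$ on $\OmegaAligned$.
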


As an additional remark, this same dispersion proof applies as long as the boundary length requirement for $\mathcal{R}$ (the third property) is $o(n/\log n)$.

\begin{proof}
Denote by $\OmegaAligned$ the set of configurations with such a region $\mathcal{R}$ that satisfies the three properties for a given $\delta$ and $\alpha$. We upper bound the number of configurations in $\OmegaAligned$.

Each configuration in $\OmegaAligned$ has a region $\mathcal{R} \subseteq V(\TorusLattice)$ of some boundary length $\ell \leq \alpha \cdot bd_{\min}(\rho N)$. As each edge of this boundary can be in one of $3N$ possible locations, a simple upper bound for the number of such regions is $(3N)^{\ell} \leq (3N)^{\alpha \cdot bd_{\min}(\rho N)}$. Within the region $\mathcal{R}$, there are at most $\delta|\mathcal{R}|$ unfilled sites, which gives us ${|\mathcal{R}| \choose \delta|\mathcal{R}|}$ ways to define the filled/unfilled status of the sites in $\mathcal{R}$. For a similar reason, there are ${N-|\mathcal{R}| \choose \delta(N-|\mathcal{R}|)}$ to define the filled/unfilled status of the sites outside of $\mathcal{R}$. Thus, the number of ways to define the filled/unfilled status of configurations in $\OmegaAligned$ is at most:
\begin{align*}
(3N)^{\alpha \cdot bd_{\min}(\rho N)} \cdot
{|\mathcal{R}| \choose \delta|\mathcal{R}|} {N-|\mathcal{R}| \choose \delta(N-|\mathcal{R}|)}
&\leq (3N)^{\alpha \cdot bd_{\min}(\rho N)} \cdot
\left(\frac{e}{\delta}\right)^{\delta|\mathcal{R}|}
\left(\frac{e}{\delta}\right)^{\delta(N-|\mathcal{R}|)} \\
&= (3N)^{\alpha 4\sqrt{3\rho}\sqrt{N} + O_N(1)} \cdot
\left(\frac{e}{\delta}\right)^{\delta N}
\end{align*}
Where the relation $bd_{\min}(\rho N) = 4\sqrt{3\rho}\sqrt{N} + O_N(1)$ comes from Lemma~\ref{lem:bdmin_growth}.
Similarly, the filled/unfilled status for the set $\OmegaAggrRho$ of configurations with exactly $\rho N$ filled particles can be defined in ${N \choose \rho N} \geq (1/\rho)^{\rho N}$ ways.

For any configuration $\sigma \in \OmegaAggrRho$, we note that $\lambda^{-6\rho N} \leq \wFinite(\sigma) \leq 1$.
For any constant $c > 1$, as $(3N)^{\alpha 4\sqrt{3\rho}\sqrt{N} + O_N(1)} = \left((3N)^{\alpha 4\sqrt{3\rho} + O(\frac{1}{\sqrt{N}})}\right)^{\sqrt{N}} = o\left((c^{\sqrt{N}})^{\sqrt{N}}\right) = o(c^N)$, we have:
\begin{align*}
\piFinite(\OmegaAligned)
&= \frac{\wFinite(\OmegaAligned)}{\wFinite(\OmegaAggrRho)}
= \frac{\sum_{\sigma \in \OmegaAligned} \wFinite(\sigma) }{\sum_{\sigma \in \OmegaAggrRho} \wFinite(\sigma)}
\leq \frac{|\OmegaAligned|}{|\OmegaAggrRho|\lambda^{-6\rho N}} \\
&\leq \frac{
q^{\rho N} N^{\alpha 4\sqrt{3\rho}\sqrt{N} + O_N(1)} \cdot \left(\frac{e}{\delta}\right)^{\delta N} \lambda^{6\rho N}}
{q^{\rho N} \left(\frac{1}{\rho}\right)^{\rho N}}
< \left( \frac{c \cdot (e/\delta)^\delta \lambda^{3\rho}}{(1/\rho)^\rho} \right)^N
\end{align*}
Hence, as long as $\frac{(e/\delta)^\delta \lambda^{3\rho}}{(1/\rho)^\rho} < 1$, there exists a $\zeta_s < 1$ such that $\piFinite(\OmegaAligned) < \zeta_s^N$ for all sufficiently large $N$. This is achieved as long as
\begin{align*}
\frac{(e/\delta)^\delta \lambda^{3\rho}}{(1/\rho)^\rho} < 1
\iff \lambda < \left( \frac{(1/\rho)^\rho}{(e/\delta)^\delta} \right)^{1/3\rho}.
%\iff \delta \log (e/\delta) + 3\rho \log \lambda + \rho \log \rho < 0
\end{align*}
This is possible with $\lambda > 1$ for any $\rho \in (0,1)$ as long as $\delta$ is sufficiently small such that $(e/\delta)^\delta < (1/\rho)^\rho$.
\end{proof}

\bibliography{refs}

%\appendix

%\input{appendix_definitions}
%\input{appendix_compression}
%\input{appendix_alignment}
%\input{appendix_aggregation}

\end{document}